%
\documentclass[runningheads]{llncs}
%

\newcommand{\version}{0}
\newcommand{\SLV}[2]{\ifthenelse{\equal{\version}{0}}{#1}{ \RED{#2}}}


%
\usepackage[T1]{fontenc}
%
\usepackage{graphicx}
\usepackage{subcaption} 
\usepackage{hyperref}
\usepackage{color}

\urlstyle{rm}

\AtBeginDocument{%
  }

\usepackage{tikz} 
\usetikzlibrary {graphs} 
\usepackage{wrapfig} 

\usepackage{xspace}
\def\eg{{\it e.g.}\xspace}
\def\ie{{\it i.e.}\xspace}
\def\etc{{\it etc.}\xspace}

\newcommand\todom[2][]{\todo[color=blue!20,#1]{#2}} 
\newcommand\todoc[2][]{\todo[color=pink!20,#1]{#2}} 

\newcommand{\RED}[1]{{\color{red}{#1}}}
\newcommand{\blue}[1]{{\color{blue}{#1}}}

\newcommand{\textdef}[1]{\emph{#1}} 
\newcommand{\st}{\mathrel{|}}


\newcommand{\Def}{\mathrel{:=}} 
\newcommand{\Bnf}{\mathrel{::=}} 
\newcommand{\St}{\mathrel{|}}

	\newcommand{\Bool}{\ensuremath{\mathsf{Bool}}} 
	\newcommand{\PosA}{P} 
	\newcommand{\PosB}{Q} 

	\newcommand{\NegA}{A} 
	\newcommand{\NegB}{B} 

	\newcommand{\TypeA}{T} 
	\newcommand{\TypeB}{S} 
	
	\newcommand{\Type}[1]{\mathsf{ty}(#1)} 

\newcommand{\Coin}[1]{\mathtt{coin}_{#1}}  
	\newcommand{\LetCalculus}{\mathcal L} 
	\newcommand{\FV}[2][]{\mathsf{FV}(#2)^{#1}} 					
	\newcommand{\App}[2]{#1 #2} 								
	\newcommand{\Abs}[2]{\lambda #1.#2} 						
	\newcommand{\Seq}[1]{(#1)} 								
	\newcommand{\Let}[3]{\mathtt{let}\, #1\, =\, #2\, \mathtt{in}\, #3} 	
	\newcommand{\LetG}[2]{(#1\, \mathtt{in}\, #2)} 					
	\newcommand{\StocA}{\ensuremath{\mathtt M}} 				

	\newcommand{\LetTerm}{\ensuremath{\ell}} 				
	\newcommand{\SD}{\ensuremath{\mathsf{SD}}} 			
	\newcommand{\VA}{\ensuremath{\mathsf{VA}}} 			
	\newcommand{\VE}{\ensuremath{\mathsf{VE}}} 			

	\newcommand{\VEF}{\ensuremath{\mathsf{VE}^{\mathsf F}}} 			
	\newcommand{\VEL}{\ensuremath{\mathsf{VE}^\LetCalculus}} 			


	\newcommand{\Red}[1][]{\xrightarrow{#1}}	
	\newcommand{\Swap}{\gamma} 
	\newcommand{\Mult}{\mu} 
	\newcommand{\Elimin}{\epsilon} 

	\newcommand{\RP}{\ensuremath{\mathbb{R}_{\geq 0}}} 
	\newcommand{\Pcs}[1]{\ensuremath{\mathcal{#1}}} 
	\newcommand{\Clique}[1]{\ensuremath{\mathsf{P}(#1)}} 
	\newcommand{\Web}[1]{\vert #1 \vert} 
	\newcommand{\Sem}[2][]{\llbracket #2\rrbracket^{#1}} 

	\newcommand{\MApp}[2]{#1\cdot #2} 			

	\newcommand{\Facts}[1]{\mathsf{Fs}(#1)}		
	\newcommand{\Fact}[1]{\mathsf{F}(#1)}			
	\newcommand{\FComp}[2]{#1 :: #2}				
	\newcommand{\FProd}{\odot} 					
	\newcommand{\BigFProd}{\bigodot}				
	\newcommand{\FSum}[2]{\textstyle\sum_{#1}(#2)} 	


	\newcommand{\CollA}{\mathcal V} 
	\newcommand{\CollB}{\mathcal W} 

	
	\newcommand{\LabelA}{v} 

	\newcommand{\FactA}{\phi} 
	\newcommand{\FactB}{\psi} 

	\newcommand{\FactSetA}{\Gamma} 
	\newcommand{\FactSetB}{\Delta} 
	\newcommand{\FactSetC}{\Xi} 

	\newcommand{\Fam}{\mathsf{Fam}} 
	\newcommand{\FactFun}[1]{\mathsf{Fun}(#1)} 
	\newcommand{\FactVar}[2][]{\mathsf{Var}^{#1}(#2)} 

	\newcommand{\Deg}{\mathsf{d}} 
	\newcommand{\Base}{\mathsf{b}} 

	\newcommand{\Val}[1]{\Web{#1}} 
	\newcommand{\ValA}{a} 
	\newcommand{\ValB}{b} 
	\newcommand{\ValC}{c} 
	\newcommand{\FValA}{\overline a} 
	\newcommand{\FValB}{\overline b} 
	\newcommand{\FValC}{\overline c} 
	\newcommand{\FPlus}{\uplus} 
	\newcommand{\Proj}[3][]{#2\vert_{#3}^{#1}} 

	\newcommand{\SetA}{S} 
	\newcommand{\SetB}{{T}} 
	\newcommand{\SetC}{{U}} 





\newcommand{\Size}{\mathsf{s}}



\newcommand{\Nat}{\ensuremath{\mathbb{N}}} 

\newcommand{\ScalA}{\rho} 
\newcommand{\ScalB}{\tau} 

\newcommand{\VecA}{\alpha} 

\newcommand{\MatA}{\phi} 
\newcommand{\MatB}{\psi} 
\newcommand{\MatC}{\xi} 

\newcommand{\MProd}[2]{{#1}{#2}} 
%
%
%
%
\newcommand{\True}{\ensuremath{\mathtt{t}}} 
\newcommand{\False}{\ensuremath{\mathtt{f}}} 

\renewenvironment{itemize}
{
	\begin{list}{\labelitemi}
		{\setlength{\itemsep}{0pt}
			\setlength{\topsep}{0pt}
			\setlength{\parsep}{0pt}
			\setlength{\partopsep}{0pt}
			\setlength{\leftmargin}{15pt}
			\setlength{\rightmargin}{0pt}
			\setlength{\itemindent}{0pt}
			\setlength{\labelsep}{5pt}
			\setlength{\labelwidth}{10pt}
	}}
	{
	\end{list}
}
\newcounter{numberone}
\renewenvironment{enumerate}
{
	\begin{list}{\arabic{numberone}.}
		{
			\usecounter{numberone}
			\setlength{\itemsep}{0pt}
			\setlength{\topsep}{0pt}
			\setlength{\parsep}{0pt}
			\setlength{\partopsep}{0pt}
			\setlength{\leftmargin}{15pt}
			\setlength{\rightmargin}{0pt}
			\setlength{\itemindent}{0pt}
			\setlength{\labelsep}{5pt}
			\setlength{\labelwidth}{15pt}
	}}
	{
	\end{list}
}

\makeatletter
\renewcommand\paragraph{\@startsection{paragraph}{4}{\z@}%
	{-10\p@ \@plus -2\p@ \@minus -4\p@}%
	{-0.5em \@plus -0.22em \@minus -0.1em}%
	{\normalfont\normalsize\itshape}}
\makeatother

\usepackage[disable,
textwidth=1.5cm,textsize=footnotesize]{todonotes} 

\usepackage[normalem]{ulem}
\usepackage{amsmath} 	
\usepackage{stmaryrd} 
\usepackage{amssymb} 	
\usepackage{cmll} 		
\usepackage{proof} 		
\usepackage{bussproofs} 	
\usepackage{listings} 	
\usepackage{lscape}		
	\usepackage{afterpage}
\usepackage{mathtools}	
\usepackage{xcolor}		
%

     
\newcommand\Myqed{}

\begin{document}

\title{Variable Elimination as Rewriting\\ 
in a Linear Lambda Calculus
}

\author{
	Thomas Ehrhard\inst{1}\orcidID{0000-0001-5231-5504}
	\and
	Claudia Faggian\inst{1}
	\and
	Michele Pagani\inst{2}\orcidID{0000-0001-6271-3557}
}
%
%
\institute{
	Université de Paris  Cité, CNRS, IRIF, F-75013, Paris France
	\email{\{ehrhard,faggian\}@irif.fr}
	\and
	École Normale Supérieure, LIP, F-69342, Lyon, France
	\email{michele.pagani@ens-lyon.fr}
}

\maketitle

\begin{abstract}
	Variable Elimination ($\VE$) is a classical  \emph{exact inference} algorithm for  probabilistic graphical models such as Bayesian Networks, 
	computing the marginal distribution of a subset of the
	random variables in the model.
	Our goal is to understand Variable Elimination as an algorithm acting  \emph{on programs}, here expressed  in an idealized probabilistic functional language---a linear simply-typed $\lambda$-calculus suffices for our purpose.
	Precisely, we express $\VE$ as \emph{a term rewriting process},  
	which transforms a global definition of a variable into a local
	definition, by swapping and nesting let-in expressions. 
	We exploit in an essential way linear types.

%
\keywords{
	Linear Logic \and
	Lambda Calculus \and
	Bayesian Inference \and
	Probabilistic Programming \and
	Denotational Semantics
}
\end{abstract}



\section{Introduction}

Probabilistic programming  languages  (PPLs) provide a rich and expressive framework for 
stochastic modeling and Bayesian reasoning. 
The crucial but  computationally hard task  is that of  inference, \ie computing explicitly  the
probability distribution which is implicitly specified by the probabilistic program.
Most  PPLs focus on continuous random variables---in this setting the inference engine typically implements \emph{approximate} inference algorithms based on sampling methods (such as importance sampling, Markov Chain Monte Carlo,  Gibbs sampling).  
However, 
 several domains of application  (\eg network verification, ranking and voting, 
 text or graph analysis) are naturally discrete, yielding to an increasing interest in the challenge  of 
 \emph{exact inference}   \cite{HoltzenBM20,GorinovaGSV22,ZaiserMO23,ObermeyerBJPCRG19,ZhouYTR20,GehrMV16,NarayananCRSZ16,Pfeffer07}. 
  A good   example is Dice \cite{HoltzenBM20},  
  a \emph{first-order} functional language 
 whose inference algorithm  exploits \emph{the structure of the program} in order to  factorise inference, making it possible to scale exact inference to   large distributions. A common ground  to most exact approaches  is to be  inspired by techniques for exact inference on discrete graphical models, which typically  exploit probabilistic independence as the key for compact representation and efficient inference.

Indeed, specialized formalisms do come with highly efficient  algorithms  for   \emph{exact inference}; a prominent example is that of  Bayesian networks, which enable    algorithms such as  Message Passing \cite{Pearl88} and  Variable Elimination \cite{ZhangP94}---to name two classical ones---and a variety of approaches for  exploiting local structure, such as  reducing inference to  Weighted Model Counting \cite{ChaviraD05,ChaviraD08}.   General-purpose programming language do provide a rich expressiveness, which allows in particular for the encoding of  Bayesian networks, however, the corresponding  algorithms are often lost when leaving  the realm of graphical models for PPLs, leaving an uncomfortable gap between the two worlds. Our goal is shedding light in this gray area, understanding  exact  inference as an algorithm acting  \emph{on programs}.


In pioneering work,  Koller et al.~\cite{KollerMP97} define a general purpose  functional language which not only is able to encode Bayesian networks (as well as other specialized formalisms),  but also comes with  an algorithm which mimics Variable Elimination (\VE{} for short)  by means of \emph{term transformation}.  \VE{} is arguably the simplest  algorithm for exact inference, which  is 
factorised  into smaller intermediate computations,  by eliminating  the irrelevant variables according to a specific order.  
 The limit in \cite{KollerMP97}    is  that unfortunately,  the algorithm there can only implement a specific elimination ordering (the one determined  by the lazy evaluation  implicit in the algorithm), which   might not be  the  most efficient: a different ordering might result in smaller intermediate factors. 
The general problem  to be able to deal with any possible ordering, hence producing any possible factorisation, is there left as an open challenge for further investigation.  The approach that is taken by the authors in a series of  subsequent papers  will go  in a different direction from term rewriting; eventually in    \cite{Pfeffer07}  
programs are compiled into an intermediate structure, and it is on this graph structure that  a sophisticated variant of  \VE{} is performed.
The question of understanding  \VE{} as a \emph{transformation on programs} remains still open; we believe it is important for a  foundational understanding of PPLs.

 In this paper, we provide an answer, defining  an inference algorithm which \emph{fully}   formalizes the classical  \VE{}  algorithm  as rewriting of programs, expressed  in an idealized probabilistic functional language---a linear simply-typed $\lambda$-calculus suffices for our purpose.  Formally, we prove \emph{soundness and completeness} of our  algorithm with respect to the standard one.  Notice that the choice of the elimination order is not part of a \VE{} algorithm---several heuristics are available in the literature to compute an efficient elimination order (see \eg \cite{Darwiche2009}). As wanted, we prove that  \emph{any} given elimination ordering can be implemented by our algorithm. 
 When we run it on a stochastic program representing a Bayesian network,  its computational behaviour is the same as that of standard \VE{} for Bayesian networks, and the cost is of the same complexity order.
 While the idea behind \VE{} is simple, crafting  an algorithm  on terms which is able to 
 implement any elimination order is  non-trivial--- our success here relies on the use of linear types $\PosA\multimap\TypeA$ enabled by  linear logic \cite{ll}, accounting for the interdependences generated by a specific elimination order.
 Let us explain the main ideas.

  

%
%
%

\paragraph{Factorising Inference, via the graph structure.}
Bayesian networks describe a set of random variables and  their conditional (in)dependencies. Let us restrict ourselves to boolean random variables, \ie variables $x$ representing a boolean value $\True$ or $\False$ with some probability.\footnote{The results trivially extends to random variables over \emph{countable} sets of outcomes.} Such a variable can be described as a vector of two non-negative real numbers 
$(\ScalA_\True, \ScalA_\False)$ quantifying the probability $\ScalA_\True$ (resp.~$\ScalA_\False$) of sampling $\True$ (resp.~$\False$) from $x$.

\begin{figure}[t]
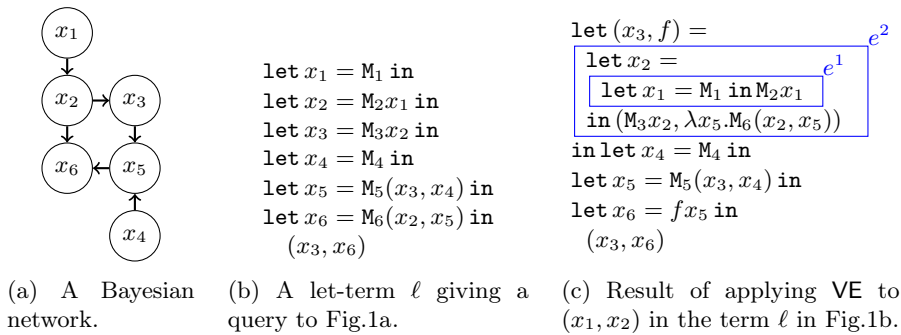

	\centering
	\begin{subfigure}[b]{2.5cm}
		\centering
		\tikz[scale=0.9]{
			\node [circle, draw] (1) at (0, 3) {$x_1$};	
			\node [circle, draw] (2) at (0, 2) {$x_2$};	
			\node [circle, draw] (3) at  (1, 2) {$x_3$};
			\node [circle, draw] (4) at (1, 0) {$x_4$};	
			\node [circle, draw] (5) at (1, 1) {$x_5$};	
			\node [circle, draw] (6) at (0, 1) {$x_6$};	
		
			\graph[edges={thick}]  { 
				(1) -> (2) -> (3) -> (5) -> (6) ;
				(2) -> (6) ;
				(4) -> (5) ;
			};
		}
		\caption{A Bayesian network.}\label{sub_fig:bayesian_graph}
	\end{subfigure}
	\quad
	\begin{subfigure}[b]{4cm}
		\centering
		\begin{minipage}{3.5cm}
			\[
				\begin{array}{l}
					\mathtt{let }\, x_1 = \StocA_1\, \mathtt{ in}\\
					\mathtt{let }\, x_2 = \StocA_2 x_1\, \mathtt{ in}\\
					\mathtt{let }\, x_3 = \StocA_3 x_2\, \mathtt{ in}\\
					\mathtt{let }\, x_4 = \StocA_4 \, \mathtt{ in}\\
					\mathtt{let }\, x_5 = \StocA_5 (x_3, x_4)\, \mathtt{ in}\\
					\mathtt{let }\, x_6 = \StocA_6 (x_2, x_5)\, \mathtt{ in}\\
				\quad(x_3, x_6)
				\end{array}
			\]
		\end{minipage}
		\caption{A let-term $\LetTerm$ giving a query to Fig.\ref{sub_fig:bayesian_graph}.}\label{sub_fig:let-term}
	\end{subfigure}
	\quad
	\begin{subfigure}[b]{4.5cm}
		\centering
		\tikz{
			\node (1) [align=left, text width=3.5cm] at (0, 2.8) {$\mathtt{let }\, (x_3, f) =$};	
			\node (2) [align=left, text width=3.5cm] at (0.2, 2.4) {$\mathtt{let }\, x_2 =$};	
			\node (3) [align=left, text width=3.5cm] at (0.4, 2) {
				$\mathtt{let }\, x_1 = \StocA_1 \, \mathtt{ in}\, \StocA_2 x_1$
			};	
			\node (4) [align=left, text width=3.5cm] at (0.2, 1.6) {
				$\mathtt{in }\, (\StocA_3 x_2, \lambda x_5. \StocA_6 (x_2, x_5))$
			};
			\node (5) [align=left, text width=3.5cm] at (0, 1.2) {
				$\mathtt{in }\,\mathtt{let }\, x_4 = \StocA_4 \, \mathtt{ in}$
			};
			\node (6) [align=left, text width=3.5cm] at (0, 0.8) {
				$\mathtt{let }\, x_5 = \StocA_5 (x_3, x_4)\, \mathtt{ in}$
			};
			\node (7) [align=left, text width=3.5cm] at (0, 0.4) {
				$\mathtt{let }\, x_6 = f x_5\, \mathtt{ in}$
			};
			\node (8) [align=left, text width=3.5cm] at (0.2, 0) {
				$(x_3, x_6)$
			};		
			\draw [draw=blue] (-1.5, 1.8) rectangle (1.6, 2.2);
			\draw [draw=blue] (-1.7, 1.4) rectangle (2.2, 2.6);
			\node (e1) [color=blue] at (1.75, 2.35) {$e^1$};
			\node (e2) [color=blue] at (2.35, 2.75) {$e^2$};
		}		
		\caption{Result of applying $\VE$ to $(x_1, x_2)$ in the term $\LetTerm$ in Fig.\ref{sub_fig:let-term}.}\label{sub_fig:let-term-VE}
	\end{subfigure}
\caption{Example of running the $\VE$ algorithm on a let-term $\LetTerm$.}\label{fig:intro_example}
\end{figure}

%
Fig.~\ref{sub_fig:bayesian_graph} depicts an example of Bayesian network. It is a directed acyclic graph $\mathcal G$ where the nodes are associated with random variables and where the arrows describe conditional dependencies between these variables. For instance, in Fig.~\ref{sub_fig:bayesian_graph} the variable $x_5$ depends on the values sampled from $x_3$ and $x_4$, and, in  turn, it affects the probability of which boolean we can sample from $x_6$. The network does not give a direct access to a vector
$(\ScalA_\True, \ScalA_\False)$ 
describing $x_5$ by its own,  
but only to a stochastic matrix $\StocA_5$ quantifying 
the conditional dependence of $x_5$ with respect to $x_3$ and $x_4$. Formally, $\StocA_5$ is a matrix with 
four rows, representing the four possible outcomes of a joint sample of $x_3$ and $x_4$ (\ie $(\True, \True)$,  $(\True, \False)$, $(\False, \True)$, $(\False, \False)$), and two columns, representing the two possible outcomes for $x_5$. The matrix is \emph{stochastic} in the sense that each line represents a probabilistic distribution of booleans: for instance, $(\StocA_5)_{(\True, \False), \True}=0.4$ and $(\StocA_5)_{(\True, \False), \False}=0.6$ mean that $\True$ can be sampled from $x_5$  with a 40\% chance, while $\False$ with 60\%, whenever $x_3$ has been observed to be $\True$ and $x_4$ to be $\False$. 

Having such a graph $\mathcal G$ and the stochastic matrices $\StocA_1$, $\StocA_2$, $\StocA_3$, \etc associated with its nodes, a typical query is to compute the joint probability of a subset of the variables of  $\mathcal G$. For example, the vector 
 $\Pr(x_3,x_6)=(\rho_{(\True, \True)},\rho_{(\True, \False)},\rho_{(\False, \True)},\rho_{(\False, \False)})$ giving the marginal over   $x_3$ and $x_6$, \ie 
the  probability of the possible outcomes of $x_3$ and $x_6$.   
A way to  obtain this is  first computing the joint distributions of all variables in the graph in a single shot and then summing out the variable  we are not interested in. This will give, for every possible boolean value $b_3,b_6$ in $\{\True,\False\}$ taken by, respectively, $x_3$ and $x_6$, the following expression:
\begin{equation}
\label{eq:let_term_semantics1}
	\sum_{
		\substack{
			b_1,b_2,b_4,
			b_5\in \{\True, \False\}
		}
	}
	(\StocA_1)_{b_1}
	(\StocA_2)_{b_1, b_2}
	(\StocA_3)_{b_2, b_3}
	(\StocA_4)_{b_4}
	(\StocA_5)_{(b_3, b_4), b_5}
	(\StocA_6)_{(b_2, b_5), b_6}\,.
\end{equation}
For each of  the $2^2$ possible values of the indexes $(b_3, b_6)$ we have a sum of $2^4$ terms. 
That is,   to compute the joint probability of $(x_3, x_6)$, {we have to compute $2^6$ entries.}
%
 This method is unfeasible in general as it requires a number of operations exponential in the size of $\mathcal G$. Luckily, one can take advantage of the conditional (in)dependencies underlined by $\mathcal G$ to get a better factorisation than in \eqref{eq:let_term_semantics1}, breaking the  computation in that of factors  of smaller size. For example:
\begin{equation}
\label{eq:let_term_semantics2}
	\sum_{b_5}\!
		\left(\!\sum_{b_2}\!
		\left(\!\sum_{b_1}
			(\StocA_1)_{b_1}
			(\StocA_2)_{b_1, b_2}
		\!\right)
		(\StocA_3)_{b_2, b_3}
		(\StocA_6)_{(b_2, b_5), b_6}
		\!\right)\!
		\left(\!
		\sum_{b_4}
		(\StocA_4)_{b_4}
		(\StocA_5)_{(b_3, b_4), b_5}
		\!\right)\!.
\end{equation}
Let us denote by $\FactA^i$ the intermediate factor in \eqref{eq:let_term_semantics2} identified by the sum over $b_i$: for example $\FactA^2$ is the sum $\sum_{b_2}\FactA^1_{b_2}(\StocA_3)_{b_2, b_3}(\StocA_6)_{(b_2, b_5), b_6}$. Notice that if we suppose to have memorised the results of  computing $\FactA^1$, to obtain  $\FactA^2$ {requires to compute $2^4$ entries}, \ie the cost is exponential in the number of the different indexes $b_j$'s appearing in the expression defining $\FactA^2$. By applying the same  reasoning to all factors in~\eqref{eq:let_term_semantics2}, 
	 one  notices that in the whole computation  {we never need to compute more than  $2^4$ entries}: we have gained a factor of $2^2$ with respect to \eqref{eq:let_term_semantics1}.

The Variable Elimination algorithm 
 performs factorisations like \eqref{eq:let_term_semantics2} in order to compute more efficiently the desired marginal distribution. The  factorisation is characterised by an ordered sequence of unobserved (or marginalised) variables to eliminate, \ie to sum out. The factorisation in \eqref{eq:let_term_semantics2} is induced by the sequence $(x_1,x_2,x_4,x_5)$:  $\FactA^1$ eliminates variable $x_1$, 
$\FactA^2$ then eliminates variable $x_2$, and so on. Different orders yield different factorisations with different performances, \eg{} the inverse order $(x_5,x_4,x_2,x_1)$ is 
less efficient, as the largest factor here requires to compute $2^5$ entries.

\paragraph{Factorising Inference, via the program structure.}

In the literature,  factorisations are usually described as collections of factors (basically vectors) and the \VE{} algorithm is presented as an iterative algorithm acting on such collections. In this paper we propose a framework that gives more \emph{structure} to this picture, expressing \VE{} as a program transformation, building a factorisation by  induction on the structure of a program, \emph{compositionally}. More precisely:
\begin{itemize}
\item we define a fragment $\LetCalculus$ of the linear simply-typed $\lambda$-calculus, which is  able to represent \emph{any factorisation} of a Bayesian network query as a $\lambda$-term. Random variables are associated with term variables of ground type;
\item we express the \VE{} algorithm as  rewriting over the $\lambda$-terms in $\LetCalculus$, consisting in reducing the scope of the variables that have to be eliminated. 
\end{itemize}
Our approach integrates and is grounded on the  denotational semantics of the terms, which   directly \emph{reflects and validates the factorisation} algorithm---yielding  soundness and completeness. We stress that inference is computing the semantics of the program (the marginal distribution defined by it).\\

The reader can easily convince herself that the query about the joint marginal distribution $(x_3, x_6)$ to the Bayesian network in Fig.~\ref{sub_fig:bayesian_graph} can be expressed by the let-term $\LetTerm$ in Fig.~\ref{sub_fig:let-term}, where we have enriched the syntax of $\lambda$-terms with the constants representing the stochastic matrices. We consider $\Let xe{e'}$ as a syntactic sugar for $(\lambda x. e')e$, so the term $\LetTerm$ can be seen as a $\lambda$-term, which is moreover typable in a linear type system like the one in Fig.~\ref{fig:terms_types}. The fact that some variables $x_i$ have more free occurrences in sub-terms of $\LetTerm$ is not in contrast with the linearity feature of the term, as let-expressions are supposed to be evaluated following a call-by-value strategy, and ground values (as e.g.~booleans) can be duplicated in linear systems (see Ex.~\ref{ex:typing} and Remark~\ref{rk:typing_as_MLL} for more details).

Terms of this type are  associated in \emph{quantitative} denotational semantics such as \cite{danosehrhard,LairdMMP13} with algebraic expressions which give the joint distribution of the output variables. 
%
%
Here, 
 in the same spirit as 	 \cite{fscd23},  we  adopt a  variant of the quantitative denotation 
 (Sect.~\ref{sect:factors}), which can be seen as a  compact reformulation of the original model, and is more suitable  to  deal with factorised inference.
It turns out that when we compositionally compute the semantics of $\LetTerm$ following the structure of the program, we have 
a more efficient computation than in \eqref{eq:let_term_semantics1}. This is because now \emph{the inductive interpretation 
 yields intermediate
factors of smaller size},  in a similar  way to what algorithms for exact inference do.  In the case of $\LetTerm$,  its inductive interpretation 
 behaves similarly to \VE{} given  the elimination order $(x_5,x_4,x_2,x_1)$, see Ex.~\eqref{ex:semantics_let}.

Notice that \emph{different programs} may encode  the \emph{same  model} and query, but  with a significantly \emph{different  inference cost}, due to their different structure. 
A natural question is then to wonder if we  can  directly act on the structure of the program, in such a way that \emph{the semantics is invariant, but inference is more efficient}. 
%
%
In fact, we show that the language $\LetCalculus$ is sufficiently expressive to represent \emph{all} possible factorisations of~\eqref{eq:let_term_semantics1}, \eg \eqref{eq:let_term_semantics2}. The main idea for such a representation arises from the observation that summing-out variables in the semantics corresponds in the syntax to make a let-in definition local to a sub-expression. For example, the factor $\FactA^1 = \sum_{b_1}(\StocA_1)_{b_1}(\StocA_2)_{b_1, b_2}$ of \eqref{eq:let_term_semantics2} can be easily obtained by making the variable $x_1$ local to the definition of $x_2$, creating a $\lambda$-term $e^1$ of the shape $\Let{x_1}{\StocA_1}{\StocA_2 x_1}$ and replacing the first two definitions of $\LetTerm$ with $\Let{x_2}{e^1}{\dots}$. In fact, the denotation of $e^1$ is exactly $\FactA^1$. What about $\FactA^2$? Here the situation is subtler as in order to make local the definition of $x_2$ one should gather together the definitions of  $x_3$ and $x_6$, but the definition of $x_6$ depends on a variable $x_5$ which in  turn depends on $x_3$, so a simple factor of ground types (\ie tensors of booleans) will generate a dependence cycle. Luckily, we can use a (linear) functional type, defining a $\lambda$-term $e^2$ as $\Let{x_2}{e^1}{\Seq{\StocA_3 x_2, \lambda x_5.\StocA_6 \Seq{x_2, x_5}}}$ and then transforming $\LetTerm$ into Fig.~\ref{sub_fig:let-term-VE}. Again, we can notice that the denotation of $e^2$ is exactly $\FactA^2$. Fig.~\ref{fig:example_reduction} details\footnote{Actually, the expressions $e^1$ and $e^2$ in Fig.~\ref{fig:example_reduction}  are a bit more cumbersome that the ones here discussed because of some bureaucratic let-in produced by a more formal treatment. This difference is inessential and can be avoided by adding a post-processing.} the whole rewriting mimicking the elimination of the variables $(x_1, x_2, x_4, x_5)$ applied to $\LetTerm$.  This paper shows how to generalise this reasoning to any let-term.


\paragraph{Contents of the paper.}
We present an algorithm which is able to \emph{fully} perform  \VE{}  \emph{on programs}: for  any elimination order, 
program $ \LetTerm $  is 
\emph{rewritten} by the   algorithm into program $ \LetTerm'$, representing---possibly in a more efficient way---the same  model. 	
As  stressed,  our investigation   is of \emph{foundational nature};
	we focus on a theoretical framework in which we are able to prove  the \emph{soundness and completeness} (Th.~\ref{th:soundnessVE_Let}, Cor.~\ref{cor:soundnessVE_Let_nary}) of the \VE{} algorithm on terms. To do so, we  leverage on the quantitative denotation of the terms.
The structure of the paper is as follows.

-- Sect.~\ref{sect:syntax} defines the linear $\lambda$-calculus $\LetCalculus$, and its semantics.  In particular, Fig.~\ref{fig:terms_types} gives the linear typing system, which is a fragment of multiplicative linear logic \cite{ll} (Remark~\ref{rk:typing_as_MLL}). Fig.~\ref{fig:denotation_terms} sketches the denotational semantics of $\LetCalculus$ as weighted relations \cite{LairdMMP13}.  At first, the reader who wishes to focus on \VE{} can skip the formal details about the semantics, and just 
read the intuitions  in Sect.~\ref{sec:gentle_semantics}.

-- Sect.~\ref{sect:factors} formalises the notion of factorisation as a set of factors (Def.~\ref{def:factor}) and shows how to associate a factorisation to the \emph{let-terms} in  $\LetCalculus$    (Def.~\ref{def:factor_of_let_term}). Def.~\ref{def:VE_factors}  recalls the standard \VE{} algorithm---acting on sets of factors---denoted by $\VEF$. 

-- Sect.~\ref{sect.VE_as_let_rewriting} is  the \emph{core of our paper},    capturing \VE{} as a let-term transformation (Def.~\ref{def:VE}), denoted here by $\VEL$, using the rewriting rules of Fig.~\ref{fig:let-rewriting}. We prove the correspondence between the two versions of $\VE$ in Th.~\ref{th:soundnessVE_Let} and Cor.~\ref{cor:soundnessVE_Let_nary}, stating our main result, the  soundness and completeness of  $\VEL$ with respect to $\VEF$.

 As an extra bonus
  (and a confirmation of the  robustness of our approach),   an enrichment of the semantics---based on  probabilistic coherence spaces \cite{danosehrhard}---allows us to prove   a nice property of the terms of $\LetCalculus$, namely that the total mass of their denotation is easily computable from the type of the terms (Prop.~\ref{prop:totality-mass}).

\subsection*{Related work }

 Variants of factorisation algorithms were invented independently in multiple communities (see \cite{KschischangFL01} for a survey). The algorithm of Variable Elimination (\VE{})  was first formalised in \cite{ZhangP94}. 
The approach to \VE{} which is usually taken by PPLs is  to compile a  program  into an \emph{intermediate  structure}, on which \VE{} is performed. 
Our specific  contribution is to provide the first algorithm which \emph{fully} performs \VE{} \emph{directly on  programs}.
As  explained, by this we mean the following. First, we observe that the inductive interpretation of a term behaves as  \VE{}, following an  ordering of the  variables to eliminate which is implicit in  the structure of the program---possibly a non-efficient one.  Second, our algorithm  transforms   the program in such a way that its structure reflects \VE{} according to any arbitrary ordering, while still  denoting the same model (the semantics is invariant).

As we discussed before, our  work builds on the programme put forward in   \cite{KollerMP97}. Pfeffer 
\cite{Pfeffer07}(page 417) summarizes this way the limits of the algorithm in \cite{KollerMP97}: "The 
the solution is only partial. Given a BN encoded in their language, the algorithm can
be viewed as performing Variable Elimination using \textit{a particular elimination
order}: namely, from the last variable in the program upward. It is well-known that
the cost of VE is highly dependent on the elimination order, so the algorithm is
exponentially more expensive for some families of models than an algorithm that
can use \emph{any order}."  The algorithm we present here achieves a full solution: any elimination order can be implemented.

The literature on  probabilistic programming languages and inference algorithms is vast, even restricting attention to \emph{exact inference}. At the beginning of the Introduction we have mentioned several relevant contributions.
 Here we briefly discuss  two lines of work which are especially relevant to our  approach. 

-- \emph{Rewriting  the program to improve inference efficiency, as in \cite{GorinovaGSV22}.}
 A key goal of PPLs is to separate the model description (the program) from the inference task. As pointed out by  \cite{GorinovaGSV22}, 
 such a  goal is hard to achieve in practice. To improve inference
 efficiency, users are often forced to re-write the program by hand.

 -- \emph{Exploiting the local structure of the  program to achieve efficient inference, as in \cite{HoltzenBM20}.} This  road is taken by the authors of the  language Dice---here the  algorithm   does not act on the program itself.
 


Our work incorporates elements from both lines: 
we perform inference  compositionally, following the inductive structure of the program; to improve efficiency, 
our rewriting algorithm modifies the program structure (modelling the  \VE{} algorithm),while  keeping the semantics   invariant. 

%
%
%

As a matter of fact, our first-order  language is very similar to the language Dice \cite{HoltzenBM20}, and has similar expressiveness.  In order to keep  presentation and proofs simple, we prefer to omit a conditioning construct such as   \texttt{observe}, but it  could easily be accommodated (see Sect.~\ref{sec:conclusions}). There are however significant differences.   We  focus on \VE,  while Dice  implements a different inference algorithm,  compiling programs  to weighted Boolean formulas, then performing  Weighted Model Counting \cite{ChaviraD05}. Moreover, as said, 
Dice exploits the local structure of the \emph{given} program, without program transformations to improve the inference cost, which is instead at the core of our approach.

Rewriting is  central to  \cite{GorinovaGSV22}. The  focus there is on 
	  probabilistic  programs with \emph{mixed}
	discrete and continuous parameters: by  eliminating the discrete parameters,  general  (gradient-based) algorithms can then be used.
	To \emph{automate} this process, the authors introduce an information flow type system that can detect conditional independencies;  rewriting  uses \VE{} techniques, even though the authors are not directly interested in the equivalence with the standard  \VE{} algorithm (this  is left as a conjecture).
In the same line, we mention also a very recent work \cite{LiWZ24} which tackle a similar task as  \cite{GorinovaGSV22}; while using  similar ideas,   a new  design in both the  language and the information flow type system allows the authors to deal with bounded recursion.  The term transformations have a different goal than ours, compiling a probabilistic  program into a pure one. However,  some key elements there resonate with our approach:  program transformations are based on continuation passing style (we use arrow variables in a similar fashion);  the language in  \cite{LiWZ24}
 is not defined by an operational semantics, instead  the authors  ---like us--- adopt a compositional, \emph{denotational} treatment.
 %

\paragraph{Denotational semantics versus  cost-awareness.} Our approach integrates and is grounded on  a quantitative \emph{denotational semantics}.
Pioneering work by \cite{JacobsZ16,JacobsZ} has  paved the way for a logical and semantical comprehension of Bayesian networks and inference from a categorical perspective, yielding an extensive body of work based on  the setting of string diagrams, \eg \cite{ChoJ19,JacobsKZ19,bookJacobs}. A denotational take on Bayesian networks is also at the core of \cite{Paquet21}, and underlies the categorical framework of \cite{Stein2021CompositionalSF}.
These lines of research however  
do not 
take into consideration  the \emph{computational cost}, which is the very reason motivating the introduction  and development  of Bayesian networks, and inference algorithms such as \VE.
In the literature, foundational  understanding tends to focus on either a compositional semantics or on efficiency, but the two worlds are separated, and typically explored as \emph{independent} entities. 
This dichotomy stands in stark contrast to  Bayesian networks, where the representation, the  semantics (\ie the underlying  joint distribution), and the inference algorithms are deeply \emph{intertwined}.
A new perspective  has been recently propounded  by \cite{fscd23},  advocating the need for a  quantitative semantical  approach more attentive to the resource consumption and to the actual cost of computing the semantics, which here exactly corresponds to performing inference. Our contribution fits in this line, which inspires  also the cost-aware semantics in  \cite{FaggianPV24}. The latter introduces a 
higher-order  language —in the idealized form of a $\lambda$-calculus—which is  sound and complete
w.r.t. Bayesian networks, together with a  type system which computes  the  cost of  (inductively performed) inference. 
Notice that   \cite{FaggianPV24} 
does not deal with terms transformations to rewrite a  program into a more efficient one. Such transformations, reflecting  the essence of the \VE\ algorithm,  is exactly the core  of our paper --- our algorithm  easily adapts to the first-order fragment of \cite{FaggianPV24}. 

\section{$\LetCalculus$ Calculus and Let-Terms}
\label{sect:syntax}

We consider a linear simply typed $\lambda$-calculus extended with stochastic matrices over tuples of booleans. Our results can be 
extended to more general systems, but here we focus on the core fragment able to represent Bayesian networks and
the factorisations produced by the \VE{} algorithm. In particular, we adopt  a specific class of types, where arrow types are restricted to  
\emph{linear} maps from (basically) tuples of booleans (i.e.~ the values of positive types in the grammar below) to pairs of a tuple of booleans and, possibly, another arrow.

\subsection{Syntax} 
We consider the following grammar of types:
\begin{align*}
\PosA, \PosB, \dots &\Bnf \Bool\mid\PosA\otimes\PosB&\text{(positive types)}\\
\NegA, \NegB, \dots&\Bnf\PosA\multimap\TypeA&\text{(arrow types)}\\
\TypeA, \TypeB, \dots &\Bnf\PosA\mid\NegA\mid\PosA\otimes\TypeA&\text{(let-term types)}
\end{align*}

It is convenient to adopt a typing system \emph{à la Church}, i.e.~ we will consider type annotated variables, meaning that we fix a set of variables and a function $\mathsf{ty}$ from this set to the set of positive and arrow types (see e.g.~\cite[ch.10]{Hindley2008}, this style is opposed to the typing system \emph{à la Curry}, where types should be associated to variables by typing contexts). 
We call a variable $v$ positive (resp.~arrow) whenever $\Type v$ is a positive (resp.~arrow) type. We use metavariables $x$, $y$, $z$ (resp.~$f$, $g$, $h$) to range over positive (resp.~arrow) variables. The letters $v$, $w$ will be used to denote indistinctly positive or arrow variables. 

The syntax of $\LetCalculus$ is given by the following 3-sorted grammar, where $\StocA$ is a metavariable corresponding to a stochastic matrix between tuples of booleans:
\begin{align*}
\vec v&\Bnf v \mid \Seq{\vec v, \vec v'}\quad\qquad\qquad\text{if $\FV{\vec v}\cap \FV{\vec v'}=\emptyset$}
&&&\text{(patterns)}
\\
e&\Bnf
	v
	\mid \StocA(\vec x)
	\mid \App f{\vec x}
	\mid \Seq{e,e'}
	\mid \Abs{\vec x}{e}
	\mid \Let{\vec v}{e}{e'}
&&&\text{(expressions)}
\\
\LetTerm&\Bnf 
	\vec v
	\mid \Let{\vec v}{e}{\LetTerm}
&&&\text{(let-terms)}	
\end{align*}
Notice that a pattern is required to have pairwise different variables.
We allow $0$-ary stochastic matrices, representing random generators of boolean tuples, which we will denote simply by $\StocA$, instead of $\StocA()$. We can assume to have two $0$-ary stochastic matrices $\True$ and $\False$ representing the two boolean values. 


We denote by $\FV e$ the set of free variables of an expression $e$. As standard, $\lambda$-abstractions $\Abs{\vec v}e$ and let-in $\Let{\vec v}{e'}{e}$ bind in the subexpression $e$ all occurrences of the variables in $\vec v$. 
 In fact, $\Let{\vec v}{e'}{e}$ can be thought as syntactic sugar for $(\lambda \vec v.e)e'$. Given a set of variables $\mathcal V$, we denote by $\mathcal V^a$ (resp.~$\mathcal V^+$) the subset of the arrow variables (resp.~positive variables) in $\mathcal V$, in particular $\FV[a] e$ denotes the set of arrow variables free in $e$. 

Patterns are a special kind of let-terms and these latter are a special kind of expressions. A pattern is called \textdef{positive} if all its variables are positive. We use metavariables $\vec x, \vec y, \vec z$ to range over positive patterns. A let-term is \textdef{positive} if its rightmost pattern is positive, \ie: $\Let{\vec v}{e}{\LetTerm}$ is positive if $\LetTerm$ is positive.

\begin{figure}
\centering
%
%
%
\def\defaultHypSeparation{\hskip .05in}
	\AxiomC{$\Type v$ positive or arrow}
     	\UnaryInfC{$v:\Type v$}
     	\DisplayProof
\quad
	\AxiomC{$f: \PosA\multimap\TypeA$}
	\AxiomC{$\vec x:\PosA$}
 	\BinaryInfC{$\App{f}\vec x:\TypeA$}
     	\DisplayProof
\quad
	\AxiomC{$\StocA:\PosA\multimap\PosB$}
	\AxiomC{$\vec x:\PosA$}
 	\BinaryInfC{$\StocA\vec x:\PosB$}
     	\DisplayProof

\medskip
	\AxiomC{$\vec x:\PosA$}
	\AxiomC{$e:\TypeA$}
 	\BinaryInfC{$\Abs{\vec x}e:\PosA\multimap\TypeA$}
     	\DisplayProof
\quad
	\AxiomC{$e:\PosA$}
	\AxiomC{$e':\TypeA$}
	\AxiomC{$\FV[a]{e}\cap\FV[a]{e'}=\emptyset$}
 	\TrinaryInfC{$\Seq{e,e'}:\PosA\otimes\TypeA$}
     	\DisplayProof

\medskip
\def\defaultHypSeparation{\hskip .05in}
	\AxiomC{$\vec v:\TypeA$}
	\AxiomC{$e:\TypeA$}
 	\AxiomC{$e':\TypeB$}
	\AxiomC{$\FV[a]{e}\cap\FV[a]{e'}=\emptyset$}
	\AxiomC{if $f\in\FV{\vec v}$ then $f\in\FV[a]{e'}$}
 	\QuinaryInfC{$\Let{\vec v}{e}{e'}:\TypeB$}
     	\DisplayProof
\caption{Typing rules: the binary rules suppose that the set of the free arrow variables of the subterms are disjoint; the let-rule binding an arrow variable $f$ requires also that this variable $f$ is free in the expression $e'$.}\label{fig:terms_types}
\end{figure}

Fig.~\ref{fig:terms_types} gives the rules generating the set of well-typed expressions (and so including patterns and let-terms). As standard in typing systems \emph{à la Church}, we  omit   an explicit typing environment of the  typing judgment $e:\TypeA$, as this can be recovered from the typing of the free variables of $e$, i.e.~if $\FV{e}=\{x_1,\dots,x_n\}$, then \emph{à la Curry} we would write $e:\TypeA$ by $x_1:\Type{x_1},\dots,x_n:\Type{x_1}\vdash e:\TypeA$. See Fig.~\ref{fig:typing} for an \emph{example of typing derivation} in both styles. 

The binary rules suppose the side condition $\FV[a]{e}\cap\FV[a]{e'}=\emptyset$ and the let-in rule binding an arrow variable $f$ has also the condition $f\in\FV[a]{e'}$. These conditions guarantee that arrow variables are used \emph{linearly}, ensuring the linear feature of the typing system.  
%

\begin{example}\label{ex:typing}
\begin{figure}
	\centering
	{\footnotesize 
		\begin{tabular}{l|l}
			\emph{Church style (our system):}	& \emph{Curry style:}  \\[2pt]
			\def\defaultHypSeparation{\hskip .05in}
			\AxiomC{$ v:P$ \quad $v':P$} 
			\AxiomC{$ v: P$}
			\AxiomC{$ v':P$}
			\BinaryInfC{$ (v,v'):P\otimes P$}
			\BinaryInfC{$  \Let{v'}{v}{(v,v')}:  P\otimes P$}
			\DisplayProof
			~
			&
			~
			\def\defaultHypSeparation{\hskip .05in}
			\AxiomC{$v:P\vdash v:P$}
			\AxiomC{$v:P\vdash v: P$}
			\AxiomC{$v':P\vdash v':P$}
			\BinaryInfC{$v: P, v':P \vdash (v,v'):P\otimes P$}
			\BinaryInfC{$ v: P \vdash \Let{v'}{v}{(v,v')}: P\otimes P$}
			\DisplayProof	
		\end{tabular}
	}	
	\caption{An example of type derivation, in both Church and Curry style.}	
	\label{fig:typing}	
\end{figure}	
Consider the term   $ \Let{v'}{v}{(v,v')}$, which will duplicate any  value assigned to  the free variable $v$.   If $v$ has \emph{positive type} (\eg~boolean),
it admits the  type derivation in Fig.~\ref{fig:typing}. 
On the contrast, if $v$ has \emph{arrow type}, no type derivation is possible, in agreement with the fact that arrows can only occur linearly. See also the discussion in Ex.~\ref{ex:cbv_beta}.
\end{example}

Notice that $\lambda$-abstractions are restricted to positive patterns. 
Also, a typing derivation of conclusion $e:\TypeA$ is completely determined by its expression. This means that  if an expression can be typed, then its type is unique. Because of that, we can extend the function $\mathsf{ty}$ to all expressions, \ie{} $\Type e$ is the unique type such that $e:\Type e$ is derivable, whenever $e$ is well-typed.

Notice that well-typed patterns $\vec v$ have at most one occurrence of an arrow variable (which is moreover in the rightmost position of the pattern). By extension of notation, we write $\vec v^a$ as the only arrow variable in $\vec v$, if it exists, otherwise we consider it as undefined. We also write by $\vec v^+$ for the pattern obtained from $\vec v$ by removing the arrow variable $\vec v^a$, if any. In particular $\vec x^+=\vec x$.

\begin{remark}\label{rk:ll_fragment}
The readers acquainted with linear logic  \cite{ll} may observe that the above grammar of types  identifies a precise fragment of this logic. In fact, the boolean type $\Bool$ may be expressed as the additive disjunction of the tensor unit: $\mathbf 1\oplus \mathbf 1$. Since $\otimes$ distributes over $\oplus$, positive types are isomorphic to $n$-ary booleans, for some $n\in\mathbb N$, \ie $\bigoplus_{n} \mathbf 1$, which is a notation for  $\mathbf 1\oplus\dots\oplus \mathbf 1$ $n$-times.


Moreover, by the isomorphisms $(\bigoplus_i\mathbf 1)\multimap\TypeA \simeq \bigwith_i(\mathbf 1\multimap\TypeA)\simeq\bigwith_i\TypeA$, where $\with$ denotes the additive conjunction, we deduce that the grammar of $\LetCalculus$ types is equivalent to an alternation of \emph{balanced} additive connectives, \ie it can be presented by the grammar:
$\TypeA \Def \mathbf 1 \mid \bigoplus_{n}\TypeA \mid \bigwith_{n}\TypeA$, for $n\in\mathbb N$.
The typing system hence identifies a fragment of linear logic which is not trivial, it has some regularity (alternation of the two additive connectives) and it is more expressive than just the set of arrows between tuples of booleans.
\end{remark}

\begin{remark}\label{rk:typing_as_MLL}
Notice that the binary rules might require to contract some positive types in the environment, as the expressions in the premises might have positive variables in common. In fact, it is well-known that 
contraction and weakening rules are derivable for the positive formulas in the environment, \eg $\mathbf 1\oplus\mathbf 1\vdash (\mathbf 1\oplus\mathbf 1)\otimes (\mathbf 1\oplus\mathbf 1)$ is provable in linear logic.  From a categorical point of view, this corresponds to the fact that positive types define co-algebras, and, operationally, that positive \emph{values} can be duplicated or erased without the need of being promoted (see \eg \cite{EhrhardT19} in the setting of PPLs).
\end{remark}


In the following we will represent a let-term $\LetTerm:=\Let{\vec v_1}{e_1}{\dots \Let{\vec v_n}{e_n}{\vec v_{n+1}}}$ by the more concise writing: 
$
	\LetTerm := \LetG{\vec v_1=e_1;\dots;\vec v_n=e_n}{\vec v_{n+1}}
$.

By renaming we can always suppose, if needed, that the patterns $\vec v_1,\dots, \vec v_{n}$ are pairwise disjoint sequences of variables and that none of these variables has occurrences (free or not) outside the scope of its binder. 

We call $\{\vec v_1=e_1$, \dots, $\vec v_n=e_n\}$ the set of the \textdef{definitions of $\LetTerm$} (which has exactly $n$ elements thanks to the convention of having $\vec v_1, \dots, \vec v_n$ pairwise disjoint), and $\biguplus_{i=1}^n\vec v_i$ the set of the \textdef{defined variables of $\LetTerm$}. The final pattern $\vec v_{n+1}$ is called the \textdef{output of $\LetTerm$}. Notice that $\LetTerm$ is positive if its output is positive. 

\begin{example}\label{ex:bayesian_graph_and_fact_as_let-term}
A Bayesian network of $n$ nodes can be represented by a closed let-term $\LetTerm$ having all variables positive and $n$ definitions of the form $x=\StocA(\vec y)$. The variables are associated with the edges of the graph and the definitions with the nodes such that $x=\StocA(y_1,\dots, y_k)$ represents a node with stochastic matrix $\StocA$, an outgoing edge associated with $x$ and $k$ incoming edges associated with, respectively, $y_1, \dots, y_k$. The output pattern contains the variables associated with a specific query to the Bayesian network. 

For instance, the Bayesian network in Fig.~\ref{sub_fig:bayesian_graph} is represented by the let-term:
$
	(x_1\!=\!\StocA_1; 
	x_2\!=\!\StocA_2x_1; 
	x_3\!=\!\StocA_3x_2; 
	x_4\!=\!\StocA_4; 
	x_5\!=\!\StocA_5\Seq{x_3, x_4};
	x_6\!=\!\StocA_6(x_2, x_5)
	\,	
	\mathtt{ in }
	\,
	\Seq{x_3, x_6}
	)
$,
which is the succinct notation for the let-term $\LetTerm$ in Fig.~\ref{sub_fig:let-term}. Notice that $\LetTerm$ induces a linear order on the nodes of the graph. The same graph can be represented by other let-terms, differing just from the order of its definitions. For example, by swapping the definitions of $x_3$ and $x_4$ we get a different let-term representing the same Bayesian network. We will consider this ``swapping'' invariance in full generality by defining the swapping rewriting $\Swap$ in Fig.~\ref{fig:let-rewriting} and stating Lemma~\ref{lemma:facts-invariant}. 

As discussed in the Introduction, let-terms with arrow variables and $\lambda$-abstractions might be needed to represent the result of applying the \VE{} algorithm to a Bayesian network. For instance, Fig.~\ref{fig:example_reduction} details the let-term produced by the elimination of the variables $(x_1, x_2, x_4, x_5)$. For example the closed subexpression $e_2$ in Fig.~\ref{fig:example_reduction} keeps local the variables $x_1$ and $x_2$ and has type $\Bool\otimes(\Bool\multimap\Bool)$. 
\end{example}

\subsection{Semantics}
\label{subsection:semantics}

\newcommand{\CoinEx}{\Coin{0.3}}
\newcommand{\Copy}{\mathtt{copy}}

We omit to detail an operational semantics of $\LetCalculus$, which can be defined in a standard way by using a sample-based or distribution-based semantics, in the spirit of \eg \cite{BorgstromLGS16}.
%
We prefer  to focus on the denotational semantics, which is more suitable to express the variable elimination algorithm in a compositional way. 
  Below, examples \ref{ex:cbv_linearity} and  \ref{ex:cbv_beta} informally  illustrate  the \emph{denotational} and \emph{operational} behavior of a $\mathtt{let}$-term, highlighting the \emph{linearity} of its nature.


%
%

\subsubsection{Semantics, a gentle presentation.}\label{sec:gentle_semantics}
We consider the semantics of
weighted relations \cite{LairdMMP13}, which is an example of quantitative semantics of linear logic interpreting programs as matrices over non-negative real numbers. 
The intuition behind this semantics is quite simple: each type $\TypeA$ is associated with a finite set $\Web\TypeA$ of indexes, called the \emph{web of $\TypeA$} (see \eqref{eq:web_types}). In case of a positive type, the web is the set of all possible outcomes of a computation of that type: the web of the boolean type $\Web\Bool$  is the set of the two booleans $\{\True,\False\}$, the web of a tensor $\PosA\otimes\PosB$ is the cartesian product $\Web{\PosA}\times\Web{\PosB}$ of the web of its components. An arrow type $\PosA\multimap\TypeA$ is also associated with the cartesian product $\Web\PosA\times\Web\PosB$, intuitively representing the elements of the trace of a function  of type $\PosA\multimap\TypeA$. To sum up, in this very simple fragment, webs are sets of nesting tuples of booleans, e.g.~$\Web{(\Bool\times\Bool)\multimap\Bool}=\{((b_1,b_2),b_3)\;\vert\;b_i\in\Web\Bool\}$ . 

The denotation $\Sem e$ of an expression $e$ is then a matrix (sometimes called weighted relation) whose rows are indexed by the sequences of the elements in the web of the free variables in $e$ and the columns are indexed by the elements in the web of the type of $e$. Eg, consider the expression $e_0$ given by $\Let{y}{fx}{(z,y)}$, with free variables $f:\Bool\multimap\Bool$, $x:\Bool$ and $z:\Bool$ and type $\Bool\times\Bool$. The matrix $\Sem{e_0}$ will have rows indexed by tuples $((b_1,b_2), b_3, b_4)$ and columns by $(b_5,b_6)$ for $b_i$'s in $\{\True,\False\}$. Intuitively, the entry $\Sem{e_0}_{((b_1,b_2), b_3, b_4),(b_5,b_6)}$ gives a weight to the possibility of a computation where the free variables of $e$ will ``behave'' as $(b_1,b_2)$ for $f$, $b_3$ for $y$ and $b_4$ for $z$ and the output will be $(b_5,b_6)$. 

The matrix $\Sem e$ is defined by structural induction on the expression $e$ by using matrix composition (for let-construction and application) and tensor product (for tuples), plus the diagonalisation of the indexes in the variables common to sub-expressions. Fig.~\ref{fig:denotation_terms} details this definition, giving a precise meaning to each programming construct. For example, taking the notation of Fig.~\ref{fig:denotation_terms}, the definition of $\Sem{(e',e'')}_{\overline a, (b',b'')}$ states that the weight of getting $(b',b'')$ supposing $\overline a$ is the product of the weights of getting $b'$ from $e'$ and $b''$ from $e''$, supposing $\overline a$ in both cases. The sharing of $\overline a$ in the two components of the tuple characterises the linearity of this calculus. Let us discuss this point with another example.

\begin{example}[Linearity, denotationally]
	\label{ex:cbv_linearity}	
	Let us write $\Coin {0.3} $ for a random generator of boolean values (a $0$-ary stochastic matrix), modeling a biased coin. In our setting $\Sem{\CoinEx}$ is a row vector $(0.3, 0.7)$ modeling the probability of sampling $\True$ or $\False$. 
	Let $e$ be the closed  term  $\Let{v}{\Coin {0.3}}{\Let{v'}{v}{(v,v')}}$, of type $\Bool\otimes\Bool$, well-typed because $v$ is positive ($\Type v = \Bool$). Since $e$ is closed, $\Sem{e}$ is also a row vector, now of dimension 4. One can easily check that 
	$\Sem{e}=(0.3, 0, 0, 0.7)$, stating that the only possible outcomes are the couples $(\True,\True)$ and $(\False,\False)$, while  $(\True,\False)$, $(\False,\True)$ have probability zero to happen.
	
	Notice that $\Sem{e}$ is different from $\Sem{(\CoinEx, \CoinEx)} = (0.3^2, 0.21, 0.21, 0.7^2)$. In fact, $\Sem{e}$ is \emph{linear} in $\CoinEx$, while $\Sem{(\CoinEx, \CoinEx)}$ is quadratic.
\end{example}

	


\begin{example}[Linearity and  $\mathtt{let}$-reduction] \label{ex:cbv_beta} 
	 Let us give an operational intuition for  the term $e$ in Ex.~\ref{ex:cbv_linearity}. There are two possibilities: we can  first sample a boolean from $\Coin {0.3}$ and then replace $v$ for the result of this sampling, or first replace $v$ for the sampler $\Coin {0.3}$, then sampling a boolean from each copy of $\Coin {0.3}$. The semantics states that we follow the former possibility and not the latter (as usual in a setting with effects). Intuitively,  $\Coin {0.3}$ reduces to a probabilistic sum 
	 $\blue{0.3} \, \True \blue{~+~ 0.7} \,\False$, and so $e$ first reduces to the sum \\
$ \blue{0.3}~\Let{v}{\True}{\Let{v'}{v}{(v,v')}}\blue{~+ ~ 0.7}~
   \Let{v}{\False}{\Let{v'}{v}{(v,v')}}$,\\
eventually yielding  $\blue{0.3} (\True,\True) \blue{~+~ 0.7} (\False,\False)$. 
In contrast,  duplicating  the sampler would yield $(\Coin {0.3},\Coin {0.3})$ whose  semantics is different, as discussed in Ex.~\ref{ex:cbv_linearity}.	 
	 Finally, notice    that replacing in $e$ the argument $\Coin {0.3}$ 
	 with an expression
	  $\lambda x.u$ (of arrow type) 
	 yields a term which is \emph{not  typable} in our  system (see  Ex.~\ref{ex:typing}). 
	\end{example} 

\todom{removed remark on value. We have no space and it is not very readable.  }

The rest of the subsection recalls from \cite{LairdMMP13} the definitions and notations of the denotational semantics, but the reader can jump to the next section if already satisfied with these intuitions and willing to focus on variable elimination.



\subsubsection{Semantics, formally.}
Let us  fix some basic notation from linear algebra. Metavariables $\SetA, \SetB, \SetC$ range over finite sets.
We denote by $\Size(\SetA)$ the cardinality of a set $\SetA$. We denote by  $\RP$ the cone of non-negative real numbers. Metavariables $\MatA,\MatB,\MatC$ will range over vectors in $\RP^{\SetA}$, for $S$ a \emph{finite} set, $\MatA_\ValA$ denoting the scalar associated  with $\ValA\in\SetA$ by $\MatA\in\RP^{\SetA}$. 
Matrices will be vectors indexed by pairs, \eg in $\RP^{\SetA\times\SetB}$ for $\SetA$ and $\SetB$ two finite sets. We may write $\MatA_{a,b}$ instead of $\MatA_{(a,b)}$ for $(a,b)\in\SetA\times\SetB$ if we wish to underline that we are considering indexes that are pairs. Given $\MatA\in\RP^{\SetA\times\SetB}$ and $\MatB\in\RP^{\SetB\times\SetC}$, the standard matrix multiplication is given by $\MProd{\MatA}{\MatB}\in\RP^{\SetA\times\SetC}$:\todom{R2: here the use of juxtaposition to indicate composition is potentially confusing because you are using a different order of composition than the usual function composition ($\circ$)  as discussed around lines 470-473.  If correct to do so, using the same symbol here (e.g. $\cdot$) might aid readability / comprehension.}
$(\MProd{\MatA}{\MatB})_{\ValA,\ValC}\Def\sum_{\ValB\in\SetB} \MatA_{\ValA,\ValB}\MatB_{\ValB,\ValC} \in \RP$.
The identity matrix is denoted $\delta\in\RP^{\SetA\times\SetA}$ and defined by $\delta_{a,a'}=1$ if $a=a'$, otherwise $\delta_{a,a'}=0$.

A less standard convention, but common in this kind of denotational semantics, is to consider the rows of a matrix $\MatA$ as the \emph{domain} and the columns as the \emph{codomain} of the underlined linear map. 
Hence, a vector in $\RP^{\SetA}$ is considered as a \emph{one line} matrix $\RP^{1\times\SetA}$, and the application of a vector $\MatB\in\RP^{\SetA}$ to a matrix $\MatA\in\RP^{\SetA\times\SetB}$, is given by $\MApp{\MatA}{\MatB}\Def\MProd{\MatB}{\MatA}\in\RP^{1\times\SetB}\cong\RP^{\SetB}$. 

The model denotes a type $\TypeA$ with a set $\Web\TypeA$, called the \textdef{web} of $\TypeA$, as follows:
\begin{align}\label{eq:web_types}
\Web{\Bool}&\Def\{\True,\False\}\,,
&
\Web{\PosA\otimes\TypeA}&\Def\Web{\PosA\multimap\TypeA}\Def\Web{\PosA}\times\Web{\TypeA}\,.
\end{align}
To denote an expression $e$, we must  associate a web with the set of free variables occurring in $e$. 
Given a finite set of variables $\mathcal V$, we define $\Web{\mathcal V}$ by using indexed products:
$
	\Web{\mathcal V}\Def\prod_{v\in\mathcal V}\Web{\Type v}
$.
Metavariables $\FValA, \FValB, \FValC$  denote elements in such webs $\Web{\mathcal V}$. In fact, $\FValA\in\Web{\mathcal V}$ can be seen as a function mapping any variable $v\in\mathcal V$ to an element $\FValA_v\in\Web{\Type v}$. We denote by $\star$ the empty function, which is the only element of $\Web{\emptyset}=\prod_{\emptyset}$.
Given a subset $\mathcal V'\subseteq\mathcal V$, we denote by 
$\Proj{\FValA}{\mathcal V'}$ the restriction of $\FValA$ to $\mathcal V'$, \ie $\Proj{\FValA}{\mathcal V'}\in\Web{\mathcal V'}$. Also, given two disjoint sets of variables $\mathcal V$ and $\mathcal W$ we denote by $\FValA\FPlus\FValB$ the union of an element $\FValA\in\Web{\mathcal V}$ and an element $\FValB\in\Web{\mathcal W}$, \ie $\FValA\FPlus\FValB\in\Web{V\uplus\mathcal W}$ and:
	$(\FValA\FPlus\FValB)_v\Def\FValA_v$ if $v\in\mathcal V$, and $(\FValA\FPlus\FValB)_v\Def\FValB_v$ if $v\in\mathcal W$.
%


\begin{figure}[t]
\begin{align*}
\Sem{v}_{\FValA, b}
&:=\delta_{\FValA, b}\\
\Sem{\Seq{e',e''}}_{\FValA, (b',b'')}
&:=\Sem{e'}_{\FValA\mid_{\FV{e'}},b'}\Sem{e''}_{\FValA\mid_{\FV{e''}},b''}\\
%
\Sem{\LetG{\vec v=e'}{e''}}_{\FValA, b}
&:=\textstyle\sum_{\FValC \in\Web{\vec v}}\Sem{e'}_{\FValA\vert_{\FV{e'}},\FValC}\Sem{e''}_{(\FValA\FPlus\FValC)\vert_{\FV{e''}},b}\\
\Sem{\Abs{\vec v}{e'}}_{\FValA, (b',b'')}
&:=\Sem{e'}_{\Proj{\FValA\FPlus b'}{\FV{e'}}, b''}\\
%
\Sem{\StocA(\vec x)}_{\FValA, b}
&:=\StocA_{\FValA, b}\\
\Sem{\App f{\vec x}}_{\FValA, b}
&:=\delta_{a',\FValA'''}\delta_{a'',b}&\hspace{-30pt}\text{where $\FValA\vert_f=(a', a'')$ and $\FValA\vert_{\vec x}=\FValA'''$}.
\end{align*}
\caption{Denotation of $e$ as a matrix $\Sem e$ giving a linear map from $\Web{\FV{e}}$ to $\Web{\Type e}$, so $\FValA\in\Web{\FV e}$ and $\ValB\in\Web{\Type e}$. In the tuple and $\lambda$ cases, we suppose $b=(b',b'')$.}\label{fig:denotation_terms}
\end{figure} 
An expression $e$ of type $\TypeA$ will be interpreted as a linear map $\Sem e$ from $\RP^{\Web{\FV{e}}}$ to $\RP^{\Web{\TypeA}}$. As such, $\Sem e$ can then be presented as a matrix in $\RP^{\Web{\FV{e}}\times\Web{\TypeA}}$. Fig.~\ref{fig:denotation_terms} recalls the definition of $\Sem e$ by structural induction on $e$. In the case of $\StocA(\vec x)$, we take the liberty to consider an element $\FValA\in\Web{\vec x}$ as actually the tuple of its components, ordered according to the order of the variables in the pattern $\vec x$. Similarly, when we compare $\FValA'''$ with $a'$ in $\Sem{\App f{\vec x}}$.

\begin{example}\label{ex:semantics_let}
Recall the term $\LetTerm$ in Ex.~\ref{ex:bayesian_graph_and_fact_as_let-term}. It is closed and of type $\Bool\otimes\Bool$, hence $\Sem{\LetTerm}$ is a one-row matrix in $\RP^{\Web{\emptyset}\times\Web{\Bool\otimes\Bool}}\simeq\RP^4$. By unfolding the definition in Fig.~\ref{fig:denotation_terms}, we get the following expression for $\Sem\LetTerm_{\star, (b_3, b_6)}$ with $b_3,b_6\in\{\True, \False\}$, where all $b_i$ vary over $\{\True, \False\}$, the index $i$ referring to the corresponding variable in $\LetTerm$: 
\begin{multline}
\label{eq:let_term_semantics}
\textstyle
	\sum_{b_1}(\StocA_1)_{b_1}
		\Biggl(\sum_{b_2}(\StocA_2)_{b_1, b_2}
			\biggl(\sum_{b'_3}(\StocA_3)_{b_2, b'_3}
				\Bigl(\sum_{b_4}(\StocA_4)_{b_4}
					\bigl(\sum_{b_5}(\StocA_5)_{(b_3,b_4), b_5}
\\[-10pt]
\textstyle
						\bigl(\sum_{b'_6}(\StocA_6)_{(b_2, b_5), b'_6}
						\delta_{b'_3, b_3}
						\delta_{b'_6, b_6}
						\bigr)
					\bigr)
				\Bigr)
			\biggr)
		\Biggr)						
\,.
\end{multline}
Expression~\eqref{eq:let_term_semantics} describes a way of computing $\Sem\LetTerm$ in a number of basic operations which is of order $2^3$ terms for each possible $2^2$ values of $b_3,b_6$.

For a more involved example, let us consider the let-term $\LetTerm'$ in line (L8) of Fig.~\ref{fig:example_reduction}, which is the result of the elimination of the variables $(x_1,x_2)$. We first calculate the semantics $\Sem{e_2}$ of the sub-expression keeping local $(x_1,x_2)$. Notice that  $e_2$ is a closed expression of type $\Bool\otimes(\Bool\multimap\Bool)$, so consider $b_3\in\Web{\Bool}$ and $(b_f,b'_f)\in\Web{\Bool\multimap\Bool}$, we have (after some simplification of $\delta$'s):
\begin{equation}
\label{eq:let_term_e_semantics}
\textstyle
\Sem{e_2}_{\star, (b_3, (b_f, b'_f))}=
	\sum_{b_2}
		\Bigl(
			\sum_{b_1}
			(\StocA_1)_{b_1}
			(\StocA_2)_{b_1,b_2}
		\Bigr)	
		(\StocA_3)_{b_2,b_3}	
		(\StocA_6)_{(b_2,b_f),b'_f}\,.
\end{equation}
We can then associate $\Sem{\LetTerm'}_{\star, (b_3, b_6)}$ with the following algebraic expression:
\begin{equation}
\label{eq:let_term'_semantics}
	\sum_{b_3', (b_f, b'_f)}\!\!\!\!
		\Sem e_{\star, (b_3, (b_f, b'_f))}\!
		\Biggl(\sum_{b_4}
				(\StocA_4)_{b_4}
				\Bigl(
					\sum_{b_5}
					(\StocA_5)_{b_4,b_5}
					\bigl(
						\sum_{b'_6}
						\delta_{b_5, b_f}
						\delta_{b_f',b'_6}					
					\bigr)
					\delta_{b_3',b_3}		
					\delta_{b'_6, b_6}	
					\bigr)\!
				\Bigr)\!\!
		\Biggr)
\end{equation}
Expression~\eqref{eq:let_term'_semantics} reduces to a number of basic operations which is of order $2^2$. By one memoizing the computation of $\Sem e$, Expression~\ref{eq:let_term'_semantics} offers a way of computing the matrix $\Sem{\LetTerm'}$ in a time linear in $2^2\times 2^2$. Indeed, Proposition~\ref{prop:let-rewriting_semantics} guarantees that $\LetTerm$ and $\LetTerm'$ (in fact all let-terms in Fig.~\ref{fig:example_reduction}) have the same denotational semantics: so the computation of $\Sem{\LetTerm'}$ gains a factor of $2$ with respect to \eqref{eq:let_term_semantics}. 
\end{example}

\newcommand\Ptdim[1]{\mathsf{dim}(#1)}
\newcommand\Theight[1]{\mathsf{ht}(#1)}

Let us conclude this subsection by observing that the type of a closed expression allows for computing the total mass of the denotational semantics of that expression.  With any positive type \(P\) we associate its dimension
\(\Ptdim P\in\Nat\) by \(\Ptdim\Bool=2\) and %
\(\Ptdim{P\otimes Q}=\Ptdim P\Ptdim Q\). This means that \(\Ptdim P\)
is the cardinality of \(\Web P\). And with any type \(T\) we associate
its height \(\Theight T\in\Nat\), the definition is: %
\(\Theight P=1\), \(\Theight{P\multimap T}=\Ptdim P\times\Theight
T\) %
and \(\Theight{P\otimes T}=\Theight T\).

\begin{proposition}
\label{prop:totality-mass}
  For any closed expression $e$, one has $\sum_{a\in\Web{\Type e}}\Sem e_{\star, a}=\Theight{\Type e}$.
\end{proposition}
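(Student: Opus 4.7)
I would prove the proposition by structural induction on $e$, working in the richer probabilistic coherence space (PCS) semantics \cite{danosehrhard} referenced in the paper. Each type $T$ is interpreted there as a closed convex cone $\Pcs{T}\subseteq \RP^{\Web T}$, with a distinguished notion of \emph{total} element: for $\Pcs\Bool$ the total elements are the probability distributions on $\{\True,\False\}$; for $\Pcs{P\multimap T'}$ they are the linear maps sending total elements of $\Pcs P$ to total elements of $\Pcs{T'}$; and for $\Pcs{P\otimes T'}$ they are defined analogously using the tensor cone.

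The first ingredient is a purely type-level induction showing that every total $\mu\in\Pcs T$ satisfies $\sum_{a\in\Web T}\mu_a=\Theight T$. For positive $T$ this is the definition of a probability distribution, so $\Theight P=1$. For $T=P\multimap T'$, totality gives that each ``row'' $\mu(a',\cdot)$ is a total element of $\Pcs{T'}$ of mass $\Theight{T'}$ by induction; summing over the $\Ptdim P$ rows yields $\Ptdim P\cdot \Theight{T'}=\Theight{P\multimap T'}$. The case $P\otimes T'$ is analogous, matching the identity $\Theight{P\otimes T'}=\Theight{T'}$ together with the normalisation built into the tensor cone.

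The second and main ingredient is to prove, by structural induction on $e$, the stronger claim that for every expression $e$ of type $T$, and for every assignment of its free variables by total elements of their respective types, the resulting element $\Sem e\in \RP^{\Web T}$ is total in $\Pcs T$. The base cases for variables (the identity, which is total), stochastic matrix applications $\StocA(\vec x)$ (row-stochasticity of $\StocA$), and variable applications $\App f{\vec x}$ are immediate. The inductive cases reduce to standard preservation properties of the PCS interpretation of linear-logic connectives: for $\Seq{e',e''}$ this is preservation under tensor product, which crucially uses the linearity condition of Fig.~\ref{fig:terms_types} ensuring that the arrow free variables of $e'$ and $e''$ are disjoint; for $\Abs{\vec x}{e'}$ it is preservation under currying; and for $\Let{\vec v}{e'}{e''}$ it is preservation under categorical composition.

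The main obstacle will be the let-case when the pattern $\vec v$ contains an arrow variable $f:P\multimap T_f$, since $f$ plays a double role as an output component of $e'$ and as a linearly-used arrow free variable of $e''$. Here one must carefully check that the denotational composition $\sum_{\FValC}\Sem{e'}_{\ldots,\FValC}\Sem{e''}_{\ldots,b}$ in the weighted relational model coincides with the categorical composition of total morphisms in the PCS model; this is precisely guaranteed by the linearity discipline of the type system (Remark~\ref{rk:typing_as_MLL}), which restricts the calculus to a fragment of multiplicative-additive linear logic where cut-elimination preserves totality. Once this total-preservation claim is established, the proposition follows by specialising to closed expressions, whose empty assignment is trivially total.
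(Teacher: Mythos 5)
Your proposal follows essentially the same route as the paper: the paper also lifts the weighted-relational interpretation to probabilistic coherence spaces equipped with a totality structure, proves by induction on types that every total clique of $\Clique{T}$ has mass exactly $\Theight{T}$, and then observes that every expression of $\LetCalculus$ denotes a total element, so the result follows for closed $e$. The only cosmetic difference is that the paper defines totality candidates by biorthogonal closure ($\mathcal T=\mathcal T^{\perp\perp}$) rather than by direct clauses on $\multimap$ and $\otimes$, but the content of both inductions is the same.
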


\begin{example}\label{ex:non_stochastic_matrices}
Take the type $\Bool\otimes\Bool$ of the let-terms $\LetTerm$ and $\LetTerm'$ discussed in Example~\ref{ex:semantics_let}. We have that  $\Theight{\Bool\otimes\Bool}=1$, in accordance with the fact that all closed expressions of that type (such as $\LetTerm$ and $\LetTerm'$) describe joint probability distributions, so are denoted with vectors of total mass $1$. On the contrast, consider the type $\Bool\otimes(\Bool\multimap\Bool)$ of the expression $e_2$ keeping local the variables $x_1$ and $x_2$. We have $\Theight{\Bool\otimes(\Bool\multimap\Bool)}=\Theight{\Bool\multimap\Bool}=2$, which is the expected total mass of a stochastic matrix over booleans. However notice that the type $\Bool\otimes(\Bool\multimap\Bool)$ is subtler than that of a stochastic matrix $\Bool\multimap\Bool$: in fact, by using the isomorphisms discussed in Remark~\ref{rk:ll_fragment}, we have $\Bool\otimes(\Bool\multimap\Bool)\simeq(\Bool\multimap\Bool)\oplus(\Bool\multimap\Bool)$, which is the type of a probabilistic distribution of stochastic matrices.    
\end{example}


\section{Variable Elimination $\VEF$ over Let-Terms Factors}\label{sect:factors}

As mentioned in the Introduction, variable elimination is an iterative procedure transforming sets of factors (one can think of these as originally provided by a Bayesian network). We recall  this procedure, adapting it to our setting---in particular, we start from a set $\Facts{\LetTerm}$ of factors generated by a let-term $\LetTerm$ representing a Bayesian network. 
%
%
%
Subsect.~\ref{subsection:factors} defines factors and the main operations on them (product and summing-out). Subsect.~\ref{subsect:let-terms_sets_factors} shows how to associate a let-term $\LetTerm$ with a set of factors $\Facts{\LetTerm}$ such that from their product one can recover $\Sem\LetTerm$ (Prop.~\ref{prop:factors_as_semantics}). Finally, Subsect.~\ref{subsect:factorisation_let} presents the variable elimination algorithm as a transformation  $\VEF$ over $\Facts{\LetTerm}$ (Def.~\ref{def:VE_factors}) and Prop.~\ref{prop:order_sequence} gives the soundness of the algorithm. This latter result is standard from the literature (see e.g.~\cite{Darwiche2009}), and the contribution of this section is the definition of $\Facts{\LetTerm}$ which is essential to link this variable elimination $\VEF$ on factors  to our main contribution given in the next section: the variable elimination $\VEL$ as a term-rewriting process.

\subsection{Factors}
\label{subsection:factors}
%


\begin{definition}[Factor]\label{def:factor}
A \textdef{factor} $\FactA$  is a pair $(\FactVar\FactA, \FactFun\FactA)$ of a finite set $\FactVar\FactA$ of typed variables and a function $\FactFun\FactA$ from the web $\Web{\FactVar\FactA}$ to $\mathbb R_{\geq 0}$.

We will shorten the notation $\FactFun\FactA$ by writing just $\FactA$ when it is clear from the context that we are considering the function associated with a factor and not the whole pair $(\FactVar\FactA, \FactFun\FactA)$. 
We often consider $\FactFun\FactA$ as a vector indexed by the elements of its domain, so that ${\FactA}_{\FValA}$ stands for $\FactFun\FactA(\FValA)$, for every $\FValA\in\Web{\FactVar\FactA}$.

The \textdef{degree of $\FactA$}, written $\Deg_\FactA$, is the cardinality of $\FactVar\FactA$, and the  \textdef{base of $\FactA$}, written $\Base_\FactA$, is the maximal cardinality of $\Web{v}$ for every $v\in\FactVar\FactA$. Notice that $\Base_\FactA^{\Deg_\FactA}$ is an upper bound to the dimension of $\FactFun\FactA$, \ie the cardinality of $\Web{\FactVar\FactA}$. 
\end{definition}

\begin{example}\label{ex:factorM5}
Sect.~\ref{subsect:let-terms_sets_factors} formalises how to associate the definitions of a let-expression with factors. Let us anticipate a bit and see as an example the factor $\FactA$ that will be associated with the definition $x_5=\StocA_5(x_3,x_4)$ in the let-term in Ex.~\ref{ex:bayesian_graph_and_fact_as_let-term}. We have $\FactVar{\FactA}=\{x_3,x_4,x_5\}$ and for every $\ValA, \ValB, \ValC\in\Web{\Bool}$ we have $\FactFun\FactA(\ValA,\ValB,\ValC)=(\StocA_5)_{(\ValA,\ValB),\ValC}$.  Notice that $\FactA$ forgets the input/output (or rows/columns) distinction carried by the indexes of the stochastic matrix $\StocA_5$. 
\end{example}

A factor  $(\FactVar\FactA, \FactFun\FactA)$ involves two ``levels'' of indexing: one is given by the variables $v_1, v_2, \dots\in\FactVar\FactA$ tagging the different sets of the product $\Web{\FactVar\FactA}\Def\prod_{v\in\FactVar\FactA}\Web{v}$, and the other ``level'' is given by $\FValA, \FValB,\dots\in\Web{\FactVar\FactA}$ labelling the different components of the vector $\FactFun\FactA$, which we call web elements. 

Recall that the set of variables $\FactVar\FactA$ endows $\Web{\FactVar\FactA}$ with a cartesian structure, so that we can project a web element $\FValA\in\Web{\FactVar\FactA}$ on some subset of variables $\mathcal V'\subseteq\FactVar\FactA$ by writing $\Proj{\FValA}{\mathcal V'}$, as well as we can pair two web elements $\FValA\FPlus\FValA'$ whenever $\FValA\in\Web{\FactVar\FactA}$ and $\FValA'\in\Web{\FactVar\FactA'}$ and $\FactVar\FactA\cap\FactVar\FactA'=\emptyset$. 

\begin{figure}[t]
\begin{align*}
	\textstyle
	\FactVar{\FSum\CollA\FactA}
	& \Def \FactVar{\FactA}\setminus\CollA,
	&\FSum\CollA\FactA_{\FValA}
	&\textstyle\Def\sum_{\FValB\in\Val{\CollA}\cap\Val{\FactVar\FactA}}\FactA_{\FValA\FPlus\FValB}\\
	\FactVar{\FactA\FProd\FactB}
	& \Def \FactVar\FactA\cup\FactVar\FactB,
	&(\FactA\FProd\FactB)_{\FValC}
	&\!\Def\! \FactA_{\Proj{\FValC}{\FactVar\FactA}}\!\FactB_{\Proj{\FValC}{\FactVar\FactB}}
\end{align*}
\caption{\emph{Summing-out $\FSum\CollA\FactA$} of a set of variables $\CollA$ in a factor $\FactA$ and \emph{product $\FactA\FProd\FactB$} of two factors $\FactA$, $\FactB$. We suppose $\FValA\in\Web{\FactVar{\FactA}\setminus\CollA}$ and $\FValC\in\Web{\FactVar{\FactA\FProd\FactB}}$.}
\label{fig:summing-out_and_product}
\end{figure}
Fig.~\ref{fig:summing-out_and_product} defines the two main operations on factors: summing-out and binary products. We illustrate them with some examples and remarks. 

\begin{example}
By recalling the factor $\FactA$ of Ex.~\ref{ex:factorM5}, we have that $\FactVar{\FSum{\{x_3\}}\FactA}=\{x_4,x_5\}$ and for every $\ValA,\ValB\in\Web{\Bool}$, $\FSum{\{x_3\}}\FactA_{(\ValA,\ValB)}=\StocA_{(\True,\ValA),\ValB}+\StocA_{(\False,\ValA),\ValB}$. In fact, we can do weirder summing-out, as for example $\FactVar{\FSum{\{x_3, x_5\}}\FactA}=\{x_4\}$, so that $\FSum{\{x_3, x_5\}}\FactA_{\ValA}=\StocA_{(\True,\ValA),\True}+\StocA_{(\True,\ValA),\False}+\StocA_{(\False,\ValA),\True}+\StocA_{(\False,\ValA),\False}$ may be a scalar greater than one, no more representing a probability.
\end{example}


With the notations of Fig.~\ref{fig:summing-out_and_product}, if $\FactA$ is a join distribution over $\Web{\FactVar{\FactA}}$, the summing out of $\CollA$ in $\FactA$ gives the marginal distribution over $\Web{\FactVar{\FactA}\setminus\CollA}$. In the degenerate case where $\FactVar{\FactA}\subseteq\CollA$, then $\FactVar{\FSum\CollA\FactA}$ is the empty set and $\FSum\CollA\FactA_{\star}$ is the total mass of $\FactA$, \ie $\sum_{\FValB\in\Val{\FactVar{\FactA}}}\FactA_{\FValB}$.


\begin{example}\label{ex:Fprod_tensor}
Recall the factor $\FactA=(\{x_3,x_4,x_5\}, (\ValA,\ValB,\ValC\mapsto (\StocA_5)_{(\ValA,\ValB),\ValC}))$ of Ex.~\ref{ex:factorM5}, representing the definition $x_5=\StocA_5(x_3,x_4)$ in the let-term in Ex.\ref{ex:bayesian_graph_and_fact_as_let-term}, 
and consider a factor $\FactB=(\{x_3,x_4\}, (\ValA,\ValB\mapsto \StocA'_{\ValA,\ValB}))$
representing some definition $x_4=\StocA'(x_3)$. 
Then, $\FactVar{{\FactA\FProd\FactB}}=\{x_3,x_4,x_5\}$ and for every $\ValA,\ValB,\ValC\in\Web{\Bool}$, we have 
$\FactFun{\FactA\FProd\FactB}(\ValA,\ValB,\ValC)=(\StocA_5)_{(\ValA,\ValB),\ValC}\StocA'_{\ValA,\ValB}$. 
Notice that the factor product $\FactA\FProd\FactB$ is \emph{not} the tensor product $\otimes$ of the vectors $\FactFun{\FactA}$ and $\FactFun{\FactB}$, as variables can be shared between the different factors. In fact, the dimension of $\FactFun{\FactA}\otimes\FactFun{\FactB}$ is $2^3\times 2^2=2^{5}$, while $\FactFun{\FactA\FProd\FactB}$ is $2^3$. 
\end{example}

Notice that the computation of the sum out $\FSum\CollA\FactA$ is in $O\bigl(\Base_\FactA^{\Deg_\FactA}\bigr)$, as $\Base_\FactA^{\Deg_\FactA}$ is an upper bound to the cardinality of $\Val{\FactVar\FactA}$ which  gives the number of basic operations needed to define $\FSum\CollA\FactA$.
Analogously, the computation of $\FactA\FProd\FactB$ is in $O\bigl(\Base_{\FactA\FProd\FactB}^{\Deg_{\FactA\FProd\FactB}}\bigr)=O\bigl(\max(\Base_{\FactA},\Base_\FactB)^{\Deg_{\FactA}+\Deg_{\FactB}}\bigr)$, as $\Base_{\FactA\FProd\FactB}^{\Deg_{\FactA\FProd\FactB}}$ is an upper bound to the cardinality of $\Val{\FactA\FProd\FactB}$, which  gives the number of basic operations needed to define $\FactA\FProd\FactB$.

%
\begin{proposition}\label{prop:factor_prod_ass_distr}
Factor product is associative and commutative, with neutral element the empty factor $(\emptyset, 1)$. Moreover:
\begin{enumerate}
\item \label{fact_prop_prpr}
$\FSum{\CollA}{\FSum{\CollB}{\FactA}}=\FSum{\CollA\cup\CollB}{\FactA}$;
\item \label{fact_prop_prtimes}
$\FSum{\CollA}{\FactA\FProd\FactB}=(\FSum\CollA\FactA)\FProd\FactB$, whenever $\FactVar{\FactB}\cap\CollA=\emptyset$.
\end{enumerate}
%
\end{proposition}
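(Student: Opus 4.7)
The plan is to unfold the definitions from Fig.~\ref{fig:summing-out_and_product} and verify each claim by straightforward bookkeeping on the index sets, relying on associativity/commutativity of multiplication in $\RP$ and on set-theoretic identities between the relevant variable sets.

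First I would dispatch the associativity, commutativity, and neutral-element claims for $\FProd$. On the variable side $\FactVar{(\FactA\FProd\FactB)\FProd\FactC}=(\FactVar\FactA\cup\FactVar\FactB)\cup\FactVar\FactC=\FactVar\FactA\cup(\FactVar\FactB\cup\FactVar\FactC)$, and symmetrically $\FactVar{\FactA\FProd\FactB}=\FactVar{\FactB\FProd\FactA}$; on the scalar side the equality $\FactA_{\Proj{\FValC}{\FactVar\FactA}}\FactB_{\Proj{\FValC}{\FactVar\FactB}}=\FactB_{\Proj{\FValC}{\FactVar\FactB}}\FactA_{\Proj{\FValC}{\FactVar\FactA}}$ and the analogous associativity identity come directly from the commutative monoid structure of $(\RP,\cdot,1)$. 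For the neutral element $(\emptyset,1)$ one simply notes $\FactVar\FactA\cup\emptyset=\FactVar\FactA$ and $\FactA_{\Proj\FValC{\FactVar\FactA}}\cdot 1=\FactA_{\Proj\FValC{\FactVar\FactA}}$.

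For item~(1), I would unfold the left-hand side, noting that $\FactVar{\FSum\CollB\FactA}=\FactVar\FactA\setminus\CollB$, giving
\[
\FSum\CollA(\FSum\CollB\FactA)_{\FValA}=\sum_{\FValB\in\Web{\CollA\cap(\FactVar\FactA\setminus\CollB)}}\sum_{\FValC\in\Web{\CollB\cap\FactVar\FactA}}\FactA_{\FValA\FPlus\FValB\FPlus\FValC}.
\]
The two sets of summation variables $\CollA\cap(\FactVar\FactA\setminus\CollB)$ and $\CollB\cap\FactVar\FactA$ are disjoint and their union equals $(\CollA\cup\CollB)\cap\FactVar\FactA$, so using the canonical isomorphism $\Web{U\uplus V}\cong\Web U\times\Web V$ the double sum collapses into a single sum over $\Web{(\CollA\cup\CollB)\cap\FactVar\FactA}$, which is exactly the definition of $\FSum{\CollA\cup\CollB}\FactA_{\FValA}$. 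The variable sets also match: $(\FactVar\FactA\setminus\CollB)\setminus\CollA=\FactVar\FactA\setminus(\CollA\cup\CollB)$.

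For item~(2), the key observation is that the hypothesis $\FactVar\FactB\cap\CollA=\emptyset$ lets us pull $\FactB$ out of the summation. Concretely, unfolding yields
\[
\FSum\CollA(\FactA\FProd\FactB)_{\FValC}=\sum_{\FValD\in\Web{\CollA\cap\FactVar\FactA}}\FactA_{\Proj{\FValC\FPlus\FValD}{\FactVar\FactA}}\,\FactB_{\Proj{\FValC\FPlus\FValD}{\FactVar\FactB}},
\]
where I have already used $\CollA\cap(\FactVar\FactA\cup\FactVar\FactB)=\CollA\cap\FactVar\FactA$. Since $\FValD$ lives on variables disjoint from $\FactVar\FactB$, we have $\Proj{\FValC\FPlus\FValD}{\FactVar\FactB}=\Proj{\FValC}{\FactVar\FactB}$, so $\FactB_{\Proj\FValC{\FactVar\FactB}}$ factors out, leaving $\sum_{\FValD}\FactA_{\Proj\FValC{\FactVar\FactA\setminus\CollA}\FPlus\FValD}=(\FSum\CollA\FactA)_{\Proj\FValC{\FactVar\FactA\setminus\CollA}}$, which is precisely $((\FSum\CollA\FactA)\FProd\FactB)_{\FValC}$. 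The matching of the variable sets $\FactVar\FactA\cup\FactVar\FactB\setminus\CollA=(\FactVar\FactA\setminus\CollA)\cup\FactVar\FactB$ again uses $\FactVar\FactB\cap\CollA=\emptyset$.

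The only real obstacle is notational care in tracking the variable sets through the projections and the $\FPlus$ operation; there is no substantive mathematical difficulty, as every step reduces to commuting finite sums and to elementary identities between the subsets of $\FactVar\FactA\cup\FactVar\FactB$ involved.
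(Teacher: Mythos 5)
Your proof is correct and is exactly the routine verification the paper has in mind: the paper states this proposition without proof, treating it as immediate from the definitions in Fig.~\ref{fig:summing-out_and_product}, and your unfolding (disjoint decomposition of the summation index set for item~1, pulling the $\FValD$-independent factor $\FactB_{\Proj{\FValC}{\FactVar\FactB}}$ out of the sum for item~2) is the standard argument. No gaps.
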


\begin{definition}[$I$-factor product]\label{def:nary_factor_product}
Let $I$ be a finite set. Given a collection of factors $(\FactA_i)_{i\in I}$, we define their \textdef{factor product} as the factor $\BigFProd_{i\in I}\FactA_i\Def\FactA_{i_1}\odot\dots\odot\FactA_{i_n}$, for some enumeration of $I$. This is well-defined independently from the chosen enumeration because of Prop.~\ref{prop:factor_prod_ass_distr}.
\end{definition}
%
By iterating our remark on the complexity for computing binary products, we have that the computation of the whole vector $\FactVar{\BigFProd_{i\in I}\FactA_i}$ is in $O\bigl(\Size(I)\Base_{\BigFProd_{i\in I}\FactA_i}^{\Deg_{\BigFProd_{i\in I}\FactA_i}}\bigr)$, where we recall $\Size(I)$ denotes the cardinality of $I$.

\subsection{Let-terms as Sets of Factors}
\label{subsect:let-terms_sets_factors}


Let us introduce some convenient notation.  Metavariables $\FactSetA, \FactSetB, \FactSetC$ will range over finite sets of factors. We lift the notation for factors  to sets of factors: we write $\FactVar{\FactSetA}$ for the union $\bigcup_{\FactA\in\FactSetA}\FactVar{\FactA}$, so we can speak about a variable of $\FactSetA$ meaning a variable of one (or more) factor in $\FactSetA$; hence, the degree $\Deg_{\FactSetA}$ (resp.~the base $\Base_{\FactSetA}$) of $\FactSetA$ is the cardinality of  $\FactVar{\FactSetA}$ (resp.~the maximal cardinality of a set $\Web{v}$ for $v\in\FactVar{\FactSetA}$). Also, the operations of the sum-out and product with a factor are lifted component-wise, \ie  $\FSum\CollA{\FactSetA}\Def\{\FSum\CollA{\FactA}\st\FactA\in\FactSetA\}$ and $\FactB\FProd\FactSetA\Def\{\FactB\FProd\FactA\st\FactA\in\FactSetA\}$. In contrast, the $I$-factor product $\BigFProd\FactSetA$ returns the single factor result of the products of all factors in $\FactSetA$, according to Def.~\ref{def:nary_factor_product}.

Given a set of variables $\CollA$, it will be convenient to partition $\FactSetA$ into $\FactSetA_{\CollA}$ and $\FactSetA_{\neg\CollA}$, depending on whether a factor in $\FactSetA$ has common labels with $\CollA$ or not, \ie:
\begin{align}
\FactSetA_{\CollA}&\Def\{\FactA\in\FactSetA \st \FactVar\FactA\cap\CollA\neq\emptyset\}\,,&
\FactSetA_{\neg \CollA}&\Def\{\FactA\in\FactSetA \st \FactVar\FactA\cap\CollA=\emptyset\}\,.
\end{align}
Notice that $\FactSetA=\FactSetA_{\CollA}\uplus\FactSetA_{\neg\CollA}$, as well as $\FactVar{\FactSetA}\cap\CollA\subseteq\FactVar{\FactSetA_{\CollA}}$ and $\FactVar{\FactSetA_{\neg\CollA}}\subseteq\FactVar{\FactSetA}\setminus\CollA$. 
In the case of singletons $\{v\}$, we can simply write $\FactSetA_{v}$ and $\FactSetA_{\neg v}$. 

\begin{definition}[$\Fact{\vec v = e}$]\label{def:factor_of_def}
Given a pattern $\vec v$ and expression $e$ s.t.~$\FV{\vec v} \cap \FV{e}=\emptyset$, we define $\Fact{\vec v = e}$, by: $\FactVar{\Fact{\vec v = e}}\Def\FV{e}\uplus\FV{\vec v}$ and 
$\FactFun{\Fact{\vec v = e}}\Def \FValA\FPlus\FValB \mapsto \Sem{e}_{\FValA, \FValB}$, for $\FValA\in\Web{\FV{e}}$, $\FValB\in\Web{\FV{\vec v}}$.
\end{definition}
In a definition $\vec v = e$, $e$'s free variables can be seen as input channels, while $\vec v$'s variables as output channels. This is also reflected in the matrix $\Sem{e}$ where rows are associated with inputs and columns with outputs. In contrast, a factor forgets such a distinction, mixing all indexes in a common family. 

Let us warn that Def.~\ref{def:factor_of_def} as well as the next Def.~\ref{def:factor_of_let_term} are not compatible with renaming of bound variables (a.k.a.~$\alpha$-equivalence), as they use bound variables as names for the variables of factors. Of course, one can define an equivalence of factors by renaming their variables, but this must be done consistently on all factors taken in consideration.

\begin{definition}[$\Facts{\LetTerm}$]\label{def:factor_of_let_term}
Given a let-term $\LetTerm$ with output pattern $\vec w$, we define the set of factors $\Facts{\LetTerm}$, by induction on the number of definitions of $\LetTerm$:
\begin{align*}
\Facts{\vec w}
&\Def
(\FV{\vec w}, \vec\ValA\mapsto 1)
\\
\Facts{\LetG{\vec v=e}{\LetTerm}}
&\Def
\begin{cases}
	\left\{\FSum{f}{\Fact{\vec v = e}\!\FProd\!\Facts{\LetTerm}_{f}}\right\}\!\uplus\!\Facts{\LetTerm}_{\neg f}
	&\text{if $f\!\in\!\FV[a]{\vec v}\!\setminus\!\FV{\vec w},$}\\[10pt]
	\{\Fact{\vec v = e}\}\uplus\Facts{\LetTerm}	
	&\text{otherwise}.
\end{cases}
\end{align*}


\end{definition}
\noindent
The definition of $\Facts{\LetG{\vec v=e}{\LetTerm}}$ is justified by the linear status of the arrow variables, assured by the typing system. In a let-term $\LetG{\Seq{\vec x, f}=e}{\LetTerm}$, we have two disjoint cases: either the arrow variable $f$ occurs free exactly once in one of the definitions of $\LetTerm$, or $f$ is free in the output $\vec w$ of $\LetTerm$. In the former case, $\Facts{\LetTerm}_{f}$ is a singleton $\{\FactA\}$, and we can sum-out $f$ once multiplied $\Fact{\Seq{\vec x, f} = e}$ with $\FactA$, as no other factor will use $f$. In the latter case, we keep $f$ in the family of the factors associated with the let-term, as this variable will appear in its output. 

%

\begin{example}\label{ex:factors_let_terms}
Let us consider  
Fig.~\ref{fig:example_reduction}. The let-term $\LetTerm$ in (L1) has exactly $7$ factors, the $1$-constant factor associated with the output and one factor for each definition, carrying the corresponding stochastic matrix $\StocA_i$. For a less obvious example, consider the term $\LetTerm'$ in (L8).
The set $\Facts{\LetTerm'}$ has 4 factors: one for the output, two associated with the definitions of, respectively, $x_4$ and $x_5$ and the last one defined as $\FSum{f}{\Fact{x_3, f = e_2}\FProd\Fact{x_6 = f x_5}}$. Notice that $\Fact{x_6 = f x_5}_{\FValA}=1$ if $\FValA_f = (\FValA_{x_5}, \FValA_{x_6})$ otherwise  $\Fact{x_6 = f x_5}_{\FValA}=0$. Therefore the sum-out on $f$ produces a sum of only one term, whenever fixed $b_5\in\Web{x_5}$ and $b_6\in\Web{x_6}$.

Notice also that all let-terms from line (L12) have a set of factors of cardinality two, although they may have more than one definition. 
\end{example}

The following proposition shows how to recover the quantitative semantics $\Sem{\LetTerm}$  of a let-term $\LetTerm$ out of the set of factors $\Facts{\LetTerm}$: take the product of all factors in $\Facts{\LetTerm}$ and sum-out all variables that are not free in $\LetTerm$ nor occurs in the output. The proposition is proven by induction on $\LetTerm$.  See Appendix~\ref{proof:factors}.
\begin{proposition}\label{prop:factors_as_semantics}
Consider a let-term $\LetTerm$ with output $\vec v$. Let $\mathcal F=\FactVar{\Facts\LetTerm}$, and consider $\FValA \in\Web{\FV{\LetTerm}}$, $\FValB\in\Web{\FV{\vec v}}$. If $\Proj{\FValA}{\FV{\LetTerm}\cap\FV{\vec v}}=\Proj{\FValB}{\FV{\LetTerm}\cap\FV{\vec v}}$, with $\FValA' = \Proj\FValA{\FV{\LetTerm}\setminus \FV{\vec v}}$,
$\FValB' = \Proj\FValB{\FV{\vec v}\setminus \FV{\LetTerm}}$, and
$\FValC=\Proj\FValA{\FV{\LetTerm}\cap\FV{\vec v}}\!\!\!=\!\Proj\FValB{\FV{\LetTerm}\cap\FV{\vec v}}$, we have 
$\Sem{\LetTerm}_{\FValA,\FValB} =\FSum{\mathcal F\setminus(\FV\LetTerm\cup\FV{\vec v})}{\BigFProd\Facts{\LetTerm}}(\FValA' \FPlus \FValC \FPlus \FValB')$. Otherwise $\Sem{\LetTerm}_{\FValA,\FValB} =0$.
In particular, if $\LetTerm$ is closed, then  $\Sem{\LetTerm}_{\star, \FValB} = \FSum{\mathcal F\setminus\vec v}{\BigFProd\Facts{\LetTerm}}(\FValB)$.
\end{proposition}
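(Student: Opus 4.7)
The plan is to proceed by induction on the number of definitions in $\LetTerm$. In the base case $\LetTerm = \vec v$, we have $\FV{\LetTerm} = \FV{\vec v}$, so the compatibility condition collapses to $\FValA = \FValB$. On the semantic side, $\Sem{\vec v}_{\FValA,\FValB} = \delta_{\FValA,\FValB}$ by Fig.~\ref{fig:denotation_terms}; on the factor side, $\Facts{\vec v}$ consists of a single constant-$1$ factor on $\FV{\vec v}$ with no variables left to sum out, so both sides agree on $1$ (resp.~$0$) in the matched (resp.~unmatched) case.

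For the inductive step, with $\LetTerm = \LetG{\vec v_1 = e_1}{\LetTerm_1}$, I would start by unfolding Fig.~\ref{fig:denotation_terms}:
\[
\Sem{\LetTerm}_{\FValA,\FValB} = \textstyle\sum_{\overline{d}\in\Web{\vec v_1}} \Sem{e_1}_{\Proj{\FValA}{\FV{e_1}},\, \overline{d}}\,\Sem{\LetTerm_1}_{\Proj{(\FValA\FPlus\overline{d})}{\FV{\LetTerm_1}},\,\FValB}.
\]
By Def.~\ref{def:factor_of_def}, the first matrix entry is precisely $\Fact{\vec v_1 = e_1}_{\Proj{\FValA}{\FV{e_1}}\FPlus\overline{d}}$, and applying the induction hypothesis to $\LetTerm_1$ (which has one fewer definition, but the same output $\vec v$) rewrites the second entry as a summed-out product of $\Facts{\LetTerm_1}$. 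Combining, the outer sum over $\overline d\in\Web{\vec v_1}$ becomes a sum-out over the variables of $\vec v_1$, and Prop.~\ref{prop:factor_prod_ass_distr}(\ref{fact_prop_prpr}) lets me merge this with the inner sum-out into a single sum-out over $\mathcal F \setminus (\FV{\LetTerm}\cup \FV{\vec v})$.

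To conclude I would split on the two cases of Def.~\ref{def:factor_of_let_term}. In the ``no arrow elimination'' subcase, $\Facts{\LetTerm} = \{\Fact{\vec v_1 = e_1}\} \uplus \Facts{\LetTerm_1}$, so $\BigFProd\Facts{\LetTerm} = \Fact{\vec v_1 = e_1}\FProd\BigFProd\Facts{\LetTerm_1}$ and the matching of sum-out ranges is immediate. In the ``arrow elimination'' subcase, where $\vec v_1 = \Seq{\vec x, f}$ with $f\notin\FV{\vec v}$, the linearity enforced by the typing rules of Fig.~\ref{fig:terms_types} guarantees that $f$ appears in exactly one factor of $\Facts{\LetTerm_1}$; Prop.~\ref{prop:factor_prod_ass_distr}(\ref{fact_prop_prtimes}) then lets me push the sum-out of $f$ inside the product, recovering the locally-summed factor $\FSum{f}{\Fact{\vec v_1 = e_1}\FProd\Facts{\LetTerm_1}_{f}}$ listed in $\Facts{\LetTerm}$, while the positive variables in $\vec x$ remain to be summed out with the rest. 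The main obstacle will be the bureaucratic bookkeeping of projections and disjointness hypotheses---tracking each variable as free in $\LetTerm$, free in $\vec v$, or bound---and justifying the side condition of Prop.~\ref{prop:factor_prod_ass_distr}(\ref{fact_prop_prtimes}) from the linearity discipline of the type system, which is precisely the content that makes the arrow-elimination clause of Def.~\ref{def:factor_of_let_term} well-formed.
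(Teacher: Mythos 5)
Your proposal follows essentially the same route as the paper's proof: induction on the number of definitions, unfolding the let-semantics into a sum over $\Web{\vec v_1}$, identifying the first factor via Def.~\ref{def:factor_of_def}, applying the inductive hypothesis to $\LetTerm_1$, and merging/pushing sum-outs with Prop.~\ref{prop:factor_prod_ass_distr} while splitting on whether an arrow variable of $\vec v_1$ is eliminated (the paper further separates the subcase where the arrow variable is an output, and where defined variables of $\vec v_1$ occur in the output and are therefore fixed by $\FValB$ rather than summed out). The only point you gloss over is propagating the ``otherwise $\Sem{\LetTerm}_{\FValA,\FValB}=0$'' clause through the inductive step, for which the paper establishes the inclusion $\FV{\LetTerm}\cap\FV{\vec v}\subseteq\FV{\LetTerm'}\cap\FV{\vec v}$ so that a mismatch at $\LetTerm$ forces a mismatch at $\LetTerm'$; this is routine but should be stated.
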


\subsection{Variable Elimination $\VEF$ over Sets of Factors}\label{subsect:factorisation_let}

We recall the definition of the variable elimination algorithm as acting on sets of factors. Prop.~\ref{prop:order_sequence} states its soundness, which is a standard result that we revisit here just to fix our notation. We refer to \cite[ch.6]{Darwiche2009} for more details. 

\begin{definition}[Variable elimination over sets of factors]\label{def:VE_factors}
The elimination of a variable $v$ in a set of factors $\FactSetA$ is the set of factors $\VEF(\FactSetA, \LabelA)$ defined by:
\begin{align}
	\VEF(\FactSetA, v)&\Def\{ \textstyle\sum_v\BigFProd\FactSetA_{v}\}\uplus\FactSetA_{\neg v}
\end{align}
This definition extends to finite sequences of variables $\Seq{v_1,\dots, v_h}$ by iteration:
\begin{align}
	\label{eq:ve_iterative}
	\VEF(\FactSetA, \Seq{v_1,\dots, v_h})&\Def \VEF(\VEF(\FactSetA, v_1), \Seq{v_2,\dots, v_h})
\end{align}
if $h>0$, otherwise $\VEF(\FactSetA, \Seq{})=\FactSetA$.
\end{definition}

\begin{example}\label{ex:VE_factors}
Recall the sets of factors $\Facts\LetTerm$ and $\Facts{\LetTerm'}$ of Ex.~\ref{ex:factors_let_terms}. An easy computation gives: $\Facts{\LetTerm'}=\VEF(\Facts\LetTerm, (x_1,x_2))$.
\end{example}

The soundness of  $\VEF(\FactSetA, \Seq{v_1,\dots, v_h})$ follows by induction on  the length $h$ of the sequence $\Seq{v_1,\dots, v_h}$, using  Prop.~\ref{prop:factor_prod_ass_distr}  (see Appendix \ref{proof:factors}):
\begin{proposition}\label{prop:order_sequence}
We have: $\BigFProd\VEF(\FactSetA, \Seq{v_1,\dots, v_h}) = \sum_{\{v_1,\dots, v_h\}}\BigFProd\FactSetA$.
In particular, $\FactVar{\VEF(\FactSetA, \Seq{v_1,\dots, v_h})} = \FactVar{\FactSetA}\setminus\{v_1,\dots, v_h\}$.
\end{proposition}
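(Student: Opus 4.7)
My plan is to prove the statement by induction on the length $h$ of the elimination sequence, with the main leverage coming from Proposition~\ref{prop:factor_prod_ass_distr}. The claim about $\FactVar{\VEF(\FactSetA, \Seq{v_1,\dots,v_h})}$ will be obtained as a by-product.

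\textbf{Base case ($h=0$).} By the second line of Def.~\ref{def:VE_factors}, $\VEF(\FactSetA, \Seq{\,}) = \FactSetA$, and $\sum_{\emptyset}\BigFProd\FactSetA = \BigFProd\FactSetA$ by the convention on the empty sum-out, so both identities hold trivially.

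\textbf{Key case ($h=1$).} I will first establish the result for a single variable $v$. By definition, $\VEF(\FactSetA, v) = \{\,\sum_v\BigFProd\FactSetA_v\,\}\uplus\FactSetA_{\neg v}$, hence
\begin{equation*}
\BigFProd \VEF(\FactSetA, v) \;=\; \bigl(\textstyle\sum_v \BigFProd\FactSetA_v\bigr)\FProd \BigFProd\FactSetA_{\neg v}.
\end{equation*}
By the very definition of $\FactSetA_{\neg v}$, no factor in $\FactSetA_{\neg v}$ mentions $v$, so $v\notin\FactVar{\BigFProd\FactSetA_{\neg v}}$. Applying Prop.~\ref{prop:factor_prod_ass_distr}(\ref{fact_prop_prtimes}) lets me pull the $\FProd \BigFProd\FactSetA_{\neg v}$ inside the sum, yielding $\sum_v(\BigFProd\FactSetA_v \FProd \BigFProd\FactSetA_{\neg v}) = \sum_v\BigFProd\FactSetA$, using associativity/commutativity of $\FProd$ and the partition $\FactSetA = \FactSetA_v\uplus\FactSetA_{\neg v}$. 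For the variable set, combining the sum-out clause of Fig.~\ref{fig:summing-out_and_product} with the fact that $v\notin\FactVar{\FactSetA_{\neg v}}$ gives $\FactVar{\VEF(\FactSetA,v)} = \FactVar{\FactSetA}\setminus\{v\}$.

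\textbf{Inductive step.} Assume the result for sequences of length $h$, and consider $\Seq{v_1,\dots,v_{h+1}}$. Unfolding \eqref{eq:ve_iterative} and applying the induction hypothesis to $\VEF(\FactSetA, v_1)$ yields
\begin{equation*}
\BigFProd\VEF(\FactSetA, \Seq{v_1,\dots,v_{h+1}}) \;=\; \textstyle\sum_{\{v_2,\dots,v_{h+1}\}} \BigFProd\VEF(\FactSetA, v_1) \;=\; \sum_{\{v_2,\dots,v_{h+1}\}}\sum_{v_1}\BigFProd\FactSetA,
\end{equation*}
where the second step uses the $h=1$ case above. An application of Prop.~\ref{prop:factor_prod_ass_distr}(\ref{fact_prop_prpr}) collapses the two nested sum-outs into $\sum_{\{v_1,\dots,v_{h+1}\}}\BigFProd\FactSetA$, which is the desired identity. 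The variable-set claim follows analogously: $\FactVar{\VEF(\FactSetA,\Seq{v_1,\dots,v_{h+1}})} = \FactVar{\VEF(\FactSetA,v_1)}\setminus\{v_2,\dots,v_{h+1}\}$ by IH, which by the $h=1$ case equals $\FactVar{\FactSetA}\setminus\{v_1,\dots,v_{h+1}\}$.

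\textbf{Main obstacle.} The proof is essentially a bookkeeping exercise; the only subtlety is the side condition of Prop.~\ref{prop:factor_prod_ass_distr}(\ref{fact_prop_prtimes}), namely that $v\notin\FactVar{\FactSetA_{\neg v}}$, which is ensured by the very definition of the partition $\FactSetA = \FactSetA_v\uplus\FactSetA_{\neg v}$. One should be mindful that the two pieces of the statement (the factor product and the variable set) have to be carried along together in the induction, because Prop.~\ref{prop:factor_prod_ass_distr}(\ref{fact_prop_prtimes}) is applied at each step and requires knowing which variables occur in the intermediate factor sets.
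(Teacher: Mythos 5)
Your proof is correct and follows essentially the same route as the paper's: induction on $h$, unfolding $\VEF(\FactSetA,v_1)$, pushing the product of $\FactSetA_{\neg v_1}$ inside the sum-out via Prop.~\ref{prop:factor_prod_ass_distr}(\ref{fact_prop_prtimes}) (whose side condition holds by definition of the partition), and collapsing nested sum-outs via Prop.~\ref{prop:factor_prod_ass_distr}(\ref{fact_prop_prpr}). Isolating the $h=1$ case as a separate step and carrying the variable-set claim explicitly through the induction are only presentational differences from the paper's inline computation.
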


The above soundness states that the $\VEF$ transformation corresponds to summing-out the variables to eliminate from the product of the factors taken into consideration. This means that if the factors in $\Gamma$ represent random variables, then $\BigFProd\VEF(\FactSetA, \Seq{v_1,\dots, v_h})$ computes the join distribution over the variables in $\FactVar\Gamma\setminus\Seq{v_1,\dots, v_h}$.

%


\section{Variable Elimination $\VEL$ as Let-Term Rewriting}
\label{sect.VE_as_let_rewriting}

This section contains our main contribution, expressing the variable elimination algorithm syntactically, as a rewriting of let-terms, transforming the ``eliminated'' variables from global variables (\ie defined by a definition of a let-term and accessible to the following definitions), into local variables (\ie private to some subexpression in a specific definition). 
Subsect.~\ref{subsection:let-rewriting} defines such a rewriting $\Red$ of let-terms (Fig.~\ref{fig:let-rewriting}) and states some of its basic properties. Subsect.~\ref{subsect:ve_strategy} introduces the $\VEL$ transformation as a deterministic strategy to apply $\Red$ in order to make local the variable to be eliminated (Def.~\ref{def:VE}), without changing the denotational semantics of the term (Prop.~\ref{prop:let-rewriting_semantics}). Theorem~\ref{th:soundnessVE_Let} and Corollary~\ref{cor:soundnessVE_Let_nary} prove that $\VEL$ and $\VEF$ are equivalent, showing that $\Facts{\cdot}$ commutes over the two transformations. Finally, Subsect.~\ref{subsect:complexity} briefly discusses some complexity properties, namely that the $\VEL$ increases the size of a let-term quite reasonably, keeping a linear bound. 

\subsection{Let-Term Rewriting}\label{subsection:let-rewriting}
\begin{figure}
\begin{align*}
\text{($\Swap_1$)}&&
\LetG{\vec v_1=e_1;\vec v_2=e_2}{\LetTerm}&
	\Red\LetG{\vec v_2=e_2;\vec v_1=e_1}{\LetTerm}\\
&&	&\text{if $\FV{\vec v_1}\cap\FV{e_2}=\emptyset$,}\\[5pt]
\text{($\Swap_2$)}&&
\LetG{\vec v_1=e_1;\vec v_2=e_2}{\LetTerm}&	
	\Red\LetG{
		f=\Abs{\vec x}{e_2} ;
		\vec v_1 = e_1;
		\vec v_2 = \App{f}{\vec x}
		}{\LetTerm}\\
&&	&\text{if $\vec x = \FV{\vec v_1}\cap\FV{e_2}$ positive and not empty,}\\[5pt]
\text{($\Swap_3$)}&&
\LetG{\vec v_1 = e_1 ; \vec v_2 = e_2}{\LetTerm}&
	\Red\LetG{\Seq{\vec v^+_1, \vec v_2} = \LetG{\vec v_1=e_1}{\Seq{\vec v^+_1, e_2}}}
	{\LetTerm}\\
&&	&\text{if ${\vec v^a_1} = f$, with $f\in\FV{e_2}$,}\\[5pt]
\text{($\Mult$)}&&
\LetG{\vec v_1=e_1 ; \vec v_2 = e_2}{\LetTerm}&	
	\Red\LetG{\Seq{\vec v_1,\vec v_2} = 
	\LetG{\vec v_1 = e_1}{\Seq{\vec v_1, e_2}}}{\LetTerm}\\
&&	&\text{if $\vec v_1$ positive,}\\[5pt]
\text{($\Elimin_x$)}&&
\LetG{\vec v=e_1}{\LetTerm}&	
	\Red\LetG{\vec v'=\LetG{\vec v=e_1}{\vec v'}}{\LetTerm}\\
&&	&\text{if $x\notin\FV{\LetTerm}$ and $\vec v'$ is not empty and removes $x$ in $\vec v$.}
\end{align*}
\caption{Let-terms rewriting rules. We recall that $x$'s variables ($f$'s variables) are supposed positive (resp.~arrow), while $v$'s may be positive or arrow. We also recall from Section~\ref{sect:syntax} that $\vec v^a$ denotes the only arrow variable in a pattern $\vec v$, if it exists, and $\vec v^+$ denotes the pattern obtained from $\vec v$ by removing the arrow variable $\vec v^a$, if any. In the case $\vec v^+$ is empty, the notation $\Seq{\vec v^+, e}$ stands for $e$.}\label{fig:let-rewriting}
\end{figure}
Fig.~\ref{fig:let-rewriting} gives the rewriting rules of let-terms that we will use in the sequel. The rewriting steps $\Swap_1,\Swap_2,\Swap_3$ are called \emph{swapping} and we write $\LetTerm\Red[\Swap]\LetTerm'$ whenever $\LetTerm'$ is obtained from $\LetTerm$ by applying any such swapping step. The rewriting step $\Mult$ is called \emph{multiplicative} and it is used to couple two definitions. The reason why $\Swap_3$ is classified as swapping rather than multiplicative reflects the role of arrow variables in the definition of $\Facts{\LetTerm}$. Finally, the rewriting step $\Elimin_x$ \emph{eliminates} a positive variable $x$ from the outermost definitions, supposing this variable is not used in the sequel. The conditions in each rule guarantee that the rewriting $\Red$ preserves typing as stated by the following proposition (see Appendix~\ref{proof:let-rewriting}).

\begin{proposition}[Subject reduction]\label{prop:let_rewriting_subject}
The rewriting $\Red$ of  Fig.~\ref{fig:let-rewriting} preserves typing, \ie if $\LetTerm\Red\LetTerm'$ and $\LetTerm$ is of type $\TypeA$, then so is $\LetTerm'$, as well as $\FV{\LetTerm}=\FV{\LetTerm'}$.
\end{proposition}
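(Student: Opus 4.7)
The proof proceeds by case analysis on the rewriting rule applied. For each of the five rules ($\Swap_1$, $\Swap_2$, $\Swap_3$, $\Mult$, $\Elimin_x$), I need to verify two things: (i) assuming the left-hand side is typable with type $\TypeA$, reconstruct a typing derivation of the right-hand side with the same type; (ii) the sets of free variables coincide. The second point is routine because in every rule the multiset of occurrences of free variables is simply reorganised (with new binders matched exactly by new occurrences); the only subtle point is that in $\Swap_2$ a fresh arrow variable $f$ is introduced, which gets bound by the outermost let-in, so it does not escape.

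The plan for (i) is as follows. For $\Swap_1$, the hypothesis $\FV{\vec v_1}\cap\FV{e_2}=\emptyset$ ensures that swapping the order of the two definitions does not break any scoping dependency; since our Church-style typing assigns types to variables once and for all, the typing derivations of $e_1$ and $e_2$ stay valid, and the outer let-rules still satisfy the disjointness of free arrow variables (this is inherited from the original term, where by convention the patterns $\vec v_1,\vec v_2$ are pairwise disjoint and the arrow disjointness condition is symmetric). For $\Mult$ the situation is the same, with the extra observation that we use the positivity of $\vec v_1$ to guarantee that the output pattern $\Seq{\vec v_1, e_2}$ is well-formed and that the inner let $\LetG{\vec v_1 = e_1}{\Seq{\vec v_1, e_2}}$ satisfies the side condition of the let-rule (no arrow variable is bound by $\vec v_1$, so the ``if $f\in\FV{\vec v}$ then $f\in\FV[a]{e'}$'' hypothesis is vacuous).

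The cases $\Swap_2$ and $\Swap_3$ are the ones requiring most care, and together constitute the main obstacle. In $\Swap_2$ we must introduce a fresh arrow variable $f$ of type $\PosA\multimap\Type{e_2}$ where $\PosA$ is the type of $\vec x = \FV{\vec v_1}\cap\FV{e_2}$: the abstraction $\Abs{\vec x}{e_2}$ is well typed because $\vec x$ is positive (ensured by the side condition), and $f$ occurs linearly exactly once in $\vec v_2 = \App f{\vec x}$, which moreover is the only free occurrence of $f$ in the body of the outer let, satisfying the linearity side conditions on both the application and the let-rule. In $\Swap_3$ we must check that the nested let $\LetG{\vec v_1 = e_1}{\Seq{\vec v_1^+, e_2}}$ is well typed: since $\vec v_1^a = f$ occurs free in $e_2$ by assumption, the side condition ``if $f\in\FV{\vec v_1}$ then $f\in\FV[a]{\Seq{\vec v_1^+, e_2}}$'' is met; the type of the whole nested let is $\PosA_1^+\otimes \Type{e_2}$ (where $\PosA_1^+$ is the type of $\vec v_1^+$), which matches the type of the outer pattern $\Seq{\vec v_1^+, \vec v_2}$.

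Finally, for $\Elimin_x$, observe that $\vec v'$ is obtained from $\vec v$ by removing the positive variable $x$, so it is a well-formed pattern whose type is that of $\vec v$ minus the $\Bool$-component for $x$; the inner let $\LetG{\vec v = e_1}{\vec v'}$ type-checks because all variables of $\vec v'$ are among those of $\vec v$, and the let side condition on arrow variables is satisfied since $x$ is positive. The assumption $x\notin\FV{\LetTerm}$ guarantees that no free variable is lost in the right-hand side, preserving $\FV{\cdot}$. In every case, a direct inspection shows that the free-variable sets of the original and rewritten terms coincide, completing the proof. \qed
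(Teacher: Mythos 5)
Your overall strategy --- a case analysis on the five rules, checking typability of the rewritten term and preservation of free variables --- is the same as the paper's (which phrases it as a structural induction on $\LetTerm$ split over the cases of $\Red$; the induction only adds the routine congruence step for redexes occurring after the first definitions). Most of your cases are handled adequately, and your treatment of $\Swap_2$ correctly isolates the linearity of the fresh arrow variable.

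However, you miss the one point the paper explicitly flags as the subtle case, and it is a genuine gap in your $\Swap_3$ argument. In the redex $\LetG{\vec v_1 = e_1 ; \vec v_2 = e_2}{\LetTerm}$ with $\vec v_1^a = f$, the binder $\vec v_1$ scopes over \emph{both} $e_2$ and the continuation $\LetTerm$, whereas in the contractum $f$ is bound only inside the nested let $\LetG{\vec v_1=e_1}{\Seq{\vec v_1^+, e_2}}$ and is no longer in scope for $\LetTerm$. So you must argue that $f\notin\FV{\LetTerm}$; otherwise the rewriting would turn a bound occurrence of $f$ into a free one, breaking both $\FV{\LetTerm}=\FV{\LetTerm'}$ and typability. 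You only verify that the \emph{inner} let is well typed (via the side condition $f\in\FV[a]{\Seq{\vec v_1^+,e_2}}$), and your blanket opening claim that free-variable occurrences are ``simply reorganised'' silently assumes exactly what needs proving here. The missing ingredient is the linearity of arrow variables: the let-rule for $\vec v_2$ imposes $\FV[a]{e_2}\cap\FV[a]{\LetTerm}=\emptyset$, so $f\in\FV{e_2}$ (the rule's side condition) forces $f\notin\FV{\LetTerm}$ --- precisely the remark ``$f\in\FV{e_2}$ iff $f\notin\FV{\LetTerm}$'' on which the paper's proof hinges. Adding this observation closes the gap; the rest of your argument stands.
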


\begin{proposition}[Semantics invariance]\label{prop:let-rewriting_semantics}
The rewriting $\Red$ of Fig.~\ref{fig:let-rewriting} preserves the denotational interpretation, \ie if $\LetTerm\Red\LetTerm'$ then $\Sem\LetTerm=\Sem{\LetTerm'}$.
\end{proposition}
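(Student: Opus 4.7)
The plan is to proceed by case analysis on which of the five rewriting rules of Fig.~\ref{fig:let-rewriting} has been applied, and for each case unfold the definition of $\Sem{\cdot}$ given in Fig.~\ref{fig:denotation_terms} until the two sides are manifestly equal. Since both $\LetTerm$ and $\LetTerm'$ share the same outer context (the final $\LetTerm$ on the right-hand side), it suffices to establish the equality for the ``head'' rewrite---that is, we may assume without loss of generality that the redex sits at the top. In all cases the semantics is a polynomial in the entries of the matrices $\Sem{e_i}$ obtained by summing products indexed over webs, so the proofs amount to rearranging these finite sums using Fubini (i.e.\ that $\sum_{a,b}=\sum_b\sum_a$), associativity of scalar multiplication, and the projection identities $\Proj{(\FValA\FPlus\FValB)}{\mathcal V}=\Proj{\FValA}{\mathcal V}$ when $\FV{\mathcal V}\cap\mathrm{dom}(\FValB)=\emptyset$.

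The easy cases are $\Swap_1$, $\Mult$ and $\Elimin_x$. For $\Swap_1$, the hypothesis $\FV{\vec v_1}\cap\FV{e_2}=\emptyset$ implies $\Sem{e_2}$ does not depend on the index ranging over $\Web{\vec v_1}$, so the two nested sums in $\Sem{\LetG{\vec v_1=e_1;\vec v_2=e_2}{\LetTerm}}$ commute. For $\Mult$, one just regroups the sums $\sum_{\FValC_1 \in \Web{\vec v_1}}\sum_{\FValC_2 \in \Web{\vec v_2}}$ on the left-hand side as a single sum $\sum_{\FValC \in \Web{\vec v_1,\vec v_2}}$ on the right-hand side: crucially, the positivity side-condition on $\vec v_1$ guarantees the duplication of $\FValC_1$ needed to feed both $\Sem{e_2}$ and the inner pattern tuple, which is sound in our semantics (see Remark~\ref{rk:typing_as_MLL}). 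For $\Elimin_x$, since $x\notin\FV{\LetTerm}$ the summation over $\Web{x}$ appears on both sides and can simply be pulled inward.

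The harder cases are $\Swap_2$ and $\Swap_3$, which involve arrow variables. For $\Swap_2$, unfolding the right-hand side produces, once the summation over the fresh $f$ is evaluated, a factor of the form $\sum_{(b',b'')\in\Web{\Type f}}\Sem{\Abs{\vec x}{e_2}}_{\cdot,(b',b'')}\cdot\Sem{\App f \vec x}_{\cdot,(b',b'')}$. By the rules of Fig.~\ref{fig:denotation_terms}, $\Sem{\App f \vec x}$ is a product of Kronecker deltas forcing $b'$ to match $\FValA\vert_{\vec x}$ and $b''$ to match the output index; the sum collapses and yields exactly $\Sem{e_2}_{\Proj{\FValA}{\FV{e_2}},\cdot}$, which is the contribution of $e_2$ on the left-hand side. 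The case $\Swap_3$ is the most delicate and will be the main obstacle: the arrow variable $f$ bound by the pattern $\vec v_1$ migrates from being defined at the outer level to being defined inside a nested let under the pattern $\Seq{\vec v^+_1,\vec v_2}$. The proof that the semantics is unchanged relies essentially on the fact that $f$ occurs free exactly once in $e_2$ (by the linearity side-condition on the typing rule for $\mathtt{let}$, see Fig.~\ref{fig:terms_types}), so that the sum over $\Web f$ arising from the inner let can be commuted past the sum over $\Web{\vec v_2}$ and absorbed into the definition of $\vec v_1$ exactly as on the original left-hand side. The combinatorial bookkeeping of which variables live in which web at each stage of the unfolding is the chief source of technical burden; modulo this, the identity is a transparent consequence of matrix composition associativity.
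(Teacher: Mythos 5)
Your argument is correct in outline, but it takes a genuinely different route from the paper's. The paper deliberately avoids the case-by-case computation you propose: its proof observes that reading $\Let{\vec v}{e}{e'}$ as $(\lambda\vec v.e')e$ turns every rule of Fig.~\ref{fig:let-rewriting} into a $\beta$-equivalence, and then invokes the known fact that the weighted relational semantics is invariant under $\beta$-reduction (citing \cite{LairdMMP13}); all the combinatorics of rearranging sums over webs is thereby delegated to an off-the-shelf result. Your direct unfolding buys self-containedness and makes visible exactly where each side condition earns its keep --- the independence hypothesis for $\Swap_1$, positivity (hence duplicability of ground values) for $\Mult$, the delta-collapse for $\Swap_2$, and linearity of the arrow variable for $\Swap_3$ --- which the slicker argument hides inside the soundness of $\beta$-reduction. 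The cost is precisely the one the paper warns about: the bookkeeping is heavy, and your treatment of $\Swap_3$, the one case you yourself flag as the main obstacle, remains at the level of a plan (commute the sum over $\Web f$ past the sum over $\Web{\vec v_2}$ using single occurrence of $f$) rather than a completed calculation. To stand as a full proof your sketch would need that case written out, together with the (easy but necessary) congruence step justifying the reduction to head redexes, which relies on $\FV\LetTerm=\FV{\LetTerm'}$ from Prop.~\ref{prop:let_rewriting_subject}. As a sketch it is on a par with, though orthogonal to, the one in the paper.
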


Moreover, $\Facts{\LetTerm}$ is invariant under commutative rewriting (Appendix~\ref{proof:let-rewriting}):
\begin{lemma}\label{lemma:facts-invariant}
If $\LetTerm\Red[\Swap]\LetTerm'$, then $\Facts{\LetTerm'}=\Facts{\LetTerm}$.
\end{lemma}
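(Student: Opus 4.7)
The plan is to do a case analysis on the three swap rules $\Swap_1, \Swap_2, \Swap_3$ of Fig.~\ref{fig:let-rewriting}. Since $\Facts{\cdot}$ is defined by structural induction peeling off one definition at a time, and since the redex always involves two adjacent definitions in the sequence, the statement reduces by an outer induction on the prefix of definitions before the redex to the case where the two definitions being swapped are the outermost ones of $\LetTerm$. In each of the three cases, I will compare how Def.~\ref{def:factor_of_let_term} unfolds on the two sides of the rewrite and check that the resulting sets of factors coincide.

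For $\Swap_1$, the side condition $\FV{\vec v_1}\cap\FV{e_2}=\emptyset$ ensures that neither pattern's variables — including any arrow variable — occur in the other's defining expression. Applying Def.~\ref{def:factor_of_let_term} to both sides, each definition $\vec v_i=e_i$ either contributes the factor $\Fact{\vec v_i=e_i}$ directly or, when $\vec v_i^a$ is a locally bound arrow variable, the summed-out combination $\FSum{\vec v_i^a}{\Fact{\vec v_i=e_i}\FProd\FactA}$ where $\FactA$ is the unique downstream factor containing $\vec v_i^a$. Independence makes the two processings commute, so the conclusion follows from commutativity of $\uplus$, associativity–commutativity of $\FProd$ (Prop.~\ref{prop:factor_prod_ass_distr}), and the exchange of sum-outs over distinct variables.

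For $\Swap_2$, the fresh arrow variable $f$ occurs only inside $\App{f}{\vec x}$, so on the right-hand side Def.~\ref{def:factor_of_let_term} sums $f$ out, producing $\FSum{f}{\Fact{f=\Abs{\vec x}{e_2}}\FProd\Fact{\vec v_2=\App{f}{\vec x}}}$, and the lemma reduces to the algebraic identity
\begin{equation*}
  \FSum{f}{\Fact{f=\Abs{\vec x}{e_2}}\FProd\Fact{\vec v_2=\App{f}{\vec x}}} \;=\; \Fact{\vec v_2=e_2},
\end{equation*}
which is checked by unfolding Fig.~\ref{fig:denotation_terms}: $\Sem{\Abs{\vec x}{e_2}}$ is a currying of $\Sem{e_2}$ and $\Sem{\App{f}{\vec x}}$ is a pair of Kronecker deltas forcing $f\mapsto(\FValA_{\vec x},b)$, so summing over $f$ collapses to $\Sem{e_2}_{\FValA\FPlus\FValA_{\vec x},b}$, which is precisely $\Fact{\vec v_2=e_2}$. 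For $\Swap_3$, linearity of arrow variables (enforced by Fig.~\ref{fig:terms_types}) places the unique free occurrence of $f=\vec v_1^a$ inside $e_2$, so on the left-hand side the $f$-containing factor is $\Fact{\vec v_2=e_2}$ and Def.~\ref{def:factor_of_let_term} yields $\FSum{f}{\Fact{\vec v_1=e_1}\FProd\Fact{\vec v_2=e_2}}$, while the right-hand side contributes the single factor $\Fact{\Seq{\vec v_1^+,\vec v_2}=\LetG{\vec v_1=e_1}{\Seq{\vec v_1^+,e_2}}}$. Equality is checked by expanding the denotation of the inner let over $\Web{\vec v_1}=\Web{\vec v_1^+}\times\Web{f}$: the semantics of the pattern $\vec v_1^+$ is diagonal, collapsing the $\Web{\vec v_1^+}$-sum into a substitution of the output positions, and leaving just the $\Web{f}$-sum of a product of $\Sem{e_1}$ and $\Sem{e_2}$, which is exactly $\FSum{f}{\Fact{\vec v_1=e_1}\FProd\Fact{\vec v_2=e_2}}$.

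The main obstacle will be the bureaucratic bookkeeping of arrow-variable scopes and of the two branches of Def.~\ref{def:factor_of_let_term}: in particular, I have to handle uniformly the sub-cases where $\vec v_2$ itself carries an arrow variable (used either in $\LetTerm$ or in its output), and the case where the $\LetTerm$ continuation is in fact just a pattern $\vec w$ so that the ``downstream'' factor is the constant-$1$ factor of the output. Once the right dispatch is laid out, the algebraic content in each case is light — essentially one currying / matrix-composition identity in the weighted-relation model.
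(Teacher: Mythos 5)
Your proof is correct and takes essentially the same route as the paper's own argument, which is simply ``by inspection of cases'' on the three swap rules together with the single remark that the fresh arrow variable introduced by $\Swap_2$ is not in the output and hence gets summed out of the factor set; your two key identities (the currying/application collapse for $\Swap_2$ and the diagonal collapse of the inner let over $\Web{\vec v_1^+}$ for $\Swap_3$) are exactly the computations the paper leaves implicit. The sub-case bookkeeping you flag (an arrow variable in $\vec v_2$, a downstream factor already merged in, the constant-$1$ output factor) is indeed dispatched by Prop.~\ref{prop:factor_prod_ass_distr} as you indicate, so there is no gap.
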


\subsection{Variable Elimination Strategy}\label{subsect:ve_strategy}

The $\VEL$ transformation can be seen as a deterministic strategy of applying the rewriting $\Red$ in order to make local a variable in a let-term. The idea of $\VEL(\LetTerm,x)$ is the following: first, we gather together of definitions $(\vec v_i = e_i)$ of $\LetTerm$ having $x$ free in $e_i$ into a common huge definition $\vec v = e$ and we move this latter close to the definition of $x$ in $\LetTerm$; then, we make the definition of $x$ local to $e$. To formalise this rewriting sequence we define two auxiliary transformations: the swapping definitions $\SD$ (Def.~\ref{definition:SwapDef}) and the variable anticipation $\VA$ (Def.~\ref{definition:VA}).  

The \textdef{swapping definition} procedure rewrites a let-term $\LetTerm$ with at least two definitions by swapping (or gathering) the first definition with the second one, without changing the factor representation\SLV{}{(Lemma~\ref{lemma:SD_soundness})}.

\begin{definition}[Swapping definitions]\label{definition:SwapDef}
We define $\SD(\LetTerm)$ for a let-term $\LetTerm\Def\LetG{\vec v_1=e_1, \vec v_2=e_2}{\LetTerm'}$ with at least two definitions. The definition splits in the following cases, depending on the dependence of $e_2$ with respect to $\vec v_1$. 
\SLV
{
\begin{enumerate}
	\item If $\FV{\vec v_1}\cap\FV{e_2} = \emptyset$,
		$
			\SD(\LetTerm)\Def \LetG{\vec v_2=e_2; \vec v_1=e_1}{\LetTerm'}$.
	\item If $\FV{\vec v_1}\cap\FV{e_2}=\vec x$ is a non-empty sequence of positive variables, $
		\SD(\LetTerm)\Def \LetG{g=\Abs{\vec x}{e_2}; \vec v_1=e_1; \vec v_2=\App g{\vec x}}{\LetTerm'}$.
	\item If $\vec v^a_1 = f$ and $f \in \FV{e_2}$,
		$
					\SD(\LetTerm)\Def \LetG{\Seq{\vec v^+_1,\vec v_2}=
				\LetG{\vec v_1=e_1}
				{
					\Seq{\vec v^+_1, e_2}}
				}{\LetTerm'}$,
	if $\vec v^+_1$ is non-empty, otherwise: $\SD(\LetTerm)\Def \LetG{\vec v_2=\LetG{\vec v_1=e_1}{e_2}}{\LetTerm'}$.
\end{enumerate}
}
{
\begin{enumerate}
	\item If $\FV{\vec v_1}\cap\FV{e_2} = \emptyset$,
		\[
			\SD(\LetTerm)\Def \LetG{\vec v_2=e_2; \vec v_1=e_1}{\LetTerm'}.
		\]
	\item If $\FV{\vec v_1}\cap\FV{e_2}=\vec x$ is a non-empty sequence of positive variables,
		\[
			\SD(\LetTerm)\Def \LetG{g=\Abs{\vec x}{e_2}; \vec v_1=e_1; \vec v_2=\App g{\vec x}}{\LetTerm'}.
		\]
	\item If $\vec v^a_1 = f$ and $f \in \FV{e_2}$,
		\[
					\SD(\LetTerm)\Def \LetG{\Seq{\vec v^+_1,\vec v_2}=
				\LetG{\vec v_1=e_1}
				{
					\Seq{\vec v^+_1, e_2}}
				}{\LetTerm'},
		\]
	if $\vec v^+_1$ is non-empty, otherwise: $\SD(\LetTerm)\Def \LetG{\vec v_2=\LetG{\vec v_1=e_1}{e_2}}{\LetTerm'}$.
\end{enumerate}
}
\end{definition}
Notice that the above cases are exhaustive. In particular, if $\vec v_1$ has some variables in common with $\FV{e_2}$ then either all such common variables are positive or one of them is an arrow variable $f$. By case inspection and Lemma~\ref{lemma:facts-invariant}, we get:

\begin{lemma}[$\SD$ soundness]\label{lemma:SD_soundness}
Given a let-term $\LetTerm$ with at least two definitions, then $\LetTerm\Red[\Swap]\SD(\LetTerm)$, for the swap reduction $\Swap$ defined in Fig.~\ref{fig:let-rewriting}. In particular, $\SD(\LetTerm)$ is a well-typed let-term having the same type of $\LetTerm$ and such that $\Facts{\LetTerm}=\Facts{\SD(\LetTerm)}$.
\end{lemma}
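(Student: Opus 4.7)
The plan is to prove the three claims (rewriting, typing, factor invariance) in one shot by a case analysis on the three clauses in Def.~\ref{definition:SwapDef}, matching each clause to the corresponding swap rule of Fig.~\ref{fig:let-rewriting}. Once the single-step rewriting $\LetTerm \Red[\Swap] \SD(\LetTerm)$ is established, the remaining two claims follow directly from results proved earlier: Prop.~\ref{prop:let_rewriting_subject} (subject reduction) for well-typedness and type preservation, and Lemma~\ref{lemma:facts-invariant} for the invariance of $\Facts{\cdot}$ under swap steps.

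First, I would observe that the three side-conditions in Def.~\ref{definition:SwapDef} are mutually exclusive and exhaustive: either $\FV{\vec v_1}\cap\FV{e_2}$ is empty, or it is non-empty; in the latter case, since a well-typed pattern $\vec v_1$ contains at most one arrow variable (namely $\vec v_1^a$ if defined) with all other variables positive, the intersection either consists only of positive variables, or it contains $\vec v_1^a = f$. Thus every let-term with at least two definitions falls in exactly one of the three cases. In Case~(1), $\SD(\LetTerm)$ is precisely the right-hand side of rule $\Swap_1$, whose side condition $\FV{\vec v_1}\cap\FV{e_2}=\emptyset$ is the hypothesis of the case. In Case~(2), the side condition of $\Swap_2$, namely that $\vec x = \FV{\vec v_1}\cap\FV{e_2}$ is positive and non-empty, is exactly the hypothesis, and $\SD(\LetTerm)$ reproduces the right-hand side of $\Swap_2$ after choosing a fresh arrow variable $g$ of the appropriate type $\PosA\multimap\Type{e_2}$ (where $\PosA$ is the type of $\vec x$). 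In Case~(3), the hypothesis $\vec v_1^a = f \in \FV{e_2}$ is precisely the side condition of $\Swap_3$; if $\vec v_1^+$ is non-empty, the output coincides literally with the right-hand side of $\Swap_3$, and if $\vec v_1^+$ is empty, the notation $\Seq{\vec v_1^+, e_2}$ is understood to stand for $e_2$ (as stipulated in Fig.~\ref{fig:let-rewriting}), so the output still matches the rule.

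Hence in every case $\LetTerm \Red[\Swap] \SD(\LetTerm)$ by one of the rules $\Swap_1$, $\Swap_2$, $\Swap_3$. Applying Prop.~\ref{prop:let_rewriting_subject} then yields that $\SD(\LetTerm)$ is well-typed of the same type as $\LetTerm$ (and shares the same free variables), and applying Lemma~\ref{lemma:facts-invariant} gives $\Facts{\SD(\LetTerm)} = \Facts{\LetTerm}$.

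I do not expect any genuine obstacle: the lemma is essentially a sanity check that the deterministic procedure $\SD$ is just a packaging of the three swap rules. The only minor bookkeeping is to justify, in Case~(2), that the fresh variable $g$ can always be chosen (it can, by $\alpha$-renaming conventions) and that the typing of the introduced $\lambda$-abstraction fits the typing rules of Fig.~\ref{fig:terms_types}, which follows since $\vec x$ is a positive pattern and $e_2$ is a well-typed expression. This is already captured by the statement of Prop.~\ref{prop:let_rewriting_subject}, so no separate argument is needed.
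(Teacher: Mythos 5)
Your proof is correct and follows exactly the paper's own argument: the three clauses of Def.~\ref{definition:SwapDef} are matched one-to-one with $\Swap_1$, $\Swap_2$, $\Swap_3$, and the typing and factor-invariance claims are then discharged by Prop.~\ref{prop:let_rewriting_subject} and Lemma~\ref{lemma:facts-invariant} respectively. The extra remarks on exhaustiveness of the cases and freshness of $g$ are consistent with the paper's surrounding discussion.
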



Given a set of variables $\mathcal V$, the \emph{variable anticipation} procedure rewrites a let-term $\LetTerm$ into $\VA(\LetTerm, \mathcal V)$  by ``gathering'' in the first position all definitions having free variables in $\mathcal V$ or having arrow variables defined by one of the definitions already ``gathered''. This definition is restricted to positive let-terms. 

\begin{definition}[Variable anticipation]\label{definition:VA}
We define a let-term $\VA(\LetTerm, \mathcal V)\Def\LetG{\vec v'=e'}{\LetTerm'}$, given a positive let-term $\LetTerm\Def\LetG{\vec v_1=e_1}{\LetTerm_1}$ with at least one definition and a set of variables $\mathcal V\subseteq\FV{\LetTerm}$ disjoint from the output variables of $\LetTerm$. The definition is by structural induction on $\LetTerm$ and splits in the following cases. 
\SLV{
\begin{enumerate}
	\item If $\mathcal V=\emptyset$, then define:
		$
			\VA(\LetTerm, \mathcal V)\Def\LetTerm
		$.
		
	\item If $\mathcal V\cap\FV{e_1}=\emptyset$, so that $\mathcal V\subseteq\FV{\LetTerm_1}$, then define:\\
		$
			\VA(\LetTerm, \mathcal V)\Def\SD(\LetG{\vec v_1=e_1}{\VA(\LetTerm_1, \mathcal V)})$.			
	
	\item If $\mathcal V\cap\FV{e_1}\neq \emptyset$ and  $\vec v_1$ is positive, then consider $\VA(\LetTerm_1, \mathcal V\cap\FV{\LetTerm_1})\Def\LetG{\vec v'=e'}{\LetTerm'}$ and set:
		$
			\VA(\LetTerm, \mathcal V)\Def
			\LetG{\Seq{\vec v_1,\vec v'}=\LetG{\vec v_1=e_1}{\Seq{\vec v_1,e'}}}{\LetTerm'}$.

	\item If $\mathcal V\cap\FV{e_1}\neq\emptyset$ and $\vec v^a_1=f$. Notice that, by hypothesis, $f$ does not appear in the output of $\LetTerm_1$, as $\LetTerm$ (and hence $\LetTerm_1$) is positive. So we can consider $\VA(\LetTerm_1, (\mathcal V\cap\FV{\LetTerm_1})\cup\{f\})\Def\LetG{\vec v'=e'}{\LetTerm'}$ and define:
		$
			\VA(\LetTerm, \mathcal V)\Def
			\LetG{
				\Seq{\vec v^+_1,\vec v'} =
				\LetG{\vec v_1 = e_1}{\Seq{\vec v_1^+, e'}}
			}
			{\LetTerm'}$,
		\noindent if $\vec v^+_1$ is non-empty, otherwise: $\VA(\LetTerm, \mathcal V)\Def \LetG{\vec v'=\LetG{\vec v_1=e_1}{e'}}{\LetTerm'}$.
\end{enumerate}
}{
\begin{enumerate}
	\item If $\mathcal V=\emptyset$, then define:
		\[
			\VA(\LetTerm, \mathcal V)\Def\LetTerm\,.
		\]
	\item If $\mathcal V\cap\FV{e_1}=\emptyset$, so that $\mathcal V\subseteq\FV{\LetTerm_1}$, then define:
		\[
			\VA(\LetTerm, \mathcal V)\Def\SD(\LetG{\vec v_1=e_1}{\VA(\LetTerm_1, \mathcal V)})\,.
		\]				
	
	\item If $\mathcal V\cap\FV{e_1}\neq \emptyset$ and  $\vec v_1$ is positive, then consider $\VA(\LetTerm_1, \mathcal V\cap\FV{\LetTerm_1})\Def\LetG{\vec v'=e'}{\LetTerm'}$ and set:
		\[
			\VA(\LetTerm, \mathcal V)\Def
			\LetG{\Seq{\vec v_1,\vec v'}=\LetG{\vec v_1=e_1}{\Seq{\vec v_1,e'}}}{\LetTerm'}\,.
		\]

	\item If $\mathcal V\cap\FV{e_1}\neq\emptyset$ and $\vec v^a_1=f$. Notice that, by hypothesis, $f$ does not appear in the output of $\LetTerm_1$, as $\LetTerm$ (and hence $\LetTerm_1$) is positive. So we can consider $\VA(\LetTerm_1, (\mathcal V\cap\FV{\LetTerm_1})\cup\{f\})\Def\LetG{\vec v'=e'}{\LetTerm'}$ and define:
		\[
			\VA(\LetTerm, \mathcal V)\Def
			\LetG{
				\Seq{\vec v^+_1,\vec v'} =
				\LetG{\vec v_1 = e_1}{\Seq{\vec v_1^+, e'}}
			}
			{\LetTerm'}\,,
		\]
		\noindent if $\vec v^+_1$ is non-empty, otherwise: $\VA(\LetTerm, \mathcal V)\Def \LetG{\vec v'=\LetG{\vec v_1=e_1}{e'}}{\LetTerm'}$.
\end{enumerate}
}
\end{definition}

Finally, we can define the procedure $\VEL(\LetTerm, x)$. 
 This procedure basically consists in three steps: (i), it uses $\VA$ for gathering in a unique definition all expressions having a free occurrence of $x$ or a free occurrence of an arrow variable depending from $x$; then (ii), it performs $\Mult$ and $\Elimin$ rewriting so to make $x$ local to a definition, and finally (iii), it uses $\SD$ to move the obtained definition as the first definition of the let-term. This latter step is not strictly necessary but it is convenient in order to avoid free arrow variables of the expression having $x$ local, so getting a simple representation of the factor obtained after $x$ ``elimination''\todom{one can probably eliminate this last step}.

\begin{definition}[Variable elimination strategy]\label{def:VE}
The let-term $\VEL(\LetTerm, x)$ is defined from a positive let-term $\LetTerm \Def \Let{\vec v_1}{e_1}{\LetTerm_1}$ and a positive variable $x$ defined in $\LetTerm$ but not in the output of $\LetTerm$. The definition is by induction on $\LetTerm$ and splits in the following cases.
\SLV{
\begin{enumerate}
\item If $x\in\FV{\vec v_1}$ and $x\notin\FV{\LetTerm_1}$, then write by $\vec v_1'$ the pattern obtained from $\vec v_1$ by removing $x$ and define:
	$
		\VEL(\LetTerm, x) \Def \LetG{\vec v_1'=\LetG{\vec v_1=e_1}{\vec v_1'}}{\LetTerm_1}$.
		
\item If $x\in\FV{\vec v_1}$ and $x\in\FV{\LetTerm_1}$, then write by $\vec v_1'$ the pattern obtained from $\vec v_1$ by removing $x$. Remark that  $\LetTerm_1$ has at most one definition, as $x$ is not in the output of $\LetTerm_1$.  We split in two subcases:

	\begin{enumerate}
	\item if $\vec v_1'$ is positive, then set  $\LetG{\vec v'=e'}{\LetTerm'}\Def\VA(\LetTerm_1,\{x\})$ and define:\\
	$
		\VEL(\LetTerm, x) \Def \LetG{\Seq{\vec v_1',\vec v'}=\LetG{\vec v_1=e_1}{\Seq{\vec v_1', e'}}}{\LetTerm'}$.
	
	\item if $(\vec v'_1)^a=f$, then set $\LetG{\vec v'=e'}{\LetTerm'}\Def\VA(\LetTerm_1,\{x,f\})$ and define:\\
	$
		\VEL(\LetTerm, x) \Def \LetG{\Seq{\vec v_1^+,\vec v'}=\LetG{\vec v_1=e_1}{\Seq{\vec v_1^+, e'}}}{\LetTerm'}$.

	\end{enumerate}
	In both sub-cases, if $\vec v_1'^+$ is empty, we mean $\VEL(\LetTerm, x) \Def \LetG{\vec v'=\LetG{\vec v_1=e_1}{e'}}{\LetTerm'}$.
\item  If $x\notin\FV{ \vec v_1}$, then $x$ is defined in $\LetTerm_1$, and we can set:\\
	$
		\VEL(\LetTerm, x) \Def \SD(\LetG{\vec v_1=e_1}{\VEL(\LetTerm_1, x))}$.
\end{enumerate}
}{
\begin{enumerate}
\item If $x\in\FV{\vec v_1}$ and $x\notin\FV{\LetTerm_1}$, then write by $\vec v_1'$ the pattern obtained from $\vec v_1$ by removing $x$ and define:
	\[
		\VEL(\LetTerm, x) \Def \LetG{\vec v_1'=\LetG{\vec v_1=e_1}{\vec v_1'}}{\LetTerm_1}\,.
	\] 
\item If $x\in\FV{\vec v_1}$ and $x\in\FV{\LetTerm_1}$, then write by $\vec v_1'$ the pattern obtained from $\vec v_1$ by removing $x$. Remark that  $\LetTerm_1$ has at most one definition, as $x$ is not in the output of $\LetTerm_1$.  We split in two subcases:

	\begin{enumerate}
	\item if $\vec v_1'$ is positive, then set  $\LetG{\vec v'=e'}{\LetTerm'}\Def\VA(\LetTerm_1,\{x\})$ and define:
	\[
		\VEL(\LetTerm, x) \Def \LetG{\Seq{\vec v_1',\vec v'}=\LetG{\vec v_1=e_1}{\Seq{\vec v_1', e'}}}{\LetTerm'}\,.
	\] 
	
	\item if $(\vec v'_1)^a=f$, then set $\LetG{\vec v'=e'}{\LetTerm'}\Def\VA(\LetTerm_1,\{x,f\})$ and define:
	\[
		\VEL(\LetTerm, x) \Def \LetG{\Seq{\vec v_1^+,\vec v'}=\LetG{\vec v_1=e_1}{\Seq{\vec v_1^+, e'}}}{\LetTerm'}\,.
	\] 
	\end{enumerate}
	In both sub-cases, if $\vec v_1'^+$ is empty, we mean $\VEL(\LetTerm, x) \Def \LetG{\vec v'=\LetG{\vec v_1=e_1}{e'}}{\LetTerm'}$.
\item  If $x\notin\FV{ \vec v_1}$, then $x$ is defined in $\LetTerm_1$, and we can set:
	\[
		\VEL(\LetTerm, x) \Def \SD(\LetG{\vec v_1=e_1}{\VEL(\LetTerm_1, x))}\,.
	\] 
\end{enumerate}
}

As for $\VEF$, we extend $\VEL$ to sequences of (positive) variables, by 
\[
	\VEL(\LetTerm, (x_1,\dots, x_h))\Def\VEL(\VEL(\LetTerm, x_1), (x_2,\dots, x_h))\,.
\]
with the identity on $\LetTerm$ for the empty sequence. 
\end{definition}

\afterpage{%
\begin{landscape}
\begin{figure}
\begin{align*}
\LetTerm=\;&
\LetG{
	{\color{blue}x_1 = \StocA_1;
	x_2 = \StocA_2x_1; }
	x_3=\StocA_3x_2; 
	x_4=\StocA_4; 
	x_5=\StocA_5\Seq{x_3, x_4}; 
	x_6=\StocA_6(x_2, x_5)
}{
	\Seq{x_3,x_6}
}
&{\scriptstyle\text{(L1)}}
\\
\xrightarrow{\mathmakebox[0.4cm]{\Mult}}\;&
\LetG{
	{\color{blue}\Seq{x_1, x_2} = \LetG{x_1\!=\!\StocA_1}{\Seq{x_1,\StocA_2x_1}}; }
	x_3=\StocA_3x_2; 
	x_4=\StocA_4; 
	x_5=\StocA_5\Seq{x_3, x_4}; 
	x_6=\StocA_6(x_2, x_5)
}{
	\Seq{x_3,x_6}
}
&{\scriptstyle\text{(L2)}}
\\
\xrightarrow{\mathmakebox[0.4cm]{\Elimin_{x_1}}}\;&
\LetG{
	x_2=
		\underbrace{\LetG{
			\Seq{x_1, x_2} = \LetG{x_1\!=\!\StocA_1}{\Seq{x_1,\StocA_2x_1}}
			}{
			x_2
			}}_{e_1}
	;
	x_3=\StocA_3x_2; 
	x_4=\StocA_4;
	{\color{blue}x_5=\StocA_5\Seq{x_3, x_4};
	x_6=\StocA_6(x_2, x_5)}
}{
	\Seq{x_3,x_6}
}
&{\scriptstyle\text{(L3)}}
\\
\xrightarrow{\mathmakebox[0.4cm]{\Swap_2}}\;&
\LetG{
	x_2=e_1;
	x_3=\StocA_3x_2; 
	{\color{blue}x_4=\StocA_4;
	f=\lambda y. \StocA_6(x_2, y);}
	x_5=\StocA_5\Seq{x_3, x_4};
	x_6=f x_5;
	}{\Seq{x_3,x_6}}
&{\scriptstyle\text{(L4)}}
\\
\xrightarrow{\mathmakebox[0.4cm]{\Swap_1}}\;&
\LetG{
	x_2=e_1;
	{\color{blue}x_3=\StocA_3x_2; 
	f=\lambda y. \StocA_6(x_2, y);}
	x_4=\StocA_4;
	x_5=\StocA_5\Seq{x_3, x_4};
	x_6=f x_5;
	}{\Seq{x_3,x_6}}
&{\scriptstyle\text{(L5)}}
\\
\xrightarrow{\mathmakebox[0.4cm]{\Mult}}\;&
\LetG{
	{\color{blue}x_2=e_1;
	\Seq{x_3, f}=
		\LetG{
			x_3=\StocA_3x_2}
		{
			\Seq{x_3, \lambda y. \StocA_6(x_2, y)}
		};}
	x_4=\StocA_4;
	x_5=\StocA_5\Seq{x_3, x_4};
	x_6=f x_5;
	}{\Seq{x_3,x_6}}
&{\scriptstyle\text{(L6)}}
\\
\xrightarrow{\mathmakebox[0.4cm]{\Mult}}\;&
\LetG{
	{\color{blue}\Seq{x_2,\Seq{x_3, f}}=
		\LetG{
			x_2=e_1
		}{
			\Seq{
				x_2,
				\LetG{
					x_3=\StocA_3x_2
				}{
					\Seq{x_3, \lambda y. \StocA_6(x_2, y)}
				}
			}
		}
	;}
	x_4=\StocA_4;
	x_5=\StocA_5\Seq{x_3, x_4};
	x_6=f x_5;
	}{\Seq{x_3,x_6}}
&{\scriptstyle\text{(L7)}}
\\
\xrightarrow{\mathmakebox[0.4cm]{\Elimin_{x_2}}}\;&
\LetG{
	\Seq{x_3, f}=\underbrace{
		\LetG{
			\Seq{x_2,\Seq{x_3, f}}=
				\LetG{
					x_2=e_1
				}{
					\Seq{
						x_2,
						\LetG{
							x_3=\StocA_3x_2
						}{
							\Seq{x_3, \lambda y. \StocA_6(x_2, y)}
						}
					}
				}
		}{\Seq{x_3,f}}
		}_{e_{2}}
	;
	{\color{blue}x_4=\StocA_4;
	x_5=\StocA_5\Seq{x_3, x_4};}
	x_6=f x_5;
	}{\Seq{x_3,x_6}}
&{\scriptstyle\text{(L8)}}
\\
\xrightarrow{\mathmakebox[0.4cm]{\Mult}}\;&
\LetG{
	\Seq{x_3, f}=e_2;
	{\color{blue}\Seq{x_4,x_5}=
		\LetG{
			x_4=\StocA_4
		}{
			\Seq{x_4, \StocA_5\Seq{x_3, x_4}}	
		}
	;}
	x_6=f x_5;
	}{\Seq{x_3,x_6}}
&{\scriptstyle\text{(L9)}}
\\
\xrightarrow{\mathmakebox[0.4cm]{\Elimin_{x_4}}}\;&
\LetG{
	{\color{blue}\Seq{x_3, f}=e_2;
	x_5=
		\LetG{
			\Seq{x_4,x_5}=
				\LetG{
					x_4=\StocA_4
				}{
					\Seq{x_4, \StocA_5\Seq{x_3, x_4}}	
				}
		}{
				x_5
		}
	;}
	x_6=f x_5;
	}{\Seq{x_3,x_6}}
&{\scriptstyle\text{(L10)}}
\\
\xrightarrow{\mathmakebox[0.4cm]{\Swap_2}}\;&
\LetG{
	g = 	
		\underbrace{
		\lambda z.
		\LetG{
				\Seq{x_4,x_5}=
					\LetG{
						x_4=\StocA_4
					}{
						\Seq{x_4, \StocA_5\Seq{z, x_4}}	
					}
			}{
					x_5
			}
			}_{e_4}
	;
	\Seq{x_3, f}=e_2;
	{\color{blue}x_5= g x_3;
	x_6=f x_5;}
	}{\Seq{x_3,x_6}}
&{\scriptstyle\text{(L11)}}
\\
\xrightarrow{\mathmakebox[0.4cm]{\Mult}}\;&
\LetG{
	g = e_4;
	\Seq{x_3, f}=e_2;
	\Seq{x_5,x_6}= 
		\LetG{
			x_5 = g x_3
		}{
			\Seq{x_5, f x_5}
		}
	}{\Seq{x_3,x_6}}
&{\scriptstyle\text{(L12)}}
\\
\xrightarrow{\mathmakebox[0.4cm]{\Elimin_{x_5}}}\;&
\LetG{
	g = e_4;
	\Seq{x_3, f}=e_2;
	{\color{blue}x_6 =
		\LetG{
			\Seq{x_5,x_6}=	
				\LetG{
					x_5 = g x_3
				}{
					\Seq{x_5, f x_5}
				}
		}{
			x_6
		}}
	}{\Seq{x_3,x_6}}
&{\scriptstyle\text{(L13)}}
\\
\xrightarrow{\mathmakebox[0.4cm]{\Swap_3}}\;&
\LetG{
	{\color{blue}g = e_4;
	\Seq{x_3, x_6}=
		\LetG{
			\Seq{x_3, f}=e_2	
		}{
			\Seq{
				x_3,
				\LetG{
					\Seq{x_5,x_6}=	
						\LetG{
							x_5 = g x_3
						}{
							\Seq{x_5, f x_5}
						}
				}{
					x_6
				}		
			}
		}}
	}{\Seq{x_3,x_6}}
&{\scriptstyle\text{(L14)}}
\\
\xrightarrow{\mathmakebox[0.4cm]{\Swap_3}}\;&
\LetG{
	\Seq{x_3, x_6}=
		\underbrace{
		\LetG{
				g = e_4;		
		}{
			\LetG{
				\Seq{x_3, f}=e_2	
			}{
				\Seq{
					x_3,
					\LetG{
						\Seq{x_5,x_6}=	
							\LetG{
								x_5 = g x_3
							}{
								\Seq{x_5, f x_5}
							}
					}{
						x_6
					}		
				}
			}	
		}
		}_{e_5}
	}{\Seq{x_3,x_6}}
&{\scriptstyle\text{(L15)}}
\end{align*}
\caption{Rewriting of $\LetTerm$ into $\VEL(\LetTerm, (x_1, x_2, x_4, x_5))=\LetTerm'$ for $\LetTerm,\LetTerm'$ given in Ex.~\ref{ex:bayesian_graph_and_fact_as_let-term}. We underline in {\color{blue}blue} the fired redex in the following reduction step. We also name $e_1$, $e_2$, $e_4$, $e_5$, the expressions keeping local the corresponding variable (\ie $e_i$ keeps local $x_i$).}\label{fig:example_reduction}
\end{figure}
\end{landscape}
}

\begin{example}\label{ex:VEx_2}
Consider Fig.~\ref{fig:example_reduction} and denote by $\LetTerm_i$ the let-term in line $(Li)$. This figure details the rewriting sequence of the term $\LetTerm_1$ into $\LetTerm_{15}=\VEL(\LetTerm_1, (x_1, x_2, x_4, x_5))$. Namely, $\LetTerm_3=\VEL(\LetTerm_1, x_1)$, $\LetTerm_8=\VEL(\LetTerm_3, x_2)$, $\LetTerm_{11}=\VEL(\LetTerm_8, x_4)$, $\LetTerm_{15}=\VEL(\LetTerm_{11}, x_5)$.
\end{example}

\begin{proposition}[Rewriting into $\VEL$]\label{prop:rewriting_VE_well-typed}
Let $\LetTerm$ be a let-term with $n$ definitions: $\VEL(\LetTerm, x)$ is obtained from $\LetTerm$ by at most $n$ steps of the $\Red$ rewriting of Fig.~\ref{fig:let-rewriting}. 
In particular, $\VEL(\LetTerm, x)$ has the same type and free variables of $\LetTerm$.
\end{proposition}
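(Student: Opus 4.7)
The plan is to prove Proposition~\ref{prop:rewriting_VE_well-typed} by structural induction on $\LetTerm$, following the three cases of Def.~\ref{def:VE}, after first establishing an auxiliary claim that bounds the rewriting steps of $\VA$.

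\textbf{Auxiliary claim.} For a positive let-term $\LetTerm$ with $m$ definitions and an admissible set $\mathcal V$, the term $\VA(\LetTerm, \mathcal V)$ is reachable from $\LetTerm$ by at most $m-1$ steps of $\Red$, and preserves type and free variables. I would prove this by structural induction on $\LetTerm$, inspecting the four cases of Def.~\ref{definition:VA}: case~(1) uses zero steps; case~(2) performs a recursive $\VA$-call on $\LetTerm_1$ (with $m-1$ definitions, hence at most $m-2$ steps by induction) under the context $\LetG{\vec v_1=e_1}{\cdot}$, plus one $\SD$-step which is a single swap by Lemma~\ref{lemma:SD_soundness}; cases~(3) and~(4) likewise perform a recursive call plus one $\Mult$-step (respectively $\Swap_3$-step) coupling $\vec v_1$ with the first definition of the recursive output. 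In each case, the shape defined in Def.~\ref{definition:VA} matches literally the target of the $\Red$-rule of Fig.~\ref{fig:let-rewriting} applied under context.

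\textbf{Main induction.} I would then induct on $\LetTerm$ through the three cases of Def.~\ref{def:VE}. Case~(1) produces $\VEL(\LetTerm, x)$ as a single application of rule $\Elimin_x$, whose side-condition $x \notin \FV{\LetTerm_1}$ is part of the case hypothesis. Case~(3) combines the inductive hypothesis on $\LetTerm_1$ (reached in at most $n-1$ steps, lifted under the context $\LetG{\vec v_1=e_1}{\cdot}$) with one additional $\SD$-step, totalling at most $n$ steps. Case~(2) invokes the auxiliary claim, reaching $\VA(\LetTerm_1, \mathcal V) = \LetG{\vec v'=e'}{\LetTerm'}$ in at most $n-2$ steps; then one $\Mult$-step (or a $\Swap_3$-step in sub-case~(b)) couples $\vec v_1$ with $\vec v'$, and a final $\Elimin_x$-step removes $x$ from the coupled pattern---its side-condition holds because $x$ is now confined to the inner let. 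This yields at most $n$ steps in total. Preservation of type and of free variables then follows by iterating Prop.~\ref{prop:let_rewriting_subject} along the constructed sequence.

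\textbf{Main obstacle.} The delicate point is case~(2): one must verify that the specific term produced by Def.~\ref{def:VE}---whose inner let outputs $\Seq{\vec v_1',e'}$ rather than $\Seq{\vec v_1,e'}$---literally matches the term obtained after the proposed $\Mult$+$\Elimin_x$ sequence, particularly in the arrow sub-case (where $\Swap_3$ enters) and in the boundary situations where $\vec v_1^+$ is empty. This reduces to a careful syntactic matching of nested let-structures, and is the only place where the bound $\leq n$ is tight, so the split between the $\VA$-bound ($\leq n-2$) and the final two $\Red$-steps cannot be relaxed.
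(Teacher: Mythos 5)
Your proposal follows essentially the same route as the paper: an auxiliary bound on the rewriting into $\VA$ (the paper's Lemma~\ref{lemma:VA_rewriting}), then induction on the three cases of Def.~\ref{def:VE} ($\Elimin_x$ alone in case~1, $\VA$ followed by a $\Mult$ or $\Swap_3$ step and a final $\Elimin_x$ in case~2, the inductive hypothesis followed by $\SD$ in case~3), with type and free-variable preservation obtained from Prop.~\ref{prop:let_rewriting_subject}. Your step accounting is if anything slightly more careful than the paper's (whose Lemma~\ref{lemma:VA_rewriting} states the $\VA$ bound as at most $m$ steps for $m$ definitions, rather than your $m-1$), and the syntactic-matching subtlety you flag in case~2 is indeed passed over silently in the paper's own proof.
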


The following theorem states both the soundness and completeness of our syntactic definition of $\VEL$ with respect to the more standard version defined on factors. The soundness is because any syntactic elimination variable is equivalent to the semantic $\VEL$ modulo the map $\Facts{\LetTerm}$. Completeness is because this holds for \emph{any} chosen variable, so all variable elimination sequences can be simulated in the syntax (Corollary~\ref{cor:soundnessVE_Let_nary}). \SLV{The proofs are in Appendix \ref{proof:let-rewriting}.}{}
\begin{theorem}\label{th:soundnessVE_Let}
Given $\LetTerm$ and $x$ as in Def.~\ref{def:VE}, we have:
$
	\Facts{\VEL(\LetTerm, x)} = \VEF(\Facts{\LetTerm}, x)$.
\end{theorem}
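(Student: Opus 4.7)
The plan is to proceed by case analysis following the three cases of Def.~\ref{def:VE}, with induction on the number $n$ of definitions of $\LetTerm$ (needed for Case 3). In each case the goal is to unfold $\Facts{\cdot}$ on both sides of the target equation and match them, relying on Lemma~\ref{lemma:facts-invariant} whenever a $\SD$ step appears (since $\SD$ is a chain of $\Swap$-rewrites by Lemma~\ref{lemma:SD_soundness}).

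I would first dispatch Case~3 ($x \notin \FV{\vec v_1}$). Since $x$ is defined in $\LetTerm_1$, standard scoping implies $x \notin \FV{e_1}$, so $\Fact{\vec v_1 = e_1}$ does not involve $x$, i.e.~it lies in $\Facts{\LetTerm}_{\neg x}$. Using Lemma~\ref{lemma:facts-invariant} to pass through $\SD$, and then unfolding the defining clause of $\Facts{\LetG{\vec v_1 = e_1}{\VEL(\LetTerm_1,x)}}$ from Def.~\ref{def:factor_of_let_term}, the induction hypothesis $\Facts{\VEL(\LetTerm_1,x)} = \VEF(\Facts{\LetTerm_1}, x)$ yields the result after checking that the arrow-variable sum-out clause in Def.~\ref{def:factor_of_let_term} commutes with the $x$-summation in $\VEF$ (because $x$ is positive and disjoint from the arrow variables possibly bound in $\vec v_1$).

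Case~1 ($x \in \FV{\vec v_1}$, $x \notin \FV{\LetTerm_1}$) is a direct computation: by Def.~\ref{def:VE}, $\VEL(\LetTerm,x)$ has first definition $\vec v_1' = \LetG{\vec v_1 = e_1}{\vec v_1'}$, whose associated factor, by Def.~\ref{def:factor_of_def} and unfolding $\Sem\cdot$, is precisely $\FSum{x}{\Fact{\vec v_1 = e_1}}$. Since $x$ occurs only in the factor $\Fact{\vec v_1 = e_1}$ (not in $\Facts{\LetTerm_1}$), this matches $\VEF(\Facts{\LetTerm}, x) = \{\FSum{x}{\Fact{\vec v_1 = e_1}}\} \uplus \Facts{\LetTerm}_{\neg x}$.

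Case~2 is the crux and will require an auxiliary \emph{$\VA$-lemma} proved by induction on $\LetTerm_1$ following Def.~\ref{definition:VA}: writing $\VA(\LetTerm_1, \mathcal V) = \LetG{\vec v' = e'}{\LetTerm'}$ one has
\[
\Facts{\LetTerm_1} = \{\FSum{\mathcal A}{\Fact{\vec v' = e'}}\}_{\text{suitably reindexed}} \uplus \Facts{\LetTerm'},
\]
where $\{\FSum{\mathcal A}{\Fact{\vec v' = e'}}\}$ collects the factors of $\Facts{\LetTerm_1}_{\mathcal V}$ (up to summing out auxiliary bound arrow variables $\mathcal A$), and $\Facts{\LetTerm'} = \Facts{\LetTerm_1}_{\neg \mathcal V}$. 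The proof of this lemma relies on Lemma~\ref{lemma:facts-invariant} for the $\SD$-step in clause~2 of Def.~\ref{definition:VA}, and on a direct unfolding of $\Facts{\cdot}$ for clauses~3 and~4 (where the combined definition $\Seq{\vec v_1,\vec v'} = \LetG{\vec v_1=e_1}{\Seq{\vec v_1,e'}}$ produces, via Def.~\ref{def:factor_of_def} and Def.~\ref{def:factor_of_let_term}, exactly $\Fact{\vec v_1=e_1}\FProd\Fact{\vec v'=e'}$ with the appropriate sum-out of any arrow variable bound by $\vec v_1$). Once this lemma is in hand, Case~2 of Def.~\ref{def:VE} adds an extra outer binding that localises $x$; an analogous factor computation shows the resulting top-level factor is $\FSum{x}{\Fact{\vec v_1 = e_1} \FProd \Fact{\vec v' = e'}}$, which up to associativity/commutativity of $\FProd$ (Prop.~\ref{prop:factor_prod_ass_distr}) and clause~\ref{fact_prop_prtimes} coincides with $\FSum{x}{\BigFProd \Facts{\LetTerm}_x}$, matching the $\VEF$-step.

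The main obstacle I expect is the bookkeeping in the $\VA$-lemma concerning arrow variables: they force a non-symmetric treatment in both Def.~\ref{def:factor_of_let_term} (where they are summed out at the outermost let in which they occur linearly) and Def.~\ref{definition:VA} (where case~4 forces propagation of a fresh arrow variable $f$ into $\mathcal V$). Verifying that both mechanisms produce exactly the same sum-out pattern, so that $\Facts{\cdot}$ on the $\VEL$-side lines up with the $\BigFProd$-based $\VEF$-side, is the delicate combinatorial step; everything else reduces to algebraic manipulations of factors justified by Prop.~\ref{prop:factor_prod_ass_distr}.
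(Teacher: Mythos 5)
Your plan follows essentially the same route as the paper's proof: the same case split on Def.~\ref{def:VE} with induction for Case~3, the same use of Lemma~\ref{lemma:facts-invariant}/Lemma~\ref{lemma:SD_soundness} to pass through $\SD$, and your auxiliary ``$\VA$-lemma'' is precisely the paper's $\VA$-soundness lemma ($\Facts{\VA(\LetTerm,\mathcal V)}=\{\BigFProd\Facts{\LetTerm}_{\mathcal V}\}\uplus\Facts{\LetTerm}_{\neg\mathcal V}$), after which Case~2 reduces to the same algebraic manipulations via Prop.~\ref{prop:factor_prod_ass_distr} and the splitting of sum-outs over $x$, the bound arrow variable $f$ of $\vec v_1$, and the arrow variable of $\vec v'$. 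You correctly identify the arrow-variable bookkeeping as the delicate step; just note that your displayed equation should read $\Facts{\VA(\LetTerm_1,\mathcal V)}$ on the left (not $\Facts{\LetTerm_1}$), since $\VA$ merges the $\mathcal V$-factors into a single product factor.
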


From Theorem~\ref{th:soundnessVE_Let} and Def.~\ref{def:VE_factors} and~\ref{def:VE},  the following is immediate.
\begin{corollary}\label{cor:soundnessVE_Let_nary}
Given a let-term $\LetTerm$ with all output variables positive and given a sequence $(x_1,\dots,x_n)$ of positive variables defined in $\LetTerm$ and not appearing in the output of  $\LetTerm$, we have that:
$
	\VEF(\Facts{\LetTerm},(x_1,\dots, x_n)) = \Facts{\VEL(\LetTerm, (x_1,\dots, x_n))}
$.
\end{corollary}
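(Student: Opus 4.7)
The plan is a straightforward induction on $n$, using Theorem~\ref{th:soundnessVE_Let} as the one-step case and exploiting the fact that both $\VEF$ and $\VEL$ are defined by left-to-right iteration on the elimination sequence (see equation~\eqref{eq:ve_iterative} and Def.~\ref{def:VE}).

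For the base case $n=0$, both sides equal $\Facts{\LetTerm}$ by the convention that $\VEF$ and $\VEL$ act as the identity on the empty sequence. For the inductive step, assume the statement holds for every sequence of length $n-1$ (and every let-term satisfying the hypotheses). By the iterative definitions,
\begin{align*}
\VEF(\Facts{\LetTerm},(x_1,\dots,x_n))
&= \VEF\bigl(\VEF(\Facts{\LetTerm},x_1),(x_2,\dots,x_n)\bigr),\\
\VEL(\LetTerm,(x_1,\dots,x_n))
&= \VEL\bigl(\VEL(\LetTerm,x_1),(x_2,\dots,x_n)\bigr).
\end{align*}
Theorem~\ref{th:soundnessVE_Let} gives $\VEF(\Facts{\LetTerm},x_1) = \Facts{\VEL(\LetTerm,x_1)}$, so that
$$\VEF(\Facts{\LetTerm},(x_1,\dots,x_n)) = \VEF\bigl(\Facts{\VEL(\LetTerm,x_1)},(x_2,\dots,x_n)\bigr).$$
It then suffices to apply the inductive hypothesis to the let-term $\LetTerm' \Def \VEL(\LetTerm,x_1)$ and the sequence $(x_2,\dots,x_n)$.

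To invoke the inductive hypothesis I must check that the side conditions of the corollary transfer from $\LetTerm$ to $\LetTerm'$. By Prop.~\ref{prop:rewriting_VE_well-typed}, $\LetTerm'$ has the same type, the same free variables, and (via inspection of Def.~\ref{def:VE}) the same output pattern as $\LetTerm$; in particular $\LetTerm'$ is still positive. The rewriting step that produces $\LetTerm'$ only makes the occurrences of $x_1$ local to a new sub-expression, it does not touch the definitions introducing $x_2,\dots,x_n$ (whose binders become those of $\LetTerm'$ obtained by $\SD$, possibly renamed trivially), so each $x_i$ with $i\geq 2$ is still a defined variable of $\LetTerm'$ and remains absent from its output. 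The hypotheses of the corollary thus hold for $(\LetTerm',(x_2,\dots,x_n))$, and the inductive hypothesis applies, yielding
$$\VEF\bigl(\Facts{\LetTerm'},(x_2,\dots,x_n)\bigr) = \Facts{\VEL(\LetTerm',(x_2,\dots,x_n))} = \Facts{\VEL(\LetTerm,(x_1,\dots,x_n))}.$$
Combining the two displayed chains closes the induction.

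The only potential obstacle is the verification that the side conditions propagate through one step of $\VEL$, in particular that no output variable of $\LetTerm'$ becomes arrow-typed and that the remaining $x_i$'s are still defined. Both points are essentially bookkeeping about Def.~\ref{def:VE}, guaranteed by Prop.~\ref{prop:rewriting_VE_well-typed} (type preservation and invariance of free variables) together with the observation that the $\VEL$ strategy acts locally on the definitions mentioning $x_1$ and leaves all others in place (up to swapping via $\SD$).
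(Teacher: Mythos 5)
Your proof is correct and follows exactly the route the paper intends: the paper states the corollary as ``immediate from Theorem~\ref{th:soundnessVE_Let} and Definitions~\ref{def:VE_factors} and~\ref{def:VE},'' and your induction on $n$ is precisely that argument spelled out. The extra bookkeeping you supply --- checking via Prop.~\ref{prop:rewriting_VE_well-typed} and inspection of Def.~\ref{def:VE} that the hypotheses (positivity of the output, the remaining $x_i$ still being defined and absent from the output) transfer to $\VEL(\LetTerm,x_1)$ --- is exactly the detail the paper elides, and it is correctly handled.
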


Recall from Ex.~\ref{ex:bayesian_graph_and_fact_as_let-term} that Bayesian networks can be represented by let-terms, so the above result shows that $\VEL$ implements in $\LetCalculus$ the elimination of a set of random variables of a Bayesian network in any possible order. 
It is well-known that the  variable elimination algorithm may  produce intermediate factors that are not stochastic matrices. The standard literature on probabilistic graphical models refer to the intermediate factors  simply as vectors of non-negative real numbers, missing any  finer characterisation. We stress that our setting allows for a more precise characterisation of such factors, as they are represented by \emph{well-typed} terms of $\LetCalculus$: not all non-negative real numbers vectors fit in. In particular, the typing system suggests a hierarchy of the complexity of a factor that, by recalling Remark~\ref{rk:ll_fragment}, can by summarised by the alternation between direct sums $\oplus$ and products $\&$: the simplest factors have type  $\bigoplus_n1$, i.e.~probabilistic distributions over $n$ values, then we have those of type $\bigwith_m\bigoplus_n1$, i.e.~stochastic matrices describing probabilities over $n$ values conditioned from observations over $m$ values, then we have  more complex factors of type $\bigoplus_k\bigwith_m\bigoplus_n1$, i.e.~probabilistic distributions over~stochastic matrices, and so forth.

\subsection{Complexity Analysis}\label{subsect:complexity}

Prop.~\ref{prop:rewriting_VE_well-typed} gives a bound to the number of $\Red$ steps needed to rewrite $\LetTerm$ into $\VEL(\LetTerm, x)$, however some of these steps adds new definitions in the rewritten let-term. The size of $\VEL(\LetTerm, x)$, although greater in general than that of $\LetTerm$, stays reasonable, in fact it has un upper bound linear in the degree of $\Facts{\LetTerm}_x$ (Prop.~\ref{prop:size_VE}). 
We define the size of an expression as follows:
\begin{align*}
\Size(v)&\Def 1
&\Size(\Abs{\vec v}{e})&\Def \Size(\vec v) + \Size(e)
&\Size(\Seq{e,e'})&\Def \Size(e)+\Size(e')
\\
\Size(\App f{\vec x})&\Def1+\Size(\vec x)
&\Size(\StocA(\vec x))&\Def 1+\Size(\vec x)
&\Size(\LetG{\vec v = e}{e'})&\Def \Size(\vec v) + \Size(e) + \Size(e')
\end{align*}
%
By induction on  $\LetTerm$, we obtain the following (see Appendix~\ref{proof:let-rewriting}):
\begin{proposition}\label{prop:size_VE}
Given a let-term $\LetTerm$ and a positive variable $x$ as in Def.~\ref{def:VE}, we have that 
$\Size(\VEL(\LetTerm, x))\leq \Size(\LetTerm) + 4 \times\Size(\FactVar{\Facts{\LetTerm}_x}\setminus\FV{\LetTerm})$.
\end{proposition}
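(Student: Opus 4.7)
\textbf{Proof plan for Proposition~\ref{prop:size_VE}.}

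My plan is to proceed by structural induction on $\LetTerm$, mirroring the case analysis of Def.~\ref{def:VE}. Before attacking $\VEL$ directly, I would first establish two auxiliary size bounds for the helper procedures. For the swapping definitions $\SD$ of Def.~\ref{definition:SwapDef}, an inspection of each of its three cases shows that $\Size(\SD(\LetG{\vec v_1=e_1;\vec v_2=e_2}{\LetTerm'})) \leq \Size(\LetG{\vec v_1=e_1;\vec v_2=e_2}{\LetTerm'}) + 2\Size(\vec v_1)$: in case (1) the size does not grow; in case (2) the increase is $2\Size(\vec x) + 2 \leq 2\Size(\vec v_1)$ (since $\vec x \subseteq \FV{\vec v_1}$); in case (3) the increase is exactly $2\Size(\vec v_1^+)$. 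For the variable anticipation $\VA(\LetTerm, \mathcal V)$, I would prove by structural induction on $\LetTerm$ that $\Size(\VA(\LetTerm, \mathcal V)) \leq \Size(\LetTerm) + 2\Size(\mathcal D)$, where $\mathcal D \subseteq \FV{\LetTerm}^{\mathrm{bound}}$ is the set of bound variables that sit in patterns $\vec v_i$ whose associated definition gets duplicated by the procedure---which are precisely the definitions $\vec v_i=e_i$ of $\LetTerm$ such that $\FactVar{\Fact{\vec v_i = e_i}}$ meets $\mathcal V$ after propagation of arrow-variable dependencies.

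With these in hand, I would analyse the three cases of Def.~\ref{def:VE} for $\VEL(\LetTerm, x)$. In Case (1), a direct computation shows that the size increase is $2\Size(\vec v_1')$, and since every variable of $\vec v_1' = \vec v_1 \setminus \{x\}$ is bound in $\LetTerm$ but belongs to $\FactVar{\Fact{\vec v_1 = e_1}} \subseteq \FactVar{\Facts{\LetTerm}_x}$ (because the factor of $\vec v_1 = e_1$ contains $x$), we get $2\Size(\vec v_1') \leq 2\Size(\FactVar{\Facts{\LetTerm}_x}\setminus\FV{\LetTerm})$, well within the $4$-factor budget. In Case (2), the construction invokes $\VA$ on $\mathcal V=\{x\}$ or $\{x,f\}$ and then glues the result with one extra outer pattern, so by the auxiliary lemma the increase is bounded by $2\Size(\mathcal D) + 2\Size(\vec v_1)$, and again every variable counted lies in a factor containing $x$ (either directly through $\vec v_1$ or because $\VA$ has propagated arrow-variable dependencies to reach $x$). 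In Case (3), I would use the induction hypothesis on the smaller let-term $\LetTerm_1$ (noting that $\FactVar{\Facts{\LetTerm_1}_x}\setminus\FV{\LetTerm_1} \subseteq \FactVar{\Facts{\LetTerm}_x}\setminus\FV{\LetTerm}$ since the factor of $\vec v_1 = e_1$ adds only a factor not containing $x$), then pay the extra $2\Size(\vec v_1)$ from the outer $\SD$; once more, these variables are either still bound after gathering, hence counted among $\FactVar{\Facts{\LetTerm}_x}\setminus\FV{\LetTerm}$, or are already accounted for in the inductive bound.

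The factor $4$ emerges from the coincidence that, in the worst situation, each bound variable of a gathered definition may be duplicated twice: once by the inner $\VA$ (which promotes it to the outer pattern while keeping it in the tupled output), and once by the final $\Mult$/$\SD$ step. The main obstacle I anticipate is the bookkeeping in Case (2) combined with the recursive step: one must argue carefully that the set of ``gathered'' definitions inside $\VA(\LetTerm_1, \{x\} \cup \{f\})$ corresponds exactly to the factors of $\Facts{\LetTerm}$ that depend on $x$ (either directly or through a chain of arrow variables), so that the duplicated pattern size is genuinely bounded by $\Size(\FactVar{\Facts{\LetTerm}_x}\setminus\FV{\LetTerm})$. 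This hinges on the same arrow-variable propagation argument that already underlies Def.~\ref{def:factor_of_let_term} and Theorem~\ref{th:soundnessVE_Let}, so the required connection is structural and should yield to a careful case analysis rather than any new technique.
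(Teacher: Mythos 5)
Your overall strategy---an auxiliary size bound for $\SD$, an auxiliary size bound for $\VA$ proved by induction, then an induction on $\LetTerm$ following the cases of Def.~\ref{def:VE}---is exactly the paper's (Lemmas~\ref{lemma:size_SD} and~\ref{lemma:size_VA}, supported by Lemma~\ref{lemma:fam_vars} to relate the variable sets across the recursion). The problem is that both of your intermediate bounds are quantitatively wrong, and the errors do not wash out. For $\SD$: in case (2) of Def.~\ref{definition:SwapDef} the term grows by exactly $2+2\Size(\vec x)$ (the fresh $g$ and the application $\App g{\vec x}$ each contribute $1+\Size(\vec x)$), and your inequality $2+2\Size(\vec x)\leq 2\Size(\vec v_1)$ fails whenever $\vec x$ exhausts $\vec v_1$, e.g.\ when $\vec v_1$ is a single positive variable occurring in $e_2$: the growth is $4$ while $2\Size(\vec v_1)=2$. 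The paper keeps the additive $+2$ explicit and absorbs it only at the very end; this is the actual origin of the constant $4$: a duplicated pattern $\vec x$ costs at most $2\Size(\vec x)+2$ while contributing $4\Size(\vec x)$ of fresh budget, and $2\Size(\vec x)+2\leq 4\Size(\vec x)$ because $\Size(\vec x)\geq 1$ in the relevant cases. Your "each variable duplicated twice" explanation is not what happens---the gathered variables accumulated in $\vec v'$ are not re-duplicated by the outer $\Mult$ step of Def.~\ref{def:VE}.

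More seriously, your $\VA$ bound $\Size(\LetTerm)+2\Size(\mathcal D)$, with $\mathcal D$ the bound variables of the \emph{gathered} definitions, undercounts. The $\SD$ steps fired in case 2 of Def.~\ref{definition:VA} duplicate the variables $\FV{\vec v_1}\cap\FV{e'}$, i.e.\ pattern variables of a \emph{non-gathered} definition that are consumed by the gathered expression $e'$; these interface variables are not in your $\mathcal D$ (the factor $\Fact{\vec v_1=e_1}$ need not meet $\mathcal V$), yet each such step costs $2+2\Size(\vec x)$. Concretely, in $\LetG{y=\StocA_1;\, z=\StocA_2(y,w)}{z}$ with $\mathcal V=\{w\}$ only the second definition is gathered, so $\mathcal D=\{z\}$ and your bound allows growth $2$, but the final $\SD$ produces $\LetG{g=\Abs{y}{\StocA_2(y,w)};\, y=\StocA_1;\, z=\App g y}{z}$, a growth of $4$. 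Measuring against $\FactVar{\Facts{\LetTerm}_{\mathcal V}}\setminus\FV{\LetTerm}$ with constant $4$ is precisely what repairs this: Lemma~\ref{lemma:fam_vars} shows this set contains, disjointly, both the recursive budget $\FactVar{\Facts{\LetTerm_1}_{\mathcal V}}\setminus\FV{\LetTerm_1}$ and the interface variables $\FactVar{\Facts{\LetTerm}_{\mathcal V}}\cap\FV{\vec v_1}$, so every duplication is paid for. As stated, your two lemmas do not compose to the claimed $4\times$ bound; you would need to restate them with the larger variable set and the constant $4$ from the outset.
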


\section{Conclusions and discussion}\label{sec:conclusions}

We have identified a fragment $\LetCalculus$ of the linear simply-typed $\lambda$-calculus which can express syntactically \emph{any} factorisation induced by a run of the variable elimination algorithm over a Bayesian network. In particular, we  define a rewriting (Fig.~\ref{fig:let-rewriting}) and a reduction strategy $\VEL$ (Def.~\ref{def:VE}) that, given a sequence $(x_1,\dots, x_n)$ of variables to eliminate,  transforms in $O(n\Size(\LetTerm))$ steps a let-term $\LetTerm$ into a let-term $\VEL(\LetTerm, (x_1,\dots, x_n))$ associated with the factorisation generated by the $(x_1,\dots, x_n)$ elimination (Corollary~\ref{cor:soundnessVE_Let_nary}).  We have proven that the size of $\VEL(\LetTerm, (x_1,\dots, x_n))$ is linear in the size of $\LetTerm$ and in the number of variables involved in the elimination process (Prop.~\ref{prop:size_VE}).

Our language is a fragment of a more expressive one \cite{EhrhardT19}, in which several classes of stochastic models can be encoded.  Our work is therefore a  step towards defining standard  exact  inference algorithms on a general-purpose stochastic language, as first propounded in \cite{KollerMP97} with the  
 goal is to have general-purpose algorithms of \emph{reasonable} cost, usable on any model expressed in the language.

While it is known (see \cite{KollerBook}, Sect. 9.3.1.3) that $\VE$ produces intermediate factors that are not  conditional probabilities---\ie not stochastic matrices---  our approach is able to associate \emph{a term and a type} to such factors.
In fact, the types of the calculus $\LetCalculus$ give a logical description of the interdependences between the factors generated by the \VE{} algorithm: the  grammar is more expressive than just the types of stochastic matrices between tuples of booleans (Remark~\ref{rk:ll_fragment}). 

\paragraph{Discussion and perspectives.}
Since our approach is theoretical, the main goal has been to give a formal framework for proving \emph{the soundness and the completeness} of $\VEL$. For that sake, the rewriting rules of Fig.~\ref{fig:let-rewriting} are reduced to a minimum, in order to keep reasonable the number of  cases in the proofs. The drawback is that the rewritten terms have a lot of bureaucratic code, as the reader may realize by looking at Fig.~\ref{fig:example_reduction}.  Although this fact is not crucial from the point of view of the asymptotic complexity, when aiming at   a prototypical implementation, one may  enrich the rewriting system with more rules to avoid useless code. 

The grammar of let-terms recalls the notion of administrative normal form (abbreviated ANF), which is often used as an intermediate representation in the compilation of functional programs. In particular, let-terms and ANF share in common the restriction of applications to variables, so suggesting a precise evaluation order. Several optimisations are defined as transformations over ANF, even considering some \emph{let-floating} rules analogous to the ones considered in Fig.~\ref{fig:let-rewriting}, see e.g.~\cite{let-floating}. 
Comparing these optimisations is not trivial as the cost model is different. E.g.~\cite{let-floating} aims to reduce heap allocations, while here we are factoring algebraic expressions to minimise  floating-point operations.
 We plan to investigate more in detail  the possible interplay/interference between these techniques.


The quest for optimal factorisations 
is central not only  to Bayesian programming. In particular, these techniques can be applied to large fragments of $\lambda$-calculus, suggesting heuristics for making tractable the computation of the quantitative semantics 
of other classes of $\lambda$-terms than the one identified by $\LetCalculus$. This is of great interest in particular because these semantics are relevant in describing quantitative observational equivalences, as hinted for example by the full-abstraction results achieved in probabilistic programming, \eg~\cite{EhrPagTas14,EhahrdPT18fa,clairambault:hal-01886956}. 

Finally, while we have  stressed that our work is theoretical, we do not mean to say that foundational understanding in general, and this work in particular, is irrelevant to the practice.  Let us mention one such  perspective.  
 Factored inference is central to inference in graphical models, but
	scaling it up to the more complex  problems
	expressible as probabilistic programs proves difficult---research in this direction is only  beginning, and is mainly guided by implementation techniques  \cite{PfefferRKO18,MansinghkaSHRCR18,HoltzenBM20}. 
	We believe  that a foundational understanding of   factorisation  on the structure of the program---starting from the most elementary algorithms, as we do here--- is  also an important   step  to allow  progress in this direction. 

\paragraph{On dealing with evidence.}
\newcommand{\Rain}{\mathsf{Rain}}
\newcommand{\Wet}{\mathsf{Wet}}
\newcommand{\true}{\mathsf{true}}
We have focused  on the computation of  marginals, without explicitly  treating \emph{posteriors}. Our approach  could  easily be  adapted to deal with evidence (hence, posteriors), by extending  syntax and rewriting rules to include an \texttt{observe} construct as in  \cite{HoltzenBM20} or  in  \cite{FaggianPV24}. 
\SLV{}{We have preferred not to do this here in order to keep  presentation and proofs simple.     We stress however that  the main ingredients to  perform  the Variable Elimination algorithm remain the same when dealing with evidence.  Indeed, textbooks usually choose to  present exact inference  for  marginal distribution (typically via the \VE{} algorithm)---adapting to evidence is than straightforward. \todoc{We can stop here, or at the end of the paragraph}
   For the  details we  refer to  \cite{KollerBook} (9.3.2 “Dealing with Evidence”, page 303) and  \cite{Darwiche2009}, Ch. 6.7 (“Computing posterior marginals”).  
}


%
%

\begin{credits}
\paragraph{Acknowledgements.}  
We are deeply grateful to Marco Gaboardi for suggesting investigating the link between variable elimination and linear logic, as well as to Robin Lemaire, with whom we initiated this research. 
Work supported by the ANR  grant PPS  ANR-19-CE48-0014
 and ENS de Lyon research grant.  
\end{credits}

 \bibliographystyle{splncs04}
 \bibliography{biblio}

\newpage
\appendix
\section*{Appendix}

\renewcommand{\RED}[1]{#1}
This Appendix includes more technical material, and missing proofs.

\section{Section~\ref{sect:syntax}}
\label{proof:pcoh}


\subsubsection{Probabilistic Coherence Spaces.} The model of weighted relational semantics is simple but it misses an essential information: the difference between the tensor or the arrow between two types\footnote{In categorical terms, the weighted relational semantics forms a category which is compact closed.}. This means that from a denotation $\Sem e\in\RP^{\Web{\PosA}\times\Web{\TypeA}}$ of a closed expression $e$, we do not know whether $e$ computes a pair of two expressions, one of type $\PosA$ and one of type $\TypeA$, or a function from type $\PosA$ to type $\TypeA$.  Probabilistic coherence spaces \cite{Girard2003,danosehrhard} can be seen as a kind of enrichment of $\RP$-weighted relational semantics which recover this information. 

We sketches here the denotational interpretation of  probabilistic coherence spaces and a nice consequence of it. This model associates an expression $e$ with the same matrix $\Sem e$ as in weighted relational semantics (Figure~\ref{fig:denotation_terms}), but it endows the web interpreting a type $\TypeA$ with a set $\Clique{\TypeA}\subseteq\RP^{\Web\TypeA}$ of so-called ``probabilistic cliques'', so that $\Sem e$ can be proven to maps vectors in $\Clique{\FV e}$ to vectors in $\Clique{\Type e}$ (Proposition~\ref{prop:pcoh_correctness}). In particular, the denotation of a closed expression can be seen as a vector in $\Clique{\Type e}$. 

\bigskip
We only sketch here the model and we refer the reader to the literature, especially \cite{danosehrhard,EhahrdPT18fa}, for more details and the omitted proofs.

We define a \textdef{polar operation} on sets of vectors $\mathrm P\subseteq \RP^{\SetA}$  as
\begin{equation}\label{eq:polarity}
\mathrm P^\perp\Def\left\{\MatB\in \RP^{\SetA} \st \forall \MatA\in \mathrm P\ \sum_{a\in\SetA}\MatA_a\MatB_a\leq 1\right\}.
\end{equation}
 Polar satisfies the following immediate properties: $\mathrm P\subseteq
 \mathrm P^{\perp\perp}$, if $\mathrm P\subseteq\mathrm Q$ then
 $\mathrm Q^\perp\subseteq \mathrm P^\perp$, and then $\mathrm P^\perp=\mathrm P^{\perp\perp\perp}$. 

A \textdef{probabilistic coherence space}, or PCS for short, is a pair
$(\SetA, P)$ where $\SetA$ is a
finite set called the \textdef{web} of the space and
$P$ is a subset of $\RP^{\Web {\Pcs{X}}}$ satisfying:
\begin{enumerate}
\item $P^{\perp\perp}=P$,
\item $\forall \ValA\in\SetA$, $\exists \ScalA>0$, $\forall \MatA\in P$, $\MatA_\ValA\leq \ScalA$,
\item $\forall \ValA\in\SetA$, $\exists\MatA\in P$, $\MatA_\ValA>0$.
\end{enumerate}

Condition (1) is central and assures $P$ to have the closure properties necessary to interprets probabilistic programs (namely, convexity and Scott continuity). Condition (2)  requires the projection  of $P$   in   any  direction   to   be  bounded,   while (3)  forces  $P$   to    cover   every
direction\footnote{The conditions (2) and (3) are introduced in \cite{danosehrhard} for keeping finite all the scalars involved in the case of infinite webs, yet they are not explicitly stated in the definition of a PCS in \cite{Girard2003}. We consider appropriate to keep them also in the finite dimensional case, as they assure that the set $P$ is the unit ball of the whole cone  $\RP^{\SetA}$ endowed with the suitable norm.}.

Let us lift the weighted relational denotation of a type $\TypeA$ to PCS, by defining a set $\Clique{\TypeA}\subseteq \RP^{\Web{\TypeA}}$ so that the pair $(\Web{\TypeA}, \Clique{\TypeA})$ is a PCS. The definition of  $\Clique{\TypeA}$ is by induction on $\TypeA$:
\begin{align*}
\Clique{\Bool}&\Def\{(\ScalA,\ScalB)\in \RP^{\Web{\Bool}}\st \ScalA+\ScalB\leq 1\}\,,
\\
\Clique{\PosA\otimes\TypeA}&\Def\{\MatA\otimes\MatB\in\RP^{\Web{\PosA\otimes\TypeA}}\st \MatA\in\Clique{\PosA}, \MatB\in\Clique{\TypeA}\}^{\perp\perp}\,,
\\
\Clique{\PosA\multimap\TypeA}&\Def\{\MatA\in\RP^{\Web{\PosA\multimap\TypeA}}\st \forall\VecA\in\Clique{\PosA}, \App{\MatA}{\VecA}\in\Clique{\TypeA}\}\,,
\end{align*}
where the tensor $\MatA\otimes\MatB$ of two vectors $\MatA\in\RP^{\SetA}$, $\MatB\in\RP^{\SetB}$ is given by: $(\MatA\otimes\MatB)_{(\ValA,\ValB)}\Def\MatA_{\ValA}\,\MatB_{\ValB}$, for any $(\ValA, \ValB)\in\SetA\times\SetB$. 

\begin{example}
Recall that the web of the denotation of any type can be seen as, basically, a set of tuples of booleans ``structured'' by parenthesis. However, the set $\Clique{\TypeA}$ is different depending on which $\TypeA$ we consider.

If $\TypeA$ is positive, then $\Clique{\TypeA}$ contains the vectors representing the subprobabilistic distributions of the tuples in $\Web\TypeA$. For example, $\Clique{\Bool\otimes(\Bool\otimes\Bool)}$ is the set of vectors $(\ScalA_{(b_1,(b_2, b_3))})_{b_i\in\{\True,\False\}}$ of total mass $\sum_{b_1,b_2,b_3\in\{\True,\False\}}\ScalA_{(b_1,(b_2, b_3))}\leq1$. 

If $\TypeA$ is an arrow type 
between two positive types, then $\Clique{\TypeA}$ contains the sub-stochastic matrices between these two positive types. For example, $\Clique{\Bool\multimap(\Bool\otimes\Bool)}$ is the set of matrices $(\ScalA_{b_1,(b_2, b_3)})_{b_i\in\{\True,\False\}}$ such that for all $b_1\in\{\True,\False\}$, $\sum_{b_2,b_3\in\{\True,\False\}}\ScalA_{b_1,(b_2, b_3)}\leq 1$. 

If  $\TypeA$ is a general arrow type 
then the situation can be subtler. %
For instance $\Clique{\Bool\otimes(\Bool\multimap\Bool)}$ is the set of matrices $(\ScalA_{b_1,(b_2, b_3)})_{b_i\in\{\True,\False\}}$ such that there are \(\lambda^c\in\Clique{\Bool\multimap\Bool}\) for \(c=\True,\False\) (that is \(\lambda^c_{b,\True}+\lambda^c_{b,\False}\leq 1\) for \(b,c=\True,\False\)) and \(\alpha\in\Clique{\Bool}\) such that \(\lambda_{b_1,(b_2,b_3)}=\alpha_{b_1}\lambda^{b_1}_{b_2,b_3}\).
%
\end{example}

Moreover, we associate a finite set of variables $\mathcal V$ with a set $\Clique{\mathcal V}\subseteq \RP^{\Web{\mathcal V}}$:
\begin{align*}
\Clique{\mathcal V}&\Def\{\bigotimes_{v\in\mathcal V}\MatA_v\st \MatA_v\in\Clique{\Type v}\}^{\perp\perp}\,,
\end{align*}
where the indexed tensor $\bigotimes_{v\in\mathcal V}\MatA_v$ of a family of vectors $\MatA_v\in\RP^{\Type v}$, for $v\in\mathcal V$, is given by: $\bigl(\bigotimes_{v\in\mathcal V}\MatA_v\bigr)_{\FValA}\Def\prod_{v\in\mathcal V}(\MatA_v)_{\FValA}$, for $\FValA\in\Web{\mathcal V}$. 

\begin{proposition}[\cite{Girard2003,danosehrhard}]\label{prop:pcoh_correctness}
If $e$ is a well-typed expression, then: for every $\MatA\in\Clique{\FV e}$, $\MApp{\Sem e}{\MatA}\in\Clique{\Type e}$. 
\end{proposition}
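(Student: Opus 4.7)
The plan is to prove this by structural induction on the typing derivation of $e$, following the rules in Fig.~\ref{fig:terms_types}. The induction hypothesis states that for every sub-expression $e'$ and every $\MatB \in \Clique{\FV{e'}}$, we have $\MApp{\Sem{e'}}{\MatB} \in \Clique{\Type{e'}}$.

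The base cases are essentially direct from the definitions in Fig.~\ref{fig:denotation_terms}. For a variable $v$, the matrix $\Sem{v}$ is the identity $\delta$, so $\MApp{\Sem v}{\MatA} = \MatA$ lies in $\Clique{\Type v}$ by hypothesis. For a stochastic matrix application $\StocA(\vec x)$, the assumption that $\StocA$ represents a (sub-)stochastic matrix between tuples of booleans means $\StocA \in \Clique{\PosA \multimap \PosB}$ when typed $\PosA \multimap \PosB$; the result then follows from the definition of $\Clique{\PosA \multimap \PosB}$, restricting a clique over $\FV{\vec x}$ to the diagonal on $\Web{\vec x}$. For a linear application $\App{f}{\vec x}$, the denotation picks out the pairing of $f$'s function-trace with $\vec x$'s value, and membership in the clique $\Clique{\Type{\App f{\vec x}}}$ reduces to unfolding the definition of $\Clique{\PosA \multimap \TypeA}$ applied to the projected component from $\MatA$.

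For the inductive cases, the lambda-abstraction case $\Abs{\vec x}{e'}$ is the currying dual of the linear application case: showing that $\MApp{\Sem{\Abs{\vec x}{e'}}}{\MatA} \in \Clique{\PosA \multimap \TypeA}$ unfolds directly to requiring that for every $\VecA \in \Clique{\PosA}$, the further application lands in $\Clique{\Type{e'}}$, which is exactly the induction hypothesis on $e'$ (applied to $\MatA \otimes \VecA$ once we verify it belongs to $\Clique{\FV{e'}}$). The let-expression case $\Let{\vec v}{e'}{e''}$ follows a similar composition pattern: by Fig.~\ref{fig:denotation_terms}, $\Sem{\LetG{\vec v=e'}{e''}}$ is essentially the composition of $\Sem{e'}$ with $\Sem{e''}$ along the bound pattern $\vec v$, and the linearity side-condition on the typing rule (requiring that free arrow variables of $e'$ and $e''$ are disjoint) guarantees that the composition respects the tensor structure of the cliques of free variables.

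The main obstacle is the tuple case $\Seq{e', e''}$, because $\FV{e'}$ and $\FV{e''}$ may share \emph{positive} variables (the typing rule only forbids sharing of arrow variables). So given $\MatA \in \Clique{\FV{\Seq{e',e''}}}$, one cannot directly separate it into independent components for $e'$ and $e''$. The resolution uses the content of Remark~\ref{rk:typing_as_MLL}: every positive type $\PosA$ is isomorphic to a direct sum $\bigoplus_n \mathbf{1}$, which carries a canonical co-algebra (duplication/erasure) structure, so the diagonal $\Web{\PosA} \to \Web{\PosA} \times \Web{\PosA}$ lifts to a map $\Clique{\PosA} \to \Clique{\PosA \otimes \PosA}$. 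Using this, $\MatA$ can be ``duplicated'' along shared positive variables to produce a vector in $\Clique{\FV{e'}} \otimes \Clique{\FV{e''}}$ (in the bipolar closure sense). Then the induction hypothesis applied separately to $e'$ and $e''$ yields $\MApp{\Sem{e'}}{-} \otimes \MApp{\Sem{e''}}{-} \in \Clique{\PosA} \otimes \Clique{\TypeA}$, and the bipolar closure in the definition of $\Clique{\PosA \otimes \TypeA}$ delivers the result. A standard continuity and bilinearity argument (the polar operation is defined via linear pairings, so $\mathrm{P}^{\perp\perp}$ is closed under the relevant operations) lets us extend the claim from simple tensors to all of the bipolar closure.
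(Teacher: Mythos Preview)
The paper does not actually prove this proposition: it is stated with attribution to \cite{Girard2003,danosehrhard}, and the surrounding text explicitly says ``we refer the reader to the literature \dots\ for more details and the omitted proofs.'' So there is no in-paper proof to compare against. Your structural induction on the typing derivation is the standard soundness argument for the PCS model of linear logic, and is essentially what those references contain.

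Your sketch is broadly correct. One point worth tightening: the let-expression case shares the same obstacle you identify for tuples, since positive variables may be shared between $e'$ and $e''$ there as well (only \emph{arrow} variables are required to be disjoint by the typing rule). The co-algebra duplication on positive types is therefore needed in that case too, not just for pairs; it is cleanest to isolate once and for all the lemma that the diagonal on any positive $\PosA$ maps $\Clique{\PosA}$ into $\Clique{\PosA\otimes\PosA}$, and then invoke it uniformly wherever contraction of positive variables occurs. Your final ``continuity and bilinearity'' step is the right idea but slightly loosely phrased: the precise fact is that $\Sem{e'}$ and $\Sem{e''}$ induce a linear map on the tensor, and since $\Clique{\cdot}$ is a bipolar, membership can be tested against $\Clique{\cdot}^\perp$; linearity then reduces the check to simple tensors, which is where the induction hypothesis applies.
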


With any positive type \(P\) we associate its dimension
\(\Ptdim P\in\Nat\) by \(\Ptdim\Bool=2\) and %
\(\Ptdim{P\otimes Q}=\Ptdim P\Ptdim Q\). This means that \(\Ptdim P\)
is the cardinality of \(\Web P\). And with any type \(T\) we associate
its height \(\Theight T\in\Nat\), the definition is: %
\(\Theight P=1\), \(\Theight{P\multimap T}=\Ptdim P\times\Theight
T\) %
and \(\Theight{P\otimes T}=\Theight T\). Then the following property
is easy to prove:
\begin{lemma}
  For any type \(T\) and any \(x\in\Clique T\) one has %
  \(\sum_{a\in\Web T}x_a\leq\Theight T\).
\end{lemma}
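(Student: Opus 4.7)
The plan is to prove the statement by induction on the type $T$, rephrasing the inequality $\sum_{a\in\Web T}x_a\leq\Theight T$ in a form that interacts well with bi-orthogonality. Concretely, if $\mathbf 1_T\in\RP^{\Web T}$ denotes the all-ones vector, the bound is equivalent to $\tfrac{1}{\Theight T}\mathbf 1_T\in\Clique T^\perp$. I would also rely on a routine side-lemma, proved by a separate induction on positive types $P$: the canonical basis vector $e_a\in\RP^{\Web P}$ lies in $\Clique P$ for every $a\in\Web P$; this is immediate for $\Bool$ and follows in the tensor case from $e_{(a,b)}=e_a\otimes e_b$ being a generator of $\Clique{P_1\otimes P_2}$.

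For the base case $T=\Bool$, we have $\Theight\Bool=1$ and the bound $\alpha+\beta\leq 1$ is exactly the definition of $\Clique\Bool$. For the general tensor case $T=P\otimes S$ (which subsumes the case where both sides are positive), I would argue directly via polarity: since $\Clique T$ is the bi-orthogonal of the generating set $\{\alpha\otimes\beta\mid\alpha\in\Clique P,\ \beta\in\Clique S\}$, it suffices to test $\tfrac{1}{\Theight S}\mathbf 1_T$ against generators, and
\[
\sum_{(a,b)}(\alpha\otimes\beta)_{(a,b)}\cdot\tfrac{1}{\Theight S}=\tfrac{1}{\Theight S}\bigl(\textstyle\sum_a\alpha_a\bigr)\bigl(\textstyle\sum_b\beta_b\bigr)\leq\tfrac{1}{\Theight S}\cdot 1\cdot\Theight S=1,
\]
using the induction hypothesis on the positive type $P$ (for which $\Theight P=1$) and on $S$. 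Hence $\tfrac{1}{\Theight S}\mathbf 1_T\in\Clique T^\perp$, and since $\Theight T=\Theight S$, the required bound follows for every $x\in\Clique T$.

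For the arrow case $T=P\multimap S$, I would use the side-lemma: for each $a\in\Web P$, $e_a\in\Clique P$, so $x\cdot e_a\in\Clique S$ by definition of $\Clique{P\multimap S}$, and a direct computation gives $(x\cdot e_a)_b=x_{(a,b)}$. The induction hypothesis on $S$ then yields $\sum_b x_{(a,b)}\leq\Theight S$; summing over $a\in\Web P$ gives $\sum_{(a,b)}x_{(a,b)}\leq\Ptdim P\cdot\Theight S=\Theight T$.

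The only subtlety worth highlighting is the tensor case: the same slicing trick used for the arrow case would only give the loose bound $\Ptdim P\cdot\Theight S$, which is strictly larger than $\Theight S=\Theight{P\otimes S}$ as soon as $\Ptdim P>1$. The point is that in a tensor, the ``masses'' of different $P$-slices cannot all be saturated simultaneously, and this constraint is precisely what the bi-orthogonal closure encodes; working against generators and exploiting the fact that $\sum_a\alpha_a\leq 1$ for $\alpha\in\Clique P$ with $P$ positive is what restores the tight bound.
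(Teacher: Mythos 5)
Your proof is correct. The paper states this lemma without proof (it is dismissed as ``easy to prove''), so there is no official argument to compare against; your induction on $T$ fills the gap soundly. The two key points are handled properly: the basis vectors $e_a$ lie in $\Clique P$ for positive $P$ (needed to slice in the arrow case, giving the factor $\Ptdim P$ in $\Theight{P\multimap S}$), and in the tensor case you correctly exploit $\Clique{P\otimes S}^\perp=\{\alpha\otimes\beta\}^{\perp\perp\perp}=\{\alpha\otimes\beta\}^\perp$ to test $\tfrac{1}{\Theight S}\mathbf 1$ only against generators, which is exactly what recovers the tight bound $\Theight{P\otimes S}=\Theight S$ rather than the loose $\Ptdim P\cdot\Theight S$ that naive slicing would give.
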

It can be strengthen as follows. Define \emph{probabilistic spaces
  with totality} as triples \((S,P,\mathcal T)\) where \((S,P)\) is a
probabilistic coherence space and \(\mathcal T\subseteq P\) satisfies
\(\mathcal T={\mathcal T}^{\perp\perp}\) for the following notion of
orthogonality:
\({\mathcal T}^\perp=\{x'\in P^{\perp} \St \forall x\in\mathcal T\
\sum_{a\in S}x_ax'_a=1\}\). The elements of \(\mathcal T\) are the
\emph{total cliques} of the probabilistic coherence space with
totality. Types can be interpreted as such objects. In the
interpretation of \(\Bool\), the total cliques \(x\) are those such that
\(x_\True+x_\False=1\) (the true probability distributions), and in
the interpretation of \(\Bool\multimap\Bool\) the total cliques are
the stochastic matrices, that is the matrices
\((\lambda_{b_1,b_2})_{b_1,b_2\in\{\True,\False\}}\) such that for all $b_1\in\{\True,\False\}$, $\sum_{b_2\in\{\True,\False\}}\ScalA_{b_1,b_2}= 1$, etc.
\begin{proposition}
\label{prop:totality-mass_App}
  For any type \(T\) and any \(x\in\Clique T\) which is total one has %
  \(\sum_{a\in\Web T}x_a=\Theight T\). In particular, any expression $e$ of $\LetCalculus$ is total. 
\end{proposition}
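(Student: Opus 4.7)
The plan is to prove a stronger statement by induction on the typing derivation: if $e$ is a well-typed expression with free variables $\FV e$, and for each $v \in \FV e$ we are given a \emph{total clique} $x_v$ of type $\Type v$, then the vector $\MApp{\Sem e}{\bigotimes_{v \in \FV e} x_v}$ is a total clique of type $\Type e$. Applied to a closed $e$, where $\FV e = \emptyset$, this yields that $\Sem e$ itself is a total clique, and by the characterisation of total mass of total cliques, we obtain $\sum_{a} \Sem e_{\star, a} = \Theight{\Type e}$ as desired.

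First, I would define inductively on types the sets $\mathcal T_T \subseteq \Clique T$ of total cliques, following the sketch in Section~\ref{proof:pcoh}: $\mathcal T_{\Bool}$ consists of true distributions $(\alpha_\True, \alpha_\False)$ with $\alpha_\True + \alpha_\False = 1$; $\mathcal T_{P \multimap T}$ consists of cliques sending every total input to a total output; and $\mathcal T_{P \otimes T}$ is defined analogously, exploiting the isomorphism indicated in Remark~\ref{rk:ll_fragment}. A preliminary induction on $T$ checks that every $x \in \mathcal T_T$ satisfies $\sum_{a \in \Web T} x_a = \Theight T$: for positive $T$ the total mass is $1 = \Theight T$; for $P \multimap T$ the total mass is $\sum_{a \in \Web P} \Theight T = \Ptdim P \cdot \Theight T = \Theight{P \multimap T}$; and the tensor case reduces similarly.

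The induction on the typing derivation then treats each rule of Fig.~\ref{fig:terms_types}. The variable case uses that the identity matrix is total. The stochastic matrix case $\StocA(\vec x)$ uses the assumption that $\StocA$ is stochastic, hence a total clique of its arrow type, composed with totals for its positive arguments. The application, $\lambda$-abstraction, and let-in cases follow essentially from the definition of $\mathcal T_{P \multimap T}$ and its closure under composition (for let-in, the summing-out over the bound pattern is exactly the matrix product that composes the denotation of the definition with that of the body). The pair case $(e_1, e_2) : P \otimes T$ is the delicate one when $e_1$ and $e_2$ share positive free variables, since the denotation then involves a diagonal at positive types; the key observation is that positive types carry a comonoid structure whose diagonal preserves totality, reflecting the duplicability of positive values highlighted in Remark~\ref{rk:typing_as_MLL}.

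The main obstacle will be this treatment of shared positive free variables in the pair and let-in cases, ensuring that the implicit contraction-at-positive-types preserves totality. Beyond this, the main technical burden is checking that the sets $\mathcal T_T$ are closed under the operations used in the semantics (including the polar closure needed for the tensor case); once done, the proof is a routine induction, and the final statement of Proposition~\ref{prop:totality-mass} follows by instantiating at a closed $e$.
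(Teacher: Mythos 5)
Your plan is sound, and it is worth knowing that the paper itself gives \emph{no} proof of this proposition: it only installs the standard machinery of totality candidates ($\mathcal T=\mathcal T^{\perp\perp}$ for the orthogonality $\sum_a x_a x'_a=1$) and asserts the statement, implicitly relying on the folklore fact that biorthogonally closed candidates are preserved by every interpreting operation, so that totality of all terms follows by a logical-relations argument. You take a genuinely different (more elementary) route: you define $\mathcal T_T$ directly by induction on types and run a hands-on induction on typing derivations. The two definitions agree on this fragment, and each buys something. The biorthogonal framing gives you for free exactly the closure properties you flag as ``technical burdens'': closure under the subconvex combinations that appear in the pair and let cases, and the fact that $\Clique{P\otimes T}$ decomposes as (sub)distributions over $\Web P$ of elements of $\Clique T$. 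Your direct definition, conversely, makes the mass computation transparent: evaluating an arrow clique at the Dirac distributions $\delta_a\in\mathcal T_P$ to get $\sum_b\phi_{a,b}=\Theight T$ for each $a\in\Web P$, hence $\Ptdim P\cdot\Theight T$ overall, is exactly the right argument, and your observation that the diagonal at positive types sends a distribution $\alpha$ to a correlated distribution of mass $\sum_a\alpha_a=1$ is precisely why contraction of positive variables is harmless while arrow variables must stay linear. The one step you should not underestimate is ``the tensor case reduces similarly'': for $P\otimes T$ with $T$ an arrow type, both the mass computation and the totality of pairs/let-bindings require the decomposition $\lambda_{a,c}=\alpha_a\lambda^a_c$ illustrated in the paper's example for $\Bool\otimes(\Bool\multimap\Bool)$, i.e.\ the isomorphism $P\otimes T\simeq\bigoplus_{\Ptdim P}T$ of Remark~\ref{rk:ll_fragment}; this is the lemma you must actually prove (or import from the biorthogonal setting) for the induction to close.
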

%


\section{Section~\ref{sect:factors}}
\label{proof:factors}



\begin{lemma}\label{lemma:facts_variables}
Let $\LetTerm\Def\LetG{\vec v_1=e_1; \dots; \vec v_n=e_n}{\vec v_{n+1}}$ be a let-term, then  
\[
	\FactVar{\Facts\LetTerm} = 
		\FV{\LetTerm}
		\uplus
		\left(\FV[a]{\vec v_{n+1}}\setminus\FV{\LetTerm}\right)
		\uplus
		\left(\biguplus_{i=1}^n\FV[+]{\vec v_i}\right)
\]
\end{lemma}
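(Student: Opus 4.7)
The plan is to prove the lemma by induction on $n$, the number of definitions in $\LetTerm$, using throughout the Barendregt convention (bound variables are fresh) and the linearity of arrow variables enforced by the typing rules of Fig.~\ref{fig:terms_types}. These two properties already guarantee that the three components of the RHS form a disjoint union, as required by $\uplus$: positive bound variables are fresh, so disjoint from $\FV{\LetTerm}$ and from one another; arrow and positive variables are disjoint by construction; and $\FV{\LetTerm}$ and $\FV[a]{\vec v_{n+1}}\setminus\FV{\LetTerm}$ are disjoint by definition.

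For the base case $n=0$ we have $\LetTerm = \vec v_{n+1}$, and Def.~\ref{def:factor_of_let_term} gives $\Facts{\LetTerm} = \{(\FV{\vec v_{n+1}},\vec a\mapsto 1)\}$, so $\FactVar{\Facts{\LetTerm}} = \FV{\vec v_{n+1}} = \FV{\LetTerm}$; the RHS collapses to the same set since $\FV[a]{\vec v_{n+1}} \subseteq \FV{\vec v_{n+1}} = \FV{\LetTerm}$ and the product over $i\in\{1,\dots,0\}$ is empty.

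For the inductive step I would write $\LetTerm = \LetG{\vec v_1 = e_1}{\LetTerm_1}$ with $\LetTerm_1$ having $n-1$ definitions and the same output $\vec v_{n+1}$, apply the IH to $\LetTerm_1$, and recall from Def.~\ref{def:factor_of_def} that $\FactVar{\Fact{\vec v_1 = e_1}} = \FV{e_1}\cup\FV{\vec v_1}$. I then split on the two clauses of Def.~\ref{def:factor_of_let_term}. In the sum-out clause (triggered when $\vec v_1^a = f$ with $f\notin\FV{\vec v_{n+1}}$) the typing side condition ``if $f\in\FV{\vec v}$ then $f\in\FV[a]{e'}$'' forces $f\in\FV[a]{\LetTerm_1}$, and unpacking yields $\FactVar{\Facts{\LetTerm}} = (\FV{e_1}\cup\FV{\vec v_1}\cup\FactVar{\Facts{\LetTerm_1}})\setminus\{f\}$. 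In the other clause one simply has $\FactVar{\Facts{\LetTerm}} = \FV{e_1}\cup\FV{\vec v_1}\cup\FactVar{\Facts{\LetTerm_1}}$.

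The remaining work is to show that, after plugging in the IH expression for $\FactVar{\Facts{\LetTerm_1}}$, the result coincides with the RHS of the lemma, using $\FV{\LetTerm} = \FV{e_1}\cup(\FV{\LetTerm_1}\setminus\FV{\vec v_1})$. The main obstacle will be handling the arrow variables in $\FV[a]{\vec v_{n+1}}$: one needs that, for any $g\in\FV[a]{\vec v_{n+1}}$ with $g\neq f$, $g\in\FV{\LetTerm_1}\Leftrightarrow g\in\FV{\LetTerm}$. This will follow from the linearity side condition $\FV[a]{e_1}\cap\FV[a]{\LetTerm_1} = \emptyset$ of the let-rule: an arrow variable of $\vec v_{n+1}$ free in $e_1$ cannot also be free in $\LetTerm_1$, so the step $\setminus\FV{\vec v_1}$ removes nothing from $\FV[a]{\vec v_{n+1}}\cap\FV{\LetTerm_1}$ except possibly $f$. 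Combined with the freshness of $\FV[+]{\vec v_1}$ (which ensures those variables do not appear in $\FV{e_1}$ nor in any $\FV[+]{\vec v_i}$ for $i\geq 2$), the target equality reduces to routine set manipulations in both cases.
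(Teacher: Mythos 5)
Your proposal is correct and follows exactly the route the paper takes: the paper's own proof of this lemma is literally just ``By induction on $n$,'' and your induction on the number of definitions, with the case split on the two clauses of Def.~\ref{def:factor_of_let_term} and the use of linearity of arrow variables plus freshness of bound variables to justify the disjointness and the absorption of $\FV[a]{\vec v_{n+1}}\setminus\FV{\LetTerm_1}$ into $\FV[a]{\vec v_{n+1}}\setminus\FV{\LetTerm}$, supplies precisely the details that the paper leaves implicit. The only point worth spelling out further is the ``otherwise'' clause when $\vec v_1^a=f$ \emph{is} an output variable, where $f$ moves from $\FV{\vec v_1}$ on the left to the component $\FV[a]{\vec v_{n+1}}\setminus\FV{\LetTerm}$ on the right, but this is covered by your ``routine set manipulations.''
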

\begin{proof}
By induction on $n$.
\end{proof}

\begin{lemma}[Splitting]\label{lemma:splitting_facts}
Let $\LetTerm$ be the let-term $\LetG{\vec v=e}{\LetTerm'}$ and consider a set of variables $\mathcal V\subseteq\FV{\LetTerm}$ such that $\mathcal V\cap\FV{\LetTerm'}=\emptyset$ (so that $\mathcal V\subseteq\FV{e}$). We have:
\begin{itemize}
\item if $\vec v^a = f$ and $f$ is not an output variable,
\begin{align*}
	\Facts{\LetTerm}_{\mathcal V}&=\{\FSum{f}{\Fact{\vec v=e}\FProd\Facts{\LetTerm'}_{f}}\},&
	\Facts{\LetTerm}_{\neg (\mathcal V\cup\{f\})}&=\Facts{\LetTerm'}_{\neg \{f\}};
\end{align*}
\item otherwise, 
\begin{align*}
	\Facts{\LetTerm}_{\mathcal V}&=\{\Fact{\vec v=e}\},&
	\Facts{\LetTerm}_{\neg \mathcal V}&=\Facts{\LetTerm'}.\\
\end{align*}
\end{itemize}
\end{lemma}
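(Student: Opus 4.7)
My plan is to unfold Def.~\ref{def:factor_of_let_term} to write $\Facts{\LetTerm}$ explicitly in each of the two cases, and then to argue—using the variable conventions together with Lemma~\ref{lemma:facts_variables}—that no factor of $\Facts{\LetTerm'}$ can involve a variable of $\mathcal V$. Since $\mathcal V\subseteq\FV{e}$ and the pattern $\vec v$ introduces fresh bound variables (so in particular $f\notin\mathcal V$ when $\vec v^a=f$, because $f\in\FV{\vec v}$ is disjoint from $\FV{e}$), only the factor arising directly from the definition $\vec v=e$ (or its sum-out on $f$) can contain variables of $\mathcal V$.

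The key preliminary step is the observation that $\FactVar{\Facts{\LetTerm'}}\cap\mathcal V=\emptyset$. By Lemma~\ref{lemma:facts_variables}, this variable set equals $\FV{\LetTerm'}$ together with the positive parts of the patterns bound inside $\LetTerm'$. The first piece is disjoint from $\mathcal V$ by hypothesis, while the second is disjoint from $\mathcal V$ because such variables are bound in $\LetTerm'$, hence (by $\alpha$-renaming convention) fresh with respect to $\FV{e}\supseteq\mathcal V$. It follows that every individual factor in $\Facts{\LetTerm'}$ is in $\Facts{\LetTerm'}_{\neg\mathcal V}$, and the same holds for its subset $\Facts{\LetTerm'}_{\neg f}$.

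In the ``otherwise'' branch of Def.~\ref{def:factor_of_let_term}, $\Facts{\LetTerm}=\{\Fact{\vec v=e}\}\uplus\Facts{\LetTerm'}$. The factor $\Fact{\vec v=e}$ has variable set $\FV{\vec v}\uplus\FV{e}\supseteq\mathcal V$, so it belongs to $\Facts{\LetTerm}_{\mathcal V}$; combined with the previous paragraph, this yields the desired equalities $\Facts{\LetTerm}_{\mathcal V}=\{\Fact{\vec v=e}\}$ and $\Facts{\LetTerm}_{\neg\mathcal V}=\Facts{\LetTerm'}$.

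In the case $\vec v^a=f$ with $f$ not an output variable, $\Facts{\LetTerm}=\{\FSum{f}{\Fact{\vec v=e}\FProd\Facts{\LetTerm'}_{f}}\}\uplus\Facts{\LetTerm'}_{\neg f}$. The first (summed-out) factor has variable set $(\FV{\vec v}\cup\FV{e}\cup\FactVar{\Facts{\LetTerm'}_{f}})\setminus\{f\}$; since $\mathcal V\subseteq\FV{e}$ and $f\notin\mathcal V$, this set still contains $\mathcal V$, placing the factor in $\Facts{\LetTerm}_{\mathcal V}$. The factors in $\Facts{\LetTerm'}_{\neg f}$ already avoid $f$ by construction and avoid $\mathcal V$ by the first paragraph, so they lie in $\Facts{\LetTerm}_{\neg(\mathcal V\cup\{f\})}$, giving the stated equalities. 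The only obstacle is the bureaucratic bookkeeping of bound variables and the case split; once Lemma~\ref{lemma:facts_variables} is invoked, the content of the argument is essentially immediate.
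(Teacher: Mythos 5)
Your proof is correct and follows the same route as the paper's (much terser) argument: unfold Def.~\ref{def:factor_of_let_term} and observe that, since $\mathcal V\subseteq\FV{e}\setminus\FV{\LetTerm'}$ and the variables bound inside $\LetTerm'$ are fresh by the renaming convention, only the factor arising from $\vec v=e$ (possibly multiplied with $\Facts{\LetTerm'}_{f}$ and summed over $f$) can meet $\mathcal V$. The only nitpick is that $\FactVar{\Facts{\LetTerm'}}$ has a third component in Lemma~\ref{lemma:facts_variables} (arrow variables of the output that are bound in $\LetTerm'$), which you omit, but your freshness argument covers it verbatim.
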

\begin{proof}
By the hypothesis $\mathcal V\subseteq\FV{e}\setminus\FV{\LetTerm'}$, the definition $\Facts{\LetTerm}_{\mathcal V}$ should use $\Fact{\vec v=e}$. If moreover $\vec v$ contains an arrow variable not in the outputs, then the factor, if any, having such a variable in its set of variables will be multiplied with $\Fact{\vec v=e}$.\Myqed
\end{proof}

\begin{proof}[Detailed proof of Proposition~\ref{prop:factors_as_semantics}]
By induction on $\LetTerm$.

If $\LetTerm$ is just the tuple $\vec v$, then $\mathcal F=\FV{\vec v}$ and $\FSum{\mathcal F\setminus(\FV\LetTerm\cup\FV{\vec v})}{\BigFProd\Facts{\LetTerm}}$ maps the empty sequence into $1$. Moreover, $\FValA\mid_{\FV{\LetTerm}\cap\FV{\vec v}}=\FValA$ and similarly for $\FValB$, so $\Sem{\LetTerm}_{\FValA,\FValB}$ reduces to the Kronecker delta $\delta_{\FValA, \FValB}$. 

If $\LetTerm$ is $\Let{\vec v'}{e'}{\LetTerm'}$, then by definition of the semantics:
\begin{align}
\Sem{\LetTerm}_{\FValA,\FValB} 
&=\sum_{\FValC\in\Web{\vec v'}}\Sem{e'}_{\FValA\mid_{\FV{e'}}, \FValC} \Sem{\LetTerm'}_{(\FValA\FPlus\FValC)\mid_{\FV{\LetTerm'}},\FValB}
\label{proof_eq:sem_from_fact}
\end{align}
Suppose $\FValA\mid_{\FV{\LetTerm}\cap\FV{\vec v}}\neq\FValB\mid_{\FV{\LetTerm}\cap\FV{\vec v}}$, with $\vec v$ the output variables of $\LetTerm$ (as well as of $\LetTerm'$). We claim that: $\FV{\LetTerm}\cap\FV{\vec v}\subseteq\FV{\LetTerm'}\cap\FV{\vec v}$. In fact:
\begin{align*}
\FV{\LetTerm}\cap\FV{\vec v}&=\left(\left(\FV{\LetTerm'}\setminus\FV{\vec v'}\right)\cap\FV{\vec v}\right)\cup\left(\FV{e'}\cap\FV{\vec v}\right)\\
&=\left(\FV{\LetTerm'}\setminus\FV{\vec v'}\right)\cap\FV{\vec v}\\
&\subseteq\FV{\LetTerm'}\cap\FV{\vec v}
\end{align*}
The passage from the first to the second line is because, if $w\in\FV{e'}\cap\FV{\vec v}$, then $w\notin \FV{\vec v'}$ (as $\FV{\vec v'}\cap\FV{e'}=\emptyset$), as well as $w\in\FV{\ell'}\cap\FV{\vec v}$ (as we can suppose, by renaming, the variables of $\vec v$ bounded in $\ell'$ to be disjoint from the free variables of $e'$). 

Therefore, $\FValA\mid_{\FV{\LetTerm}\cap\FV{\vec v}}\neq\FValB\mid_{\FV{\LetTerm}\cap\FV{\vec v}}$ implies $(\FValA\FPlus\FValB)\mid_{\FV{\LetTerm'}\cap\FV{\vec v}}\neq\FValB\mid_{\FV{\LetTerm'}\cap\FV{\vec v}}$. So, by inductive hypothesis, $\Sem{\LetTerm'}_{(\FValA,\FValC)\mid_{\FV{\LetTerm'}},\FValB}=0$ and we conclude $\Sem{\LetTerm}_{\FValA,\FValB} =0$.

Otherwise, we can suppose $\FValA\mid_{\FV{\LetTerm}\cap\FV{\vec v}}=\FValB\mid_{\FV{\LetTerm}\cap\FV{\vec v}}$ and call $\vec h$ the family $\FValA' \FPlus \FValC \FPlus \FValB'$ for
$\FValA' = \FValA\mid_{\mathcal F\setminus \FV{\vec v}}$, 
$\FValB' = \FValA\mid_{\mathcal F\setminus \FV{\LetTerm}}$ and
$\FValC = \FValA\mid_{\FV{\LetTerm}\cap\FV{\vec v}}=\FValB\mid_{\FV{\LetTerm}\cap\FV{\vec v}}$. Let us also define $\mathcal F'=\FactVar{\Facts{\LetTerm'}}$.

We split in three subcases.
\begin{itemize}
\item If $(\vec v')^a =  f$ and $\Facts{\LetTerm'}_f=\{\FactA\}$. Let $(\vec v')^+=\vec x$ and notice that $\mathcal F=\FV{\vec x}\cup\FV{e'}\cup(\mathcal F'\setminus\{f\})$. We can rewrite the right-hand side term in Equation~\eqref{proof_eq:sem_from_fact} as:
\begin{align}
&\label{eqSoundFacts:0}=
	\sum_{\FValC'\in\Web{\vec x}}\sum_{d''\in\Web{f}}
	\Sem{e'}_{\FValA\mid_{\FV{e'}}, (\FValC', d'')} \Sem{\LetTerm'}_{(\FValA,(\FValC', d''))\mid_{\FV{\LetTerm'}},\FValB}
\\
&\label{eqSoundFacts:1}=
	\FSum{\vec x, f}{
		\Fact{\vec v' = e'}
		\FProd
		\FSum{\mathcal F'\setminus(\FV{\LetTerm'}\cup\FV{\vec v})}{\BigFProd\Facts{\LetTerm'}}
	}(\vec h)
\\
&\label{eqSoundFacts:2}=
	\FSum{\vec x, f}{
		\Fact{\vec v' = e'}
		\FProd
		\FSum{\mathcal F'\setminus(\FV{\LetTerm'}\cup\FV{\vec v})}{\FactA\FProd\BigFProd\Facts{\LetTerm'}_{\neg f}}
	}(\vec h)
\\
&\label{eqSoundFacts:3}=
	\FSum{\vec x}{
		\FSum{\mathcal F'\setminus(\FV{\LetTerm'}\cup\FV{\vec v})}{
			\FSum{f}{
				\Fact{\vec v' = e'}
				\FProd
				\FactA
			}
		\FProd\BigFProd\Facts{\LetTerm'}_{\neg f}
		}
	}(\vec h)
\\
&\label{eqSoundFacts:4}=
	\FSum{\vec x}{
		\FSum{\mathcal F'\setminus(\FV{\LetTerm'}\cup\FV{\vec v})}{
			\BigFProd\Facts{\LetTerm}
		}
	}(\vec h)
\\
&\label{eqSoundFacts:5}=
		\FSum{\mathcal F\setminus(\FV{\LetTerm}\cup\FV{\vec v})}{
			\BigFProd\Facts{\LetTerm}
		}(\vec h)
\end{align}
The passage to line~\eqref{eqSoundFacts:1} applies the inductive hypothesis to $\LetTerm'$ and Definition~\ref{def:factor_of_def}. The other passages use Proposition~\ref{prop:factor_prod_ass_distr}, in particular:
the passage to line~\eqref{eqSoundFacts:2} applies the hypothesis $\Facts{\LetTerm'}_f=\{\FactA\}$ and commutativity and associativity of $\FProd$.
The passage to line~\eqref{eqSoundFacts:3} applies twice the distribution of $\FProd$ over $\sum_S$, since the domain of the distributed factor is disjoint wrt $S$. 
The passage to line~\eqref{eqSoundFacts:4} applies the definition of n-ary factor product, and finally the passage to line~\eqref{eqSoundFacts:5} applies the associativity and commutativity of $\sum_S$, remarking that $\FV{\vec x} \cup\mathcal F'\setminus(\FV{\LetTerm'}\cup\FV{\vec v})=\mathcal F\setminus(\FV{\LetTerm}\cup\FV{\vec v})$.

\item If $(\vec v')^a = f$  and $\Facts{\LetTerm'}_f=\emptyset$, so that $f$ is an output variable of $\LetTerm'$. Let $(\vec v')^+ = \vec x$, we have: $\mathcal F=\FV{\vec x}\cup\FV{e'}\cup\mathcal F'$.

Equation~\eqref{proof_eq:sem_from_fact}  can be rewritten in line~\eqref{eqSoundFacts:0} above. However, the variable $f$ is now in $\mathcal F$, as it is an output variable of $\LetTerm$, as well as of $\LetTerm'$. So, by inductive hypothesis, the terms of the sum over $\Web{f}$ are non-zero only for $d''=\FValB_f$, so we get (by an analogous reasoning as before):
\begin{align*}
&=
	\FSum{\vec x}{
		\Fact{\vec v' = e'}
		\FProd
		\FSum{\mathcal F'\setminus(\FV{\LetTerm'}\cup\FV{\vec v})}{\BigFProd\Facts{\LetTerm'}}
	}(\vec h)
\\
&=
	\FSum{\vec x}{
		\FSum{\mathcal F'\setminus(\FV{\LetTerm'}\cup\FV{\vec v})}{
			\Fact{\vec v' = e'}
			\FProd
			\BigFProd\Facts{\LetTerm'}
		}
	}(\vec h)
\\
&=
	\FSum{\vec x}{
		\FSum{\mathcal F'\setminus(\FV{\LetTerm'}\cup\FV{\vec v})}{
			\BigFProd\Facts{\LetTerm}
		}
	}(\vec h)
\\
&=
		\FSum{\mathcal F\setminus(\FV{\LetTerm}\cup\FV{\vec v})}{
			\BigFProd\Facts{\LetTerm}
		}(\vec h)
\end{align*}

\item The case $\vec v'$ has no arrow variable is completely similar to the previous one, just we do not need to consider the sum over a positive variable. 
\end{itemize}
\Myqed
\end{proof}

\begin{proof}[Detailed proof of Proposition~\ref{prop:order_sequence}]
By induction on the length $h$ of the sequence $\Seq{v_1,\dots, v_h}$. The induction step uses Proposition~\ref{prop:factor_prod_ass_distr}:
\begin{align*}
	\BigFProd\VEF(\FactSetA, \Seq{v_1,\dots, v_h})
	&=\BigFProd\VEF(\VEF(\FactSetA, v_1), \Seq{\LabelA_2,\dots, v_h})\\
	&=\BigFProd\VEF\left(\left(\FSum{\LabelA_1}{\BigFProd\FactSetA_{\LabelA_1}}\right)\FProd\BigFProd\FactSetA_{\neg\LabelA_1}, \Seq{\LabelA_2,\dots, \LabelA_h}\right)\\
	&=\BigFProd\VEF\left(\FSum{\LabelA_1}{\BigFProd\FactSetA_{\LabelA_1}\FProd\BigFProd\FactSetA_{\neg\LabelA_1}}, \Seq{\LabelA_2,\dots, \LabelA_h}\right)\\
	&=\BigFProd\VEF\left(\FSum{\LabelA_1}{\BigFProd\FactSetA}, \Seq{\LabelA_2,\dots, \LabelA_h}\right)\\
	&=\FSum{\{\LabelA_2,\dots, \LabelA_h\}}{\FSum{\LabelA_1}{\BigFProd\FactSetA}}\\
	&=\FSum{\{\LabelA_1,\dots, \LabelA_h\}}{\BigFProd\FactSetA}
\end{align*}
The equality $\FactVar{\VEF(\FactSetA, \Seq{\LabelA_1,\dots, \LabelA_h})} = \FactVar{\FactSetA}\setminus\{\LabelA_1,\dots, \LabelA_h\}$ follows because by definition $\FactVar{\BigFProd\FactSetA}=\FactVar{\FactSetA}$.
\Myqed
\end{proof}


\RED{
	A consequence of Prop.~\ref{prop:order_sequence} is that the factor $\BigFProd\VEF(\FactSetA, \Seq{\LabelA_1,\dots, \LabelA_h})$ is independent from the order of the variables appearing in the sequence, which means  
	$\BigFProd\VEF(\FactSetA, \Seq{\LabelA_1,\dots, \LabelA_h})=\BigFProd\VEF(\FactSetA, \Seq{\LabelA_{\sigma(1)},\dots, \LabelA_{\sigma(h)}})$ for any permutation $\sigma$. However, $\VEF(\FactSetA, \Seq{\LabelA_1,\dots, \LabelA_h})$ and $\VEF(\FactSetA, \Seq{\LabelA_{\sigma(1)},\dots, \LabelA_{\sigma(h)}})$ are in general different sets of factors that can compute the product with different performances. 
	
	
	\begin{proposition}
		\label{prop:VEfact_complexity}
		Given $\FactSetA$ and $\Seq{\LabelA_1,\dots, \LabelA_h}$ as in Def.~\ref{def:VE_factors}, let $d$ be the maximal degree $\max\{\Deg_{\VEF(\FactSetA, \Seq{\LabelA_1,\dots, \LabelA_i})_{\LabelA_{i+1}}} \st 0\leq i< h\}$.
		Then, the set $\VEF(\FactSetA, \Seq{\LabelA_1,\dots, \LabelA_h})$ can be computed out of $\FactSetA$ in $O\left(h(\Base_{\FactSetA})^{d}\right)$ basic operations.
		
		Moreover, if $d'=\max(d, \Deg_{\VEF(\FactSetA, \Seq{\LabelA_1,\dots, \LabelA_h})})$, then the factor $\FSum{\{\LabelA_1,\dots, \LabelA_h\}}{\BigFProd\FactSetA}$ can be computed out of $\FactSetA$ in $O\left(h(\Base_{\FactSetA})^{d'}\right)$ basic operations.
	\end{proposition}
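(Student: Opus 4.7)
The plan is to combine two observations: (i) the base $\Base$ never grows along a run of $\VEF$, so that $\Base_{\FactSetA}$ is a uniform upper bound on the bases of all intermediate families; and (ii) a single elimination step is dominated by an $n$-ary factor product followed by one summation, whose joint cost is controlled by the degree of the local sub-family targeted by the step. I will then close the argument with a telescoping bookkeeping that turns a naive bound on the total number of binary products into a linear factor in $h$.

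First I would set $\FactSetA_0 := \FactSetA$ and $\FactSetA_i := \VEF(\FactSetA,\Seq{\LabelA_1,\dots,\LabelA_i})$ for $i=1,\dots,h$. Inspecting Def.~\ref{def:VE_factors}, no step of $\VEF$ introduces a fresh variable, so $\FactVar{\FactSetA_i}\subseteq\FactVar{\FactSetA}$ and hence $\Base_{\FactSetA_i}\leq\Base_{\FactSetA}$ for every $i$. Next, fix a step $i+1$, write $k_{i+1} := \Card{(\FactSetA_i)_{\LabelA_{i+1}}}$ and $\delta_{i+1} := \Deg_{(\FactSetA_i)_{\LabelA_{i+1}}}$. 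By the complexity bound stated just after Def.~\ref{def:nary_factor_product}, building the $k_{i+1}$-ary product $\BigFProd(\FactSetA_i)_{\LabelA_{i+1}}$ costs $O(k_{i+1}\,\Base_{\FactSetA}^{\delta_{i+1}})$ operations; summing out $\LabelA_{i+1}$ from the resulting factor costs an additional $O(\Base_{\FactSetA}^{\delta_{i+1}})$. Since $\delta_{i+1}\leq d$ by definition of $d$, step $i+1$ overall costs $O((k_{i+1}+1)\,\Base_{\FactSetA}^d)$.

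The key telescoping estimate concerns $\sum_{i=0}^{h-1} k_{i+1}$. Each step replaces $k_{i+1}$ factors by a single new one, so $\Card{\FactSetA_{i+1}} = \Card{\FactSetA_i} - k_{i+1} + 1$. Iterating, $\sum_{i=0}^{h-1} k_{i+1} = \Card{\FactSetA} - \Card{\FactSetA_h} + h \leq \Card{\FactSetA} + h$. Absorbing the constant $\Card{\FactSetA}$ into the $O$-notation yields a total cost of $O(h\,\Base_{\FactSetA}^d)$ for producing $\FactSetA_h = \VEF(\FactSetA,\Seq{\LabelA_1,\dots,\LabelA_h})$, which is the first claim.

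For the second claim, Prop.~\ref{prop:order_sequence} gives $\FSum{\{\LabelA_1,\dots,\LabelA_h\}}{\BigFProd\FactSetA} = \BigFProd\FactSetA_h$, so on top of the cost already incurred it remains to form the $\Card{\FactSetA_h}$-ary product of the factors of $\FactSetA_h$. By definition of $d'$ this product has degree at most $d'$, hence by the same $n$-ary bound its cost is $O(\Card{\FactSetA_h}\,\Base_{\FactSetA}^{d'}) = O(h\,\Base_{\FactSetA}^{d'})$; since $d\leq d'$, adding this to the previous estimate still yields $O(h\,\Base_{\FactSetA}^{d'})$, as required. The only subtle point is the telescoping in paragraph three: without it the naive bound on the number of binary products would look superlinear in $h$; the drop in $\Card{\FactSetA_i}$ after each elimination is what converts it into the claimed linear factor.
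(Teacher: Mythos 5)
Your proof is correct and follows essentially the same route as the paper's: compute the intermediate families $\FactSetA_i=\VEF(\FactSetA,\Seq{\LabelA_1,\dots,\LabelA_i})$ iteratively, bound each step by $O(\Base_{\FactSetA}^{d})$ using $\FactVar{\FactSetA_i}\subseteq\FactVar{\FactSetA}$ and the definition of $d$, and handle the second claim by forming $\BigFProd\FactSetA_h$ via Prop.~\ref{prop:order_sequence}. Your telescoping bound $\sum_i k_{i+1}\leq\Card{\FactSetA}+h$ is in fact a more careful accounting of the number of binary products than the paper's (which simply treats the per-step factor count as negligible), so no gap to report.
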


\begin{proof}
	Definition~\ref{def:VE_factors} hints a computation giving $\VEF(\FactSetA, \Seq{\LabelA_1,\dots, \LabelA_h})$ by computing $\FactSetA^i\Def\VEF(\FactSetA, \Seq{\LabelA_1,\dots, \LabelA_i})$, for $0\leq i\leq h$. In fact, $\FactSetA^0=\FactSetA$ and, for $i>0$,
	\[
	\FactSetA^i = \{\FSum{\LabelA_i}{\BigFProd{\FactSetA^{i-1}}_{\LabelA_i}}\}\uplus{\FactSetA^{i-1}}_{\neg\LabelA_i}
	\]
	The computation of the new factor $\FSum{\LabelA_i}{\BigFProd{\FactSetA^{i-1}}_{\LabelA_i}}$ requires 
	$O({
		\Base_{
			{\FactSetA^{i-1}}_{\LabelA_{i}}
		}
	}^{\Deg_{{\FactSetA^{i-1}}_{\LabelA_{i}}}})$ which is bounded by $O({
		\Base_{\FactSetA}
	}^{\Deg_{{\FactSetA^{i-1}}_{\LabelA_{i}}}})$, as $\FactVar{\FactSetA^{i-1}}\subseteq\FactVar{\FactSetA}$. If we suppose that spitting $\FactSetA^{i-1}$ into ${\FactSetA^{i-1}}_{\LabelA_{i}}$ and ${\FactSetA^{i-1}}_{\neg\LabelA_{i}}$ requires a negligible number of operations, as the number of factors in $\FactSetA^{i-1}$ is bounded by $\Deg_{\FactSetA^{i-1}}$, then the whole computation of $\FactSetA^i$ out of $\FactSetA^{i-1}$ requires $O({
		\Base_{\FactSetA}
	}^{\Deg_{{\FactSetA^{i-1}}_{\LabelA_{i}}}})$ basic operations. Since the computations of the various $\FactSetA^{i}$ are sequential, we get the bound $h(\Base_{\FactSetA})^{d}$ by taking the maximal degree $d$ of such a $\FactSetA^{i}$.\Myqed
\end{proof}
}

\section{Section ~\ref{sect.VE_as_let_rewriting}}
\label{proof:let-rewriting}

\RED{\begin{proof}[Proof sketch of Prop.~\ref{prop:let_rewriting_subject}  ]
		By structural induction on $\LetTerm$, splitting on the various cases of $\Red$. The subtle case is for  $\Swap_3$, where we remark that if $\vec v_1^a=f$, then the linear typing system assures that: $f\in\FV{e_2}$ iff $f\notin\FV{\LetTerm}$. 
		\Myqed
\end{proof}}

\RED{\begin{proof}[Proof sketch of Prop.~\ref{prop:let-rewriting_semantics}]
		A direct proof by induction can be quite cumbersome to develop in full details. A simpler way to convince about this statement is by noticing that $\LetTerm\Red\LetTerm'$ implies that the two let-terms are $\beta$-equivalent if one translates $\Let{\vec v}{e}{e'}$ into $(\lambda \vec v.e')e$. This translation preserves the semantics and weighted relational semantics is known to be invariant under $\beta$-reduction (see e.g.~\cite{LairdMMP13}).\Myqed
\end{proof}}

\begin{proof}[Detailed proof of Lemma~\ref{lemma:facts-invariant}]
By inspection of cases. Concerning  $\Swap_2$,
notice that whenever an arrow variable $f$ defined by some $\vec v_i$ does not appear in the output of $\LetTerm$, then by Lemma~\ref{lemma:facts_variables}, $f\notin\FactVar{\Facts{\LetTerm}}$ so that the arrow variable created in the contractum of $\Swap_2$ is 
not in the set of variables of the factors associated with them. \Myqed
\end{proof}

\begin{proof}[Detailed proof of Lemma~\ref{lemma:VA_semantic}]
If $\mathcal V=\emptyset$, then $\VA(\LetTerm, \mathcal V)=\LetTerm$. Otherwise, we prove $\Facts{\VA(\LetTerm, \mathcal V)} = 	\left\{\BigFProd\Facts{\LetTerm}_{\mathcal V}\right\}\uplus\Facts{\LetTerm}_{\neg \mathcal V}$ by induction on $\LetTerm$, splitting according to the cases of Definition~\ref{definition:VA} from which we adopt the notation. Case 1 is trivial.

\begin{itemize}
\item Case 2 of Def.~\ref{definition:VA}: by Lemma~\ref{lemma:SD_soundness}, $\Facts{\VA(\LetTerm, \mathcal V)}=\Facts{\LetG{\vec v_1=e_1}{\VA(\LetTerm_1, \mathcal V)}}$. 
We split in three subcases.


	\begin{itemize}
		\item If $\vec v_1$ is positive, then by applying the inductive hypothesis we get that $\Facts{\LetG{\vec v_1=e_1}{\VA(\LetTerm_1, \mathcal V)}}$  is equal to:
	\begin{equation*}
	\left\{
		\BigFProd(\Facts{\LetTerm_1}_{\mathcal V}		
	\right\}
	\uplus
	\left\{
		\Fact{\vec v_1=e_1}
	\right\}
	\uplus
	\Facts{\LetTerm_1}_{\neg \mathcal V}
	=
	\left\{
		\BigFProd(\Facts{\LetTerm}_{\mathcal V}		
	\right\}
	\uplus
	\Facts{\LetTerm}_{\mathcal V}.
	\end{equation*}

	\item If $\vec v_1$ has an arrow variable $f$ then this variable cannot be in the output, so, by linear typing, either $f\in\FactVar{\Facts{\LetTerm_1}_{\mathcal V}}\setminus\FactVar{\Facts{\LetTerm_1}_{\neg \mathcal V}}$ or $f\in\FactVar{\Facts{\LetTerm_1}_{\neg \mathcal V}}\setminus\FactVar{\Facts{\LetTerm_1}_{\mathcal V}}$. In the first case, by applying the inductive hypothesis we get that $\Facts{\LetG{\vec v_1=e_1}{\VA(\LetTerm_1, \mathcal V)}}$ is equal to: 	
	\begin{align*}
		&\{
			\FSum{f}{
				\Fact{\vec v_1=e_1}
				\FProd
				\BigFProd\Facts{\LetTerm_1}_{\mathcal V}		
			}
		\}
		\uplus
		\Facts{\LetTerm_1}_{\neg (\mathcal V\cup\{f\})}
		\\
		&=
		\{
			\BigFProd\Facts{\LetTerm}_{\mathcal V}		
		\}
		\uplus
		\Facts{\LetTerm}_{\neg \mathcal V}
	\end{align*}
	
	\item In case $\vec v_1$ has an arrow variable $f\in \FactVar{\Facts{\LetTerm_1}_{\neg \mathcal V}}\setminus\FactVar{\Facts{\LetTerm_1}_{\mathcal V}}$, by the inductive hypothesis we get that $\Facts{\LetG{\vec v_1=e_1}{\VA(\LetTerm_1, \mathcal V)}}$ is equal to: 	
	\begin{align*}
	&
	\left\{
		\FSum{f}{
			\Fact{\vec v_1=e_1}
			\FProd
			\BigFProd(\Facts{\LetTerm_1}_{\neg \mathcal V})_f		
		}
	\right\}
	\uplus
	\left\{
		\BigFProd\Facts{\LetTerm_1}_{\mathcal V}
	\right\}
	\uplus
	(\Facts{\LetTerm_1}_{\neg\mathcal V})_{\neg f}
	\\
	&=
	\left\{
		\BigFProd\Facts{\LetTerm}_{\mathcal V}
	\right	\}
	\uplus
	\left\{
			\FSum{f}{
			\Fact{\vec v_1=e_1}
			\FProd
			\BigFProd\Facts{\LetTerm_1}_f		
			}
	\right	\}
		\uplus
		\Facts{\LetTerm}_{\neg (\mathcal V\cup\{f\})}\\
	&=
		\left\{
			\BigFProd\Facts{\LetTerm}_{\mathcal V}		
		\right	\}
		\uplus
		\Facts{\LetTerm}_{\neg \mathcal V}
	\end{align*}

	\end{itemize}
\item Case 3 of Def.~\ref{definition:VA}: observe that $\VA(\LetTerm_1, \mathcal V\cap\FV{\LetTerm_1})$ is of the shape $\LetG{\vec v'=e'}{\LetTerm'}$ and notice that:
\begin{equation}\label{eq:soundness_VA_eq2b}
	\Fact{
			\Seq{\vec v_1,\vec v'}
			=
			\LetG{
				\vec v_1=e_1
			}{
				\Seq{\vec v_1, e'}
			}
	}
	=
	\Fact{\vec v_1 = e_1}
	\FProd
	\Fact{\vec v' = e'}
\end{equation}
Let us suppose that $(\vec v')^a=g$ for an arrow variable $g$ which is not an output: the case $\vec v'$ positive or $g$ being an output are easier variants. Let us write $\mathcal V_1=\mathcal V\cap\FV{\LetTerm_1}$. We have that $\Facts{\VA(\LetTerm, \mathcal V)}$ is equal to: 
\begin{align}
	\label{eq:soundness_VA_case2b_line2}
	&
	\{
		\FSum{g}{
				\Fact{\vec v = e_1}
				\FProd
				\Fact{\vec v' = e'}
				\FProd
				\BigFProd
				\Facts{\LetTerm'}_{g}
		}
	\}
	\uplus
	\Facts{\LetTerm'}_{\neg g}
	\\
	\label{eq:soundness_VA_case2b_line3}
	&=
	\{
		\Fact{\vec v = e_1}
		\FProd
		\FSum{g}{
				\Fact{\vec v' = e'}
				\FProd
				\BigFProd
				\Facts{\LetTerm'}_{g}
		}
	\}
	\uplus
	\Facts{\LetTerm'}_{\neg g}
	\\
	\label{eq:soundness_VA_case2b_line4}
	&=
	\{
		\Fact{\vec v = e_1}
		\FProd
		\VA(\LetTerm_1, \mathcal V_1)_{\mathcal V_1}
	\}
	\uplus
	\Facts{\VA(\LetTerm_1, \mathcal V_1)}_{\neg \mathcal V_1}
	\\
	\label{eq:soundness_VA_case2b_line5}
	&=
	\left\{
		\Fact{\vec v = e_1}
		\FProd
		\BigFProd\Facts{\LetTerm_1}_{\mathcal V_1}\right\}
		\uplus
		\Facts{\LetTerm_1}_{\neg \mathcal V_1}
	\\
	\label{eq:soundness_VA_case2b_line6}
	&=
	\left\{\BigFProd\Facts{\LetTerm}_{\mathcal V}\right\}
	\uplus
	\Facts{\LetTerm}_{\neg \mathcal V}
\end{align}

Line~\eqref{eq:soundness_VA_case2b_line2} uses Equation~\eqref{eq:soundness_VA_eq2b}.
Line~\eqref{eq:soundness_VA_case2b_line3} applies the properties stated in Proposition~\ref{prop:factor_prod_ass_distr}.
Line~\eqref{eq:soundness_VA_case2b_line4} uses the hypothesis that $\mathcal V\subseteq\Fam(\Fact{\vec v' = e'})\setminus\Fam(\Facts{\LetTerm'})$. 
Finally, Line~\eqref{eq:soundness_VA_case2b_line5}  applies the inductive hypothesis and Line~\eqref{eq:soundness_VA_case2b_line6}  Definition~\ref{def:factor_of_let_term}.

\item Case 4 of Def.~\ref{definition:VA}:  let denote $\VA(\LetTerm_1, (\mathcal V\cap\FV{\LetTerm_1})\cup\{f\})$ by $\LetG{\vec v'=e'}{\LetTerm'}$ and notice that $f\in\FV{e'}\setminus\FV{\LetTerm'}$ and:
\begin{equation*}
	\Fact{
			\Seq{\vec v_1^+,\vec v'}
			=
			\LetG{
				\vec v_1=e_1
			}{
				\Seq{\vec v_1^+, e'}
			}
	}
	=
	\FSum{f}{
		\Fact{\vec v_1 = e_1}
		\FProd
		\Fact{\vec v' = e'}
	}
\end{equation*}
The reasoning is then similar to the previous Case 3, by adding only a commutation between $\sum_f$ and $\sum_g$.
\end{itemize}
\Myqed
\end{proof}

\RED{
	\begin{proof}[Proof of Lemma~\ref{lemma:SD_soundness}]
		The three cases of Def.~\ref{definition:SwapDef} correspond respectively to the definitions of the three commutative rules $\Swap_1, \Swap_2, \Swap_3$. so that we have $\LetTerm\Red[\Swap]\SD(\LetTerm)$. The fact that $\SD(\LetTerm)$ is well-typed then follows by Prop.~\ref{prop:let_rewriting_subject}, and the equality $\Facts{\LetTerm}=\Facts{\SD(\LetTerm)}$ is a consequence of Lemma~\ref{lemma:facts-invariant}.\Myqed
	\end{proof}
}

\RED{\begin{lemma}[Rewriting into $\VA$]\label{lemma:VA_rewriting}
		Given a positive let-term $\LetTerm$ with $n\geq 1$ definitions and a subset $\mathcal V\subseteq\FV{\LetTerm}$ disjoint from the output variables of $\LetTerm$, let us denote by $\LetG{\vec v'=e'}{\LetTerm'}$ the let-term $\VA(\LetTerm, \mathcal V)$. We have that:
		\begin{enumerate}
		\item $\mathcal V\subseteq\FV{e'}\setminus\FV{\LetTerm'}$;
		\item $\LetTerm$ rewrites into $\VA(\LetTerm, \mathcal V)$ by applying at most $n$ steps among $\{\Swap_1,\Swap_2,\Swap_3,\Mult\}$ rewriting rules in Fig.~\ref{fig:let-rewriting};
		\item hence $\VA(\LetTerm, \mathcal V)$ is a well-typed term with  same type and free variables as $\LetTerm$.
		\end{enumerate}
		\end{lemma}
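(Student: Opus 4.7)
The plan is to proceed by structural induction on the number $n\geq 1$ of definitions of $\LetTerm$, following the four cases of Def.~\ref{definition:VA}. Property (3) will follow from (2) by Prop.~\ref{prop:let_rewriting_subject} (subject reduction), so the core work is establishing (1) and (2) simultaneously in each case. Below, I write $\LetTerm = \LetG{\vec v_1 = e_1}{\LetTerm_1}$ as in Def.~\ref{definition:VA}.

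For Case~1 ($\mathcal V = \emptyset$), nothing happens: $\VA(\LetTerm,\mathcal V) = \LetTerm$, so (1) is vacuous, (2) holds with zero rewriting steps, and (3) is immediate. For Case~2 ($\mathcal V \cap \FV{e_1} = \emptyset$), by the induction hypothesis applied to $\LetTerm_1$ (which has at most $n-1$ definitions), $\LetTerm_1$ rewrites in at most $n-1$ steps into $\VA(\LetTerm_1, \mathcal V) = \LetG{\vec v'' = e''}{\LetTerm''}$ with $\mathcal V \subseteq \FV{e''} \setminus \FV{\LetTerm''}$. Embedding this rewriting under the outer definition $\vec v_1 = e_1$ preserves its validity, and one final application of $\SD$ (which by Lemma~\ref{lemma:SD_soundness} is a single $\Swap$-step) brings $\vec v'' = e''$ in front of $\vec v_1 = e_1$. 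Since $\mathcal V \cap \FV{e_1} = \emptyset$, the set $\mathcal V$ remains in $\FV{e'} \setminus \FV{\LetTerm'}$ where $\LetG{\vec v' = e'}{\LetTerm'}$ is the result. This gives a total of at most $n$ steps.

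For Case~3 ($\mathcal V \cap \FV{e_1} \neq \emptyset$ and $\vec v_1$ positive), we apply the IH to $\LetTerm_1$ with the set $\mathcal V \cap \FV{\LetTerm_1}$, obtaining a rewriting $\LetTerm_1 \Red^*\LetG{\vec v'' = e''}{\LetTerm''}$ in at most $n-1$ steps, with $\mathcal V \cap \FV{\LetTerm_1} \subseteq \FV{e''} \setminus \FV{\LetTerm''}$. Embedding under $\vec v_1 = e_1$ and then applying one step of the $\Mult$ rule to gather the two outermost definitions into $\Seq{\vec v_1, \vec v''} = \LetG{\vec v_1 = e_1}{\Seq{\vec v_1, e''}}$ yields exactly $\VA(\LetTerm, \mathcal V)$ in at most $n$ steps. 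Property (1) follows because the free variables of the gathered definition's body include all of $\FV{e_1}$ and $\FV{e''}$, so in particular all of $\mathcal V$; and none of these variables appears in $\LetTerm''$ by the IH together with the assumption $\mathcal V \cap (\text{output of }\LetTerm) = \emptyset$. Case~4 is analogous: here the applicable rule is $\Swap_3$ (or $\Mult$ if $\vec v_1^+$ is non-empty, followed by the handling of the arrow variable $f$), and the IH is invoked with $(\mathcal V \cap \FV{\LetTerm_1}) \cup \{f\}$ to ensure that $f$ is also anticipated, guaranteeing the side condition of the rewriting rule.

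The main obstacle is bookkeeping in Cases~3 and~4: one has to verify that the particular rewriting rule applies (side conditions on free-variable intersections and on the linear occurrence of arrow variables are satisfied), that the step count remains bounded by $n$ (the embedded IH costs at most $n-1$ steps and we add exactly one step), and that the resulting term literally coincides with $\VA(\LetTerm,\mathcal V)$ as written in Def.~\ref{definition:VA}, including the degenerate sub-case where $\vec v_1^+$ is empty. Once these case checks are done, property~(3) follows uniformly from Prop.~\ref{prop:let_rewriting_subject}.
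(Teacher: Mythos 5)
Your proposal is correct and follows essentially the same route as the paper: induction on the structure of $\LetTerm$ over the four cases of Def.~\ref{definition:VA}, with item (1) checked by case inspection, item (2) obtained by adding exactly one $\Swap$ (via Lemma~\ref{lemma:SD_soundness}), $\Mult$, or $\Swap_3$ step to the rewriting given by the inductive hypothesis --- including the key observation that in Case~4 the side condition $f\in\FV{e'}$ is guaranteed by invoking the IH for item (1) with $(\mathcal V\cap\FV{\LetTerm_1})\cup\{f\}$ --- and item (3) derived from (2) via Prop.~\ref{prop:let_rewriting_subject}. One small correction: in Case~4 the applicable rule is always $\Swap_3$ (never $\Mult$, which requires $\vec v_1$ positive), the degenerate sub-case of empty $\vec v_1^+$ being absorbed by the convention that $\Seq{\vec v^+,e}$ stands for $e$.
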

		\begin{proof}
		Item 1 and 2 are proven by induction on $\LetTerm$. Item 1 is a simple inspection of the cases of Def.~\ref{definition:VA}. Item 2 is obtained by remarking that, by Lemma~\ref{lemma:SD_soundness}, Case 2 adds one $\Swap$ step to reduction obtained by the inductive hypothesis; Case 3 adds one $\Mult$ step and Case 4 adds one $\Swap_3$ step. For this latter case, one should remark that the side condition $f\in\FV{e'}$ is met because, by item 1, $f\in\FV{e'}\setminus\FV{\LetTerm'}$. 
		
		Item 3 is then a consequence of item 2 and Prop.~\ref{prop:let_rewriting_subject}.\Myqed
		\end{proof}

		\begin{lemma}[$\VA$ soundness]\label{lemma:VA_semantic}
		Given $\LetTerm$ and $\mathcal V$ as in Def.~\ref{definition:VA}, we have:
		\[
			\Facts{\VA(\LetTerm, \mathcal V)} = 
			\begin{cases}
				\left\{\BigFProd\Facts{\LetTerm}_{\mathcal V}\right\}\uplus\Facts{\LetTerm}_{\neg \mathcal V}
				&\text{if $\mathcal V\neq\emptyset$},\\
				\Facts{\LetTerm}
				&\text{otherwise.}
				\end{cases}
		\]
		\end{lemma}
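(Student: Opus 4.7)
The plan is to proceed by structural induction on the let-term $\LetTerm$, splitting according to the four cases of Def.~\ref{definition:VA}. The base Case 1 (where $\mathcal V = \emptyset$) is immediate, since $\VA(\LetTerm, \emptyset) = \LetTerm$ by definition. So the real work is in Cases 2--4, where $\mathcal V \neq \emptyset$ and $\LetTerm = \LetG{\vec v_1 = e_1}{\LetTerm_1}$.

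Before handling the inductive cases, I would record two auxiliary identities that follow directly from Def.~\ref{def:factor_of_def} and the semantics in Fig.~\ref{fig:denotation_terms}: when $\vec v'$ is positive,
\[
    \Fact{\Seq{\vec v_1, \vec v'} = \LetG{\vec v_1 = e_1}{\Seq{\vec v_1, e'}}}
    \;=\;
    \Fact{\vec v_1 = e_1} \FProd \Fact{\vec v' = e'},
\]
and when $\vec v_1^a = f$,
\[
    \Fact{\Seq{\vec v_1^+, \vec v'} = \LetG{\vec v_1 = e_1}{\Seq{\vec v_1^+, e'}}}
    \;=\;
    \FSum{f}{\Fact{\vec v_1 = e_1} \FProd \Fact{\vec v' = e'}}.
\]
These come from unfolding the denotational semantics of a let-in and observing that summing over the web of $f$ in the semantics corresponds exactly to the $\FSum{f}{}$ on the factor side.

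For Case 2 ($\mathcal V \cap \FV{e_1} = \emptyset$, so $\mathcal V \subseteq \FV{\LetTerm_1}$), I use $\Facts{\SD(\ldots)} = \Facts{\ldots}$ from Lemma~\ref{lemma:SD_soundness}, then apply the recursive clause of Def.~\ref{def:factor_of_let_term} to the outer let-in $\LetG{\vec v_1 = e_1}{\VA(\LetTerm_1, \mathcal V)}$, and the induction hypothesis to $\VA(\LetTerm_1, \mathcal V)$. The subtlety is that if $\vec v_1^a = f$, then $f$ may either end up in $\Facts{\LetTerm_1}_{\mathcal V}$ or in $\Facts{\LetTerm_1}_{\neg\mathcal V}$ (never both, by linearity of arrow variables), giving three subcases; each is handled by commuting $\FSum{f}{\cdot}$ past the relevant factors using Prop.~\ref{prop:factor_prod_ass_distr}\ref{fact_prop_prtimes}.

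For Case 3 ($\vec v_1$ positive, $\mathcal V \cap \FV{e_1} \neq \emptyset$), with $\VA(\LetTerm_1, \mathcal V \cap \FV{\LetTerm_1}) = \LetG{\vec v' = e'}{\LetTerm'}$, I apply Def.~\ref{def:factor_of_let_term} to the gathered let-in obtained in Def.~\ref{definition:VA}, use the first auxiliary identity above to split the combined factor into $\Fact{\vec v_1 = e_1} \FProd \Fact{\vec v' = e'}$, apply the induction hypothesis to $\VA(\LetTerm_1, \mathcal V \cap \FV{\LetTerm_1})$, and then pull $\Fact{\vec v_1 = e_1}$ into the $\BigFProd\Facts{\LetTerm_1}_{\mathcal V \cap \FV{\LetTerm_1}}$ block via associativity and commutativity of $\FProd$. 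The key observation is that a factor in $\Facts{\LetTerm}_{\mathcal V}$ either comes from $\Fact{\vec v_1 = e_1}$ itself (which contains $\mathcal V \cap \FV{e_1}$) or from $\Facts{\LetTerm_1}_{\mathcal V \cap \FV{\LetTerm_1}}$, so combining them yields exactly $\BigFProd\Facts{\LetTerm}_{\mathcal V}$. Case 4 is analogous, but now with $(\vec v_1)^a = f$ and $\VA$ invoked on $(\mathcal V \cap \FV{\LetTerm_1}) \cup \{f\}$; I use the second auxiliary identity and must additionally commute $\FSum{f}{\cdot}$ past the sum-out on a possible arrow variable in $\vec v'$, again leaning on Prop.~\ref{prop:factor_prod_ass_distr}.

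The main obstacle is the bookkeeping in Case 4: several arrow variables can be in play (both $f$ from $\vec v_1$ and any arrow variable in $\vec v'$), and each must be correctly summed-out in the right order. Linearity of the typing system (Remark~\ref{rk:typing_as_MLL}) is essential here, since it guarantees that each arrow variable appears in exactly one factor of $\Facts{\LetTerm_1}$, so the summings-out commute cleanly and do not interfere across the product. Once this is tracked carefully, the computation reduces to the same chain of manipulations (associativity, commutativity, and the distribution of $\FSum{}{}$ over $\FProd$ from Prop.~\ref{prop:factor_prod_ass_distr}) that was sketched at the end of the original proof fragment.
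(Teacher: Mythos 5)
Your proposal follows essentially the same route as the paper's proof: induction on $\LetTerm$ over the cases of Def.~\ref{definition:VA}, with Case~2 handled via Lemma~\ref{lemma:SD_soundness} and a three-way subcase split on the location of the arrow variable of $\vec v_1$, and Cases~3--4 handled via exactly the two factor identities you record (the paper states the first as its Equation for $\Fact{\Seq{\vec v_1,\vec v'}=\dots}$ and the second with the extra $\FSum{f}{\cdot}$), followed by the same chain of manipulations from Prop.~\ref{prop:factor_prod_ass_distr}. The plan is correct and matches the paper's argument, including the reliance on linearity of arrow variables to make the summings-out commute.
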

		\begin{proof}
		If $\mathcal V=\emptyset$, then $\VA(\LetTerm, \mathcal V)=\LetTerm$. Otherwise, $\Facts{\VA(\LetTerm, \mathcal V)} = 	\left\{\BigFProd\Facts{\LetTerm}_{\mathcal V}\right\}\uplus\Facts{\LetTerm}_{\neg \mathcal V}$ is proven by induction on $\LetTerm$, splitting according to the cases of Def.~\ref{definition:VA}. \Myqed
		\end{proof}
		}

	\RED{\begin{proof}[Proof of Prop.~\ref{prop:rewriting_VE_well-typed}]
			By induction on $\LetTerm$, inspecting the cases of Def.~\ref{def:VE}.
			Case 1 consists in remarking that $\LetTerm\Red[\Elimin_x]\VEL(\LetTerm, x)$. Case 2 applies Lemma~\ref{lemma:VA_rewriting} for obtaining a rewriting sequence to $\LetG{\vec v_1=e_1; \vec v'=e'}{\LetTerm'}$ and then it adds a $\Mult$ (Case 2(a)) or $\Swap_3$ (Case 2(b)) step and the final $\Elimin_x$ step. Case 3 uses the inductive hypothesis and then Lemma~\ref{lemma:SD_soundness}.\Myqed
	\end{proof}}

\subsection*{Proof of Theorem ~\ref{th:soundnessVE_Let}.}
\RED{\begin{proof}[Proof of Theorem~\ref{th:soundnessVE_Let}]
		The proof is by induction on $\LetTerm$ and splits depending on the cases of Def.~\ref{def:VE}. By taking the notation of that definition, we consider only the base case 2(b) and the induction case 3, the case 1 being immediate and the case 2(a) being  an easier variant of 2(b).

		\paragraph{Case 2(b)  of Def.~\ref{def:VE}.} If $\LetTerm$ is $\LetG{\vec v_1=e_1}{\LetTerm_1}$, with $\vec v_1^a= f$. Let $\vec v_1^+=\vec y$ and let $\VA(\LetTerm_1, \{x, f\})$ be denoted by $\LetG{\vec v'=e'}{\LetTerm'}$. Lemma~\ref{lemma:VA_semantic} gives:
		\[
		\Facts{\LetG{\vec v'=e'}{\LetTerm'}}
		=\left\{\BigFProd\Facts{\LetTerm_1}_{\{x, f\}}\right\}\uplus\Facts{\LetTerm_1}_{\neg \{x, f\}}.
		\]
		Suppose also that $(\vec v')^a= g$ for an arrow variable $g$ (which is not an output by hypothesis), the case $\vec v'$ positive being an easier variant.  By Lemma~\ref{lemma:VA_rewriting} (item 1) $\{x,f\}\cap\FV{\LetTerm'}=\emptyset$, so by Lemma~\ref{lemma:splitting_facts}\todom{R2: reference to appendix making the proof slightly non self contained here.}:
		\begin{align*}
			\FSum{g}{\Fact{\vec v'=e'}\FProd\Facts{\LetTerm'}_{g}}&=\BigFProd\Facts{\LetTerm_1}_{\{x, f\}},\\
			\Facts{\LetTerm'}_{\neg g}&=\Facts{\LetTerm_1}_{\neg \{x, f\}}. 	
		\end{align*}
		
		Notice that
		$
		\Fact{\Seq{\vec y,\vec v'}=\LetG{\vec v_1=e_1}{\Seq{\vec y, e'}}}=
		\FSum{x,f}{\Fact{\vec v_1=e_1}\FProd\Fact{\vec v'=e'}}
		$, so that:
		\begin{align}
			&\Facts{\VEL(\LetTerm, x)}
			\\
			\label{eq:ve_sound1}
			&=\Facts{
				\LetG{
					\Seq{\vec y,\vec v'}=\LetG{\vec v_1=e_1}{\Seq{\vec y, e'}}
				}{\LetTerm'}
			}
			\\
			\label{eq:ve_sound3}
			&=
			\{\FSum{g}{
				\FSum{x,f}{\Fact{\vec v_1=e_1}\FProd\Fact{\vec v'=e'}}
				\FProd
				\Fact{\LetTerm'}_{g}
			}\}
			\uplus
			\Facts{\LetTerm'}_{\neg g}	
			\\
			\label{eq:ve_sound4}
			&=
			\{\FSum{x,f}{
				\Fact{\vec v_1=e_1}
				\FProd
				\FSum{g}{
					\Fact{\vec v'=e'}
					\FProd
					\Fact{\LetTerm'}_{g}
				}
			}\}
			\uplus
			\Facts{\LetTerm'}_{\neg g}	
			\\
			\label{eq:ve_sound5}
			&=
			\FSum{x,f}{
				\Fact{\vec v_1=e_1}
				\FProd
				\BigFProd\Facts{\LetTerm_1}_{\{x, f\}}
			}
			\uplus
			\Facts{\LetTerm_1}_{\neg \{x, f\}}
			\\
			\label{eq:ve_sound6}
			&=
			\{\FSum{x}{
				\FSum{f}{
					\Fact{\vec v_1=e_1}
					\FProd
					\BigFProd\Facts{\LetTerm_1}_{f}
				}
				\FProd
				(\BigFProd\Facts{\LetTerm_1}_{\neg f})_{x}
			}\}
			\uplus
			\Facts{\LetTerm_1}_{\neg x}
			\\
			\label{eq:ve_sound7}
			&=
			\{\FSum{x}{
				\FSum{f}{
					\Fact{\vec v_1=e_1}
					\FProd
					\BigFProd\Facts{\LetTerm_1}_{f}
				}
				\FProd
				(\BigFProd\Facts{\LetTerm}_{\neg f})_{x}
			}\}
			\uplus
			\Facts{\LetTerm}_{\neg x}
			\\
			\label{eq:ve_sound8}
			&=
			\{\FSum{x}{
				\BigFProd\Facts{\LetTerm}_{x}
			}\}
			\uplus
			\Facts{\LetTerm}_{\neg x}
			\\
			\label{eq:ve_sound9}
			&=
			\VEF(\Facts{\LetTerm}, x)
		\end{align}
		Line~\eqref{eq:ve_sound3} uses the hypothesis that the arrow variable $g$ is in $\FactVar{\FSum{x,f}{\Fact{\vec v_1=e_1}\FProd\Fact{\vec v'=e'}}}$. 
		Line~\eqref{eq:ve_sound4} uses the properties given in Prop.~\ref{prop:factor_prod_ass_distr}.
		Line~\eqref{eq:ve_sound5} applies the equalities achieved above by using Lemma~\ref{lemma:splitting_facts}.
		Line~\eqref{eq:ve_sound6} uses again Prop.~\ref{prop:factor_prod_ass_distr} to split the summing out of $x,f$, to decompose 			$\BigFProd\Facts{\LetTerm_1}_{x,f}$ into the factor containing $f$ and the ones which do not, and to restrict the summing out of $f$ to the factor having this variable. 
		Line~\eqref{eq:ve_sound7}  replaces $\Facts{\LetTerm_1}_{\neg f}$ and $\Facts{\LetTerm_1}_{\neg x}$ with, respectively, $\Facts{\LetTerm}_{\neg f}$ and $\Facts{\LetTerm}_{}\neg x$, as they are the same sets (recall $f, x$ do appear in $\vec v_1$). 
		Line~\eqref{eq:ve_sound8} and Line~\eqref{eq:ve_sound9} follow easily from the definitions.

		\paragraph{Case 3 of Def.~\ref{def:VE}.} Let us consider the inductive case. Let denote $\LetTerm$ by $\LetG{\vec v_1=e_1}{\LetTerm_1}$ and suppose that $\VEL(\LetTerm, x)=\SD(\LetG{\vec v_1=e_1}{\VEL(\LetTerm_1, x))}$. We then have:
		\begin{align}
			\label{eq:ve_soundInd1}
			\Facts{\VEL(\LetTerm, x)}&=\Facts{\SD(\LetG{\vec v_1=e_1}{\VEL(\LetTerm_1, x))}}
			\\
			\label{eq:ve_soundInd2}
			&=\Facts{\Let{\vec v_1}{e_1}{\VEL(\LetTerm_1, x)}}
			\\
			\label{eq:ve_soundInd3}
			&=\FComp{\Fact{\vec v_1=e_1}}{\Facts{\VEL(\LetTerm_1, x)}}
			\\
			\label{eq:ve_soundInd4}
			&=\FComp{\Fact{\vec v_1=e_1}}{\VEF(\Facts{\LetTerm_1}, x)}
			\\
			\label{eq:ve_soundInd5}
			&=\VEF(\FComp{\Fact{\vec v_1=e_1}}{\Facts{\LetTerm_1}}, x)
			\\
			\label{eq:ve_soundInd6}
			&=\VEF(\Facts{\LetTerm},x)
		\end{align}
		where\todom{R2: the "::" notation doesn't seem to work as an abbreviation of the different cases of definition 3.10.  I think it would be better to spell out the separate cases, or clarify in what sense the reasoning with the abstract operation :: can be seen as subsuming the possible cases.} $\FComp$ is one of the cases of Def.~\ref{def:factor_of_let_term}, depending on whether the first definition contains an arrow variable or not.  Line~\eqref{eq:ve_soundInd2} uses Lemma~\ref{lemma:SD_soundness}, Line~\ref{eq:ve_soundInd3} applies the properties stated in Prop.~\ref{prop:factor_prod_ass_distr}; Line~\ref{eq:ve_soundInd4} follows from the inductive hypothesis and we build  $\VEF(\Facts{\LetTerm},x)$ by using again Prop.~\ref{prop:factor_prod_ass_distr}.\Myqed
	\end{proof}
}


\subsection*{Proofs for section~\ref{subsect:complexity}: Complexity Analysis}


\begin{lemma}\label{lemma:fam_vars}
	Given $\LetTerm\Def\LetG{\vec v_1=e_1}{\LetTerm_1}$, with $\vec x$ be the sequence (possibly empty) of the positive variables in $\vec v_1$, then, for any set of variables $\mathcal V$, we have that:
	\begin{equation*}
		\left(\FactVar{\Facts{\LetTerm_1}_{\mathcal V}}\setminus\FV{\LetTerm_1}\right)
		\uplus
		\left(\FactVar{\Facts{\LetTerm}_{\mathcal V}}\cap\FV{\vec v_1}\right)
		\subseteq
		\FactVar{\Facts{\LetTerm}_{\mathcal V}}\setminus\FV{\LetTerm}\,.
	\end{equation*}
\end{lemma}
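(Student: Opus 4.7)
The plan is to decompose the statement into three sub-claims: (i) that the left-hand side is a genuine disjoint union, so that $\uplus$ is meaningful; (ii) that $\FactVar{\Facts{\LetTerm}_{\mathcal V}}\cap\FV{\vec v_1}\subseteq\FactVar{\Facts{\LetTerm}_{\mathcal V}}\setminus\FV{\LetTerm}$; and (iii) the main technical inclusion $\FactVar{\Facts{\LetTerm_1}_{\mathcal V}}\setminus\FV{\LetTerm_1}\subseteq\FactVar{\Facts{\LetTerm}_{\mathcal V}}\setminus\FV{\LetTerm}$.

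Points (i) and (ii) dispatch quickly via the renaming convention on bound variables. Since the variables of $\vec v_1$ are bound by the outer let-in, we get $\FV{\vec v_1}\cap\FV{\LetTerm}=\emptyset$, which proves (ii). For (i), I invoke Lemma~\ref{lemma:facts_variables} to see that $\FactVar{\Facts{\LetTerm_1}}$ is contained in $\FV{\LetTerm_1}\cup\FV[a]{\vec w}\cup\bigcup_j\FV[+]{\vec v_j}$, where $\vec v_j$ ranges over the later definitions of $\LetTerm_1$ and $\vec w$ denotes its output; each of these three pieces is disjoint from $\FV{\vec v_1}$ by the renaming convention, so the first summand of the left-hand side is disjoint from the second.

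For (iii), I case-split on the two clauses of Definition~\ref{def:factor_of_let_term}. In the easy case, where $\vec v_1$ carries no arrow variable bound inside $\LetTerm_1$, we have $\Facts{\LetTerm}=\{\Fact{\vec v_1=e_1}\}\uplus\Facts{\LetTerm_1}$, so $\Facts{\LetTerm_1}_{\mathcal V}\subseteq\Facts{\LetTerm}_{\mathcal V}$ and the inclusion of factor-variable sets is immediate. In the other case, $\vec v_1^a=f\notin\FV{\vec w}$, and linearity of the typing system ensures that $\Facts{\LetTerm_1}_f$ is a singleton $\{\FactA_f\}$ and that $f\in\FV{\LetTerm_1}$. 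Then $\Facts{\LetTerm}=\{\FactA'\}\uplus\Facts{\LetTerm_1}_{\neg f}$ with $\FactA'=\FSum{f}{\Fact{\vec v_1=e_1}\FProd\FactA_f}$. A variable $v$ of the first summand that lies in some factor of $\Facts{\LetTerm_1}_{\neg f}\cap\Facts{\LetTerm_1}_{\mathcal V}$ is again covered by a factor of $\Facts{\LetTerm}_{\mathcal V}$; otherwise $v\in\FactVar{\FactA_f}$ with $\FactA_f\in\Facts{\LetTerm_1}_{\mathcal V}$, and since $v\neq f$ (as $f\in\FV{\LetTerm_1}$ while $v\notin\FV{\LetTerm_1}$) we get $v\in\FactVar{\FactA'}$.

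To close this sub-case I must also verify $\FactA'\in\Facts{\LetTerm}_{\mathcal V}$, which needs a witness $u\in\mathcal V\cap\FactVar{\FactA_f}$ distinct from $f$ (so that $u$ survives the summing-out of $f$). This step is the main obstacle: the non-$f$ witness is guaranteed in the intended use of the lemma inside Proposition~\ref{prop:size_VE}, where $\mathcal V$ consists of the positive variable being eliminated and hence cannot contain the arrow variable $f$. Finally, $v\notin\FV{\LetTerm}$ in every subcase follows from $v\notin\FV{\LetTerm_1}$ combined with the observation that, by Lemma~\ref{lemma:facts_variables}, such a $v$ is either an output arrow variable bound inside $\LetTerm_1$—disjoint from $\FV{e_1}$ by the linearity side-condition on the typing of $\mathtt{let}$—or a positive variable bound by a later definition of $\LetTerm_1$, again disjoint from $\FV{e_1}\cup\FV{\vec v_1}$ by the renaming convention.
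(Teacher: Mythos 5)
Your proof follows the same overall route as the paper's: disjointness of the two summands via the renaming convention on bound variables, the easy inclusion $\FactVar{\Facts{\LetTerm}_{\mathcal V}}\cap\FV{\vec v_1}\subseteq\FactVar{\Facts{\LetTerm}_{\mathcal V}}\setminus\FV{\LetTerm}$ from $\FV{\vec v_1}\cap\FV{\LetTerm}=\emptyset$, and then the main inclusion. Where you differ is that you unfold the main inclusion into an explicit case analysis on Def.~\ref{def:factor_of_let_term}, whereas the paper disposes of it in one sentence ("the only possible variable pulled out is the arrow variable $f$"). Your extra care pays off: it surfaces a real subtlety that the paper's argument silently assumes away. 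When $\vec v_1^a=f$ is merged and summed out, it is not only the variable $f$ that can disappear from $\FactVar{\Facts{\LetTerm}_{\mathcal V}}$ --- the \emph{entire} factor $\FactA_f$ can drop out of the $\mathcal V$-selection if $f$ was its only witness in $\mathcal V$. Concretely, for $\mathcal V=\{f\}$ one gets $\Facts{\LetTerm}_{\mathcal V}=\emptyset$ while $\FactVar{\Facts{\LetTerm_1}_{\mathcal V}}\setminus\FV{\LetTerm_1}$ may be nonempty, so the inclusion as literally quantified ("for any set of variables $\mathcal V$") can fail. Your fix --- requiring a witness $u\in\mathcal V\cap\FactVar{\FactA_f}$ with $u\neq f$, which holds in the actual invocations of the lemma --- is the right diagnosis; strictly speaking it means you (and the paper) prove the lemma under an unstated side condition on $\mathcal V$ rather than for arbitrary $\mathcal V$, but since the paper's own proof has exactly the same unacknowledged gap, your version is if anything the more honest one. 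The remaining steps (uniqueness of $\FactA_f$ by linearity, $f\in\FV{\LetTerm_1}$ from the typing side condition, and $v\notin\FV{\LetTerm}$ via renaming) are all correctly justified.
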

\begin{proof}
	First, notice that $\left(\FactVar{\Facts{\LetTerm_1}_{\mathcal V}}\setminus\FV{\LetTerm_1}\right)$ and 
	$\left(\FactVar{\Facts{\LetTerm}_{\mathcal V}}\cap\FV{\vec v_1}\right)$ are disjoint, as, by renaming, we can always suppose that a bounded occurrences of a variable in $\LetTerm_1$ is distinct from any variable of $\vec v_1$. 
	
	Suppose $v\in \FactVar{\Facts{\LetTerm_1}_{\mathcal V}}\setminus\FV{\LetTerm_1}$, then we have $v\in \FactVar{\Facts{\LetTerm}_{\mathcal V}}$ as the only possible variable in $\FactVar{\Facts{\LetTerm_1}_{\mathcal V}}$ pulled out in $\FactVar{\Facts{\LetTerm}_{\mathcal V}}$ is an arrow variable $f$ in $\vec v_1$ that has a free occurrence in $\LetTerm_1$ (and so it is not in $\FactVar{\Facts{\LetTerm_1}_{\mathcal V}}\setminus\FV{\LetTerm_1}$). Moreover, suppose $v\in\FV{\LetTerm}$, since by hypothesis $v\notin\FV{\LetTerm_1}$, we deduce that $v\in\FV{e_1}$, \todom{hypothesis on renaming}which means that $v$ has both a free occurrence in $e_1$ as well as a bound occurrence in $\LetTerm_1$, a case that can be always avoided by renaming. We conclude $v\in \FactVar{\Facts{\LetTerm}_{\mathcal V}}\setminus\FV{\LetTerm}$.
	
	Suppose $v\in\FactVar{\Facts{\LetTerm}_{\mathcal V}}\cap\FV{\vec v_1}$, then clearly $v\in \FactVar{\Facts{\LetTerm}_{\mathcal V}}\setminus\FV{\LetTerm}$ as $\FV{\vec v_1}$ and $\FV{\LetTerm}$ are disjoint. \Myqed
\end{proof}

\begin{lemma}\label{lemma:size_SD}
Given a let-term $\LetTerm\Def\LetG{\vec v_1=e_1;\vec v_2=e_2}{\LetTerm'}$ with at least two definitions. Let $\vec x$ the set of positive variables in $\FV{\vec v_1}\cap\FV{e_2}$ or $\vec v_1^+$ in case $\vec v_1^a=f$ and $f\in\FV{e_2}$. We have that:
\begin{align*}
	\Size(\SD(\LetTerm))&\leq
		2 + 2\Size(\vec x) + \Size(\LetTerm)\,.
\end{align*}  
\end{lemma}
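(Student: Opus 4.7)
The proof proceeds by a direct case analysis on the definition of $\SD(\LetTerm)$ given in Def.~\ref{definition:SwapDef}, using the inductive clauses defining $\Size$. First I would unfold the size of the original let-term: since $\LetTerm = \LetG{\vec v_1 = e_1; \vec v_2 = e_2}{\LetTerm'}$, applying the let-in clause for $\Size$ twice yields
\[
\Size(\LetTerm) = \Size(\vec v_1) + \Size(e_1) + \Size(\vec v_2) + \Size(e_2) + \Size(\LetTerm')\,.
\]
Each of the three cases defining $\SD(\LetTerm)$ then only rearranges (and in two cases duplicates) these same components, so it suffices to compare the rewritten expression with this baseline decomposition.

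In case~(1), where $\FV{\vec v_1} \cap \FV{e_2} = \emptyset$ and $\vec x$ is empty, the rewriting merely permutes the two definitions, so $\Size$ is preserved and the bound $\Size(\LetTerm) \leq 2 + 2\Size(\vec x) + \Size(\LetTerm)$ holds trivially. In case~(2), where $\vec x$ is a non-empty positive sequence, the rewriting introduces a fresh definition $g = \Abs{\vec x}{e_2}$ and replaces the right-hand side of $\vec v_2$ by $\App g{\vec x}$. Unfolding $\Size$ on the new constructs contributes $\Size(g) + \Size(\Abs{\vec x}{e_2}) = 1 + \Size(\vec x) + \Size(e_2)$ for the first definition and $\Size(\App g{\vec x}) = 1 + \Size(\vec x)$ for the third; after subtracting the $\Size(e_2)$ that was already present in $\LetTerm$, the net overhead is exactly $2 + 2\Size(\vec x)$, matching the claim. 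In case~(3), where $\vec v_1^a = f \in \FV{e_2}$ and $\vec x = \vec v_1^+$, the rewriting duplicates the pattern $\vec v_1^+$ once as the left component of the outer pattern $\Seq{\vec v_1^+, \vec v_2}$ and once inside the tuple $\Seq{\vec v_1^+, e_2}$, producing an overhead of $2\Size(\vec x)$, which is below the announced bound. The degenerate subcase where $\vec v_1^+$ is empty preserves $\Size$ exactly, again complying with the bound since $\Size(\vec x) = 0$.

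There is no real mathematical difficulty here: the main work is a careful bookkeeping of each atom (fresh variables, patterns, applications, tuples) introduced by $\SD$, and handling correctly the convention that $\Seq{\vec v^+, e}$ collapses to $e$ when $\vec v^+$ is empty (so that the degenerate subcase of case~(3) does not contribute any duplication). The lemma is in essence a mechanical check that the constants introduced by $\SD$ stay within the stated linear bound in $\Size(\vec x)$.
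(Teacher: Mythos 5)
Your proof is correct and follows exactly the route the paper intends: the paper's own proof is a one-line "by inspecting the cases of the definition of $\SD$", and your case-by-case bookkeeping (identity in case 1, overhead $2+2\Size(\vec x)$ in case 2, overhead $2\Size(\vec v_1^+)$ in case 3, and the degenerate subcase) is precisely that inspection carried out in full.
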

\begin{proof}
By inspecting the cases of Def.~\ref{definition:SwapDef}.\Myqed
\end{proof}

\begin{lemma}\label{lemma:size_VA}
Given a positive let-term $\LetTerm\Def\LetG{\vec v_1=e_1}{\LetTerm_1}$ with at least one definition and given a subset $\mathcal V$ of $\FV{\LetTerm}$ disjoint from the output of  $\LetTerm$. We have:
\[
	\Size(\VA(\LetTerm, \mathcal V))
	\leq
	\Size(\LetTerm) 
	+ 4 \times \Size(\FactVar{\Facts{\LetTerm}_{\mathcal V}}\setminus\FV{\LetTerm})\,.
\]
\end{lemma}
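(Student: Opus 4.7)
The plan is to proceed by structural induction on $\LetTerm$, splitting on the four cases of Def.~\ref{definition:VA}. The two ingredients we will invoke at each step are Lemma~\ref{lemma:size_SD}, to control the overhead introduced by $\SD$, and Lemma~\ref{lemma:fam_vars}, which lets us transfer a bound of the form ``cardinality of $\FactVar{\Facts{\cdot}_{\cdot}}\setminus\FV{\cdot}$'' from the smaller let-term $\LetTerm_1$ to the whole $\LetTerm$ each time we peel off the outermost definition $\vec v_1 = e_1$.

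Case~1 ($\mathcal V=\emptyset$) is immediate since $\VA(\LetTerm,\mathcal V)=\LetTerm$. Case~2 ($\mathcal V\cap\FV{e_1}=\emptyset$) is handled by first applying the induction hypothesis to $\VA(\LetTerm_1,\mathcal V)$ and then wrapping with $\Let{\vec v_1}{e_1}{\cdot}$ and applying $\SD$. Lemma~\ref{lemma:size_SD} adds $2+2\Size(\vec x)$, where $\vec x$ is the overlap between $\vec v_1$ and the first gathered definition of $\VA(\LetTerm_1,\mathcal V)$. Because $\mathcal V\cap\FV{e_1}=\emptyset$, every variable of $\vec x$ must in fact appear in the (nonempty) factor obtained from $\VA$, and hence lies in $\FactVar{\Facts{\LetTerm}_{\mathcal V}}\cap\FV{\vec v_1}$. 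Lemma~\ref{lemma:fam_vars} therefore absorbs both $2\Size(\vec x)$ and the recursive $4\Size(\cdot)$ IH-term inside the factor $4\Size(\FactVar{\Facts{\LetTerm}_{\mathcal V}}\setminus\FV{\LetTerm})$ on the right-hand side of the target inequality.

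Cases~3 and~4 ($\mathcal V\cap\FV{e_1}\neq\emptyset$) are handled uniformly: writing $\VA(\LetTerm_1,\mathcal V')=\LetG{\vec v'=e'}{\LetTerm'}$ for the appropriate $\mathcal V'$, the constructed term has the shape $\LetG{\Seq{\vec v_1^?,\vec v'}=\LetG{\vec v_1=e_1}{\Seq{\vec v_1^?,e'}}}{\LetTerm'}$, with $\vec v_1^?$ being $\vec v_1$ in case~3 and $\vec v_1^+$ in case~4. A direct size computation gives $\Size(\VA(\LetTerm,\mathcal V))\leq \Size(\LetTerm)+2\Size(\vec v_1)+\Size(\VA(\LetTerm_1,\mathcal V'))-\Size(\LetTerm_1)$, so by the IH we obtain $\Size(\VA(\LetTerm,\mathcal V))\leq \Size(\LetTerm)+2\Size(\vec v_1)+4\Size(\FactVar{\Facts{\LetTerm_1}_{\mathcal V'}}\setminus\FV{\LetTerm_1})$. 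The hypothesis $\mathcal V\cap\FV{e_1}\neq\emptyset$ ensures that $\Fact{\vec v_1=e_1}\in\Facts{\LetTerm}_{\mathcal V}$, so $\FV{\vec v_1}\subseteq\FactVar{\Facts{\LetTerm}_{\mathcal V}}\cap\FV{\vec v_1}$; Lemma~\ref{lemma:fam_vars} then yields $\Size(\FactVar{\Facts{\LetTerm_1}_{\mathcal V'}}\setminus\FV{\LetTerm_1})+\Size(\vec v_1)\leq \Size(\FactVar{\Facts{\LetTerm}_{\mathcal V}}\setminus\FV{\LetTerm})$, and multiplying by $4$ yields precisely the required bound, the constant $4$ being chosen so that $2\Size(\vec v_1)+4\Size(\cdot)\leq 4\Size(\vec v_1)+4\Size(\cdot)$.

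The main obstacle is the bookkeeping of the overlap $\vec x$ in case~2: one must carefully check that $\vec x$ is indeed a subset of $\FactVar{\Facts{\LetTerm}_{\mathcal V}}\cap\FV{\vec v_1}$, which requires tracking how the gathered definition produced by $\VA(\LetTerm_1,\mathcal V)$ inherits its free variables from $\Fam(\Facts{\LetTerm_1}_{\mathcal V})$ (a fact that should follow from the analysis of $\VA$ already used to prove Lemma~\ref{lemma:VA_semantic}). Once this identification is made, case~2 closes by the same $2\Size(\vec x)\leq 4\Size(\vec x)$ absorption as above. No additional combinatorial work is needed: in each case the slack between $2$ (from $\SD$ or pattern duplication) and $4$ (the constant appearing in the target bound) is exactly what Lemma~\ref{lemma:fam_vars} needs.
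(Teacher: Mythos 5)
Your proof is correct and follows essentially the same route as the paper's: structural induction over the four cases of Def.~\ref{definition:VA}, invoking Lemma~\ref{lemma:size_SD} for the $\SD$ overhead in case~2 and Lemma~\ref{lemma:fam_vars} to transfer the variable-count bound from $\LetTerm_1$ to $\LetTerm$, with the constant $4$ absorbing the $2$-sized overlap overhead. Your explicit tracking of the $2\Size(\vec v_1^?)$ duplication in cases~3--4 is, if anything, slightly more careful than the paper's own displayed computation there.
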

\begin{proof}
The proof is by induction on $\LetTerm\Def\LetG{\vec v_1=e_1}{\LetTerm_1}$, splitting into the cases of Definition~\ref{definition:VA}, from which we adopt the notation.
\begin{itemize}
\item Case 1. We have: $\Size(\VA(\LetTerm, \mathcal V))=\Size(\LetTerm)$. The above inequality is then immediate. 

\item Case 2, \ie $\mathcal V\cap\FV{e_1}=\emptyset$ and $\mathcal V\cap\FV{\LetTerm_1}\neq\emptyset$. Let $\VA(\LetTerm_1, \mathcal V)=\LetG{\vec v' = e'}{\LetTerm'}$ and let $\vec x$ be the positive variables in $\FV{\vec v_1}\cap\FV{e'}$. Notice that $\vec x\subseteq\FactVar{\Facts{\LetTerm}_{\mathcal V}\cap\FV{\vec v_1}}$, so Lemma~\ref{lemma:fam_vars} gives $\Size(\FactVar{\Facts{\LetTerm_1}_{\mathcal V}}\setminus\FV{\LetTerm_1})+\Size(\vec x)\leq\Size((\FactVar{\Facts{\LetTerm}_{\mathcal V}}\setminus\FV{\LetTerm}))$. By applying Lemma~\ref{lemma:size_SD}, we then have:
		\begin{align*}
		&\Size(\VA(\LetTerm, \mathcal V))\\
		&\leq 2 + 2\Size(\vec x) + \Size(\vec v_1) + \Size(e_1) + \Size(\VA(\LetTerm_1, \mathcal V))\\
		&\leq 2 + 2\Size(\vec x) + \Size(\vec v_1) + \Size(e_1) + \Size(\LetTerm_1) 
	+ 4 \times \Size(\FactVar{\Facts{\LetTerm_1}_{\mathcal V}}\setminus\FV{\LetTerm_1})\\
		&\leq \Size(\LetTerm) 
	+ 2 + 2\Size(\vec x) +  4 \times \Size(\FactVar{\Facts{\LetTerm_1}_{\mathcal V}}\setminus\FV{\LetTerm_1})\\
			&\leq \Size(\LetTerm) 
	+ 2 + 2\Size(\vec x) +  4 \times (\Size(\FactVar{\Facts{\LetTerm}_{\mathcal V}}\setminus\FV{\LetTerm}) - \Size(\vec x))\\
			&\leq \Size(\LetTerm) +  4 \times \Size(\FactVar{\Facts{\LetTerm}_{\mathcal V}}\setminus\FV{\LetTerm}).
		\end{align*}

	\item Case 3, \ie $\mathcal V\cap\FV{e_1}\neq \emptyset$ and $\vec v_1=\vec x$ is positive. Again we can apply Lemma~\ref{lemma:fam_vars} and getting  $\Size(\FactVar{\Facts{\LetTerm_1}_{\mathcal V}}\setminus\FV{\LetTerm_1})+\Size(\vec x)\leq\Size((\FactVar{\Facts{\LetTerm}_{\mathcal V}}\setminus\FV{\LetTerm}))$. So we have:
		\begin{align*}
		&\Size(\VA(\LetTerm, \mathcal V))\\
		&= \Size(\vec x) + \Size(\vec v') + \Size(e_1) + \Size(e') + \Size(\LetTerm')\\
		&= \Size(\vec x) +  \Size(e_1)  + \Size(\VA(\LetTerm_1, \mathcal V))\\
		&= \Size(\vec x) +  \Size(e_1)  + \Size(\LetTerm_1) + 4 \times \Size(\FactVar{\Facts{\LetTerm_1}_{\mathcal V}}\setminus\FV{\LetTerm_1})\\
		&\leq \Size(\LetTerm) + 4 \times \Size(\FactVar{\Facts{\LetTerm_1}_{\mathcal V}}\setminus\FV{\LetTerm_1})\\
		&\leq  \Size(\LetTerm) +  4 \times \Size(\FactVar{\Facts{\LetTerm}_{\mathcal V}}\setminus\FV{\LetTerm}).
		\end{align*}
		
	\item Case 4, \ie $\mathcal V\cap\FV{e_1}\neq \emptyset$ and $\vec v_1=\Seq{\vec x,f}$. This case is analogous to the previous one. 
\Myqed
\end{itemize}
\end{proof}

\begin{proof}[Proof of Proposition~\ref{prop:size_VE}]
By induction on  $\LetTerm$, splitting in the cases of  Def.~\ref{def:VE}, of which we adopt the notation. Case 1 is immediate. Case 2(a) is an easier version of Case 2(b) and we omit to detail it.
\begin{itemize}
\item Case 2(b). First, notice that,
 $\Size(\FactVar{\Facts{\LetTerm_1}_{\{x,f\}}}\setminus\FV{\LetTerm_1})+\Size(\vec y)\leq\Size(\FactVar{\Facts{\LetTerm}_x}\setminus\FV{\LetTerm})$. By using Lemma~\ref{lemma:size_VA} we can then argue:
\begin{align*}
	&\Size(\VEL(\LetTerm, x))\\
	&= 2\Size(\vec y) + \Size(\vec v') + \Size(\vec v_1) + \Size(e_1) + \Size(e') + \Size(\LetTerm')\\
	&\leq 2\Size(\vec y) + \Size(\vec v_1) + \Size(e_1) + \Size(\VA(\LetTerm_1,\{x,f\}))\\
	&\leq 2\Size(\vec y) + \Size(\vec v_1) + \Size(e_1) + \Size(\LetTerm_1) + 4 \Size(\FactVar{\Facts{\LetTerm_1}_{\{x,f\}}}\setminus\FV{\LetTerm_1})\\
	&\leq 2\Size(\vec y) + \Size(\LetTerm) + 4 \Size(\FactVar{\Facts{\LetTerm}_x}\setminus\FV{\LetTerm})-4\Size(\vec y)\\
	&\leq \Size(\LetTerm) + 4 \Size(\FactVar{\Facts{\LetTerm}_x}\setminus\FV{\LetTerm})
\end{align*}
\item Case 3. Let $\VEL(\LetTerm_1,x)\Def\LetG{\vec v'=e'}{\LetTerm'}$ and let $\vec y$ be the sequence of the positive variables in $\vec v_1\cap\FV{e'}$. Again, 
we notice that
$\Size(\FactVar{\Facts{\LetTerm_1}_{\{x\}}}\setminus\FV{\LetTerm_1})+\Size(\vec y)\leq\Size(\FactVar{\Facts{\LetTerm}_x}\setminus\FV{\LetTerm})$. 
By applying Lemma~\ref{lemma:size_SD},  we have:
\begin{align*}
	&\Size(\VEL(\LetTerm, x))\\
	&\leq 2 + 2\Size(\vec y) + \Size(\vec v_1) + \Size(e_1) + \Size(\VEL(\LetTerm_1,x))\\
	&\leq 2 + 2\Size(\vec y) + \Size(\vec v_1) + \Size(e_1) +  \Size(\LetTerm_1) + 4 \Size(\FactVar{\Facts{\LetTerm_1}_x}\setminus\FV{\LetTerm_1})\\
	&\leq 2 + 2\Size(\vec y) +  \Size(\LetTerm) + 4 \Size(\FactVar{\Facts{\LetTerm}_x}\setminus\FV{\LetTerm})-4\Size(\vec y)\\
	&\leq \Size(\LetTerm) + 4 \Size(\FactVar{\Facts{\LetTerm}_x}\setminus\FV{\LetTerm}).
\end{align*}
\end{itemize}
\Myqed
\end{proof}

\end{document}